\documentclass[11pt,notitlepage,tightenlines,nofootinbib,superscriptaddress,aps,pra]{revtex4-2}
\pdfoutput=1

\usepackage{newpxtext,newpxmath}

\usepackage[latin1]{inputenc}
\usepackage{amsthm}
\usepackage{amssymb}
\usepackage{amsmath}
\usepackage{bbold}
\usepackage{bbm}
\usepackage{braket}
\usepackage{dsfont}
\usepackage{mathdots}
\usepackage{mathtools}
\usepackage{enumerate}
\usepackage[shortlabels]{enumitem}
\usepackage{csquotes}
\usepackage{stmaryrd}
\usepackage[cal=boondox]{mathalfa}
\usepackage{graphicx}
\usepackage{stackengine}
\usepackage{scalerel}
\usepackage{tensor}       %\tensor[_n]{\braket{j|\psi}}{_{1\ldots n}}
\usepackage[dvipsnames,svgnames]{xcolor}
\usepackage{array}
\usepackage{makecell}
\newcolumntype{x}[1]{>{\centering\arraybackslash}p{#1}}
\usepackage{tikz}
\usepackage{pgfplots}
\usetikzlibrary{shapes.geometric, shapes.misc, positioning, arrows, arrows.meta, decorations.pathreplacing, decorations.pathmorphing, patterns, angles, quotes, calc}
\usepackage{booktabs}
\usepackage{xfrac}
\usepackage{siunitx}
\usepackage{centernot}
\usepackage{comment}
\usepackage{chngcntr}

\usepackage[pdftex]{hyperref}
\hypersetup{
    bookmarksnumbered=true, % If Acrobat bookmarks are requested, include section numbers
    breaklinks=true,
    unicode=false, % non-Latin characters in Acrobat bookmarks
    pdfstartview={FitH}, % fits the width of the page to the window
    pdfnewwindow=true, % links in new window
    colorlinks=true, % false: boxed links; true: colored links
    allcolors=MidnightBlue!70!black!70!TealBlue
}

\newtheorem{thm}{Theorem}
\newtheorem*{thm*}{Theorem}
\newtheorem{prop}[thm]{Proposition}
\newtheorem*{prop*}{Proposition}
\newtheorem{lemma}[thm]{Lemma}
\newtheorem*{lemma*}{Lemma}
\newtheorem{cor}[thm]{Corollary}
\newtheorem*{cor*}{Corollary}

\newtheorem*{cj*}{Conjecture}
\newtheorem{Def}[thm]{Definition}
\newtheorem*{Def*}{Definition}

\newtheorem*{question*}{Question}

\newtheorem*{problem*}{Problem}

% The following is necessary to make "\begin{thm}[{\cite{X}}]" print "Theorem [X]." instead of "Theorem ([X)]." It works also for prop, lemma, cor, and so on.
\makeatletter
\def\thmhead@plain#1#2#3{%
  \thmname{#1}\thmnumber{\@ifnotempty{#1}{ }\@upn{#2}}%
  \thmnote{ {\the\thm@notefont#3}}}
\let\thmhead\thmhead@plain
\makeatother

\theoremstyle{definition}
\newtheorem{rem}[thm]{Remark}
\newtheorem*{rem*}{Remark}

\newcommand{\bb}{\begin{equation}\begin{aligned}\hspace{0pt}}
\newcommand{\bbb}{\begin{equation*}\begin{aligned}}
\newcommand{\ee}{\end{aligned}\end{equation}}
\newcommand{\eee}{\end{aligned}\end{equation*}}
\newcommand\floor[1]{\lfloor#1\rfloor}
\newcommand\ceil[1]{\left\lceil#1\right\rceil}
\newcommand{\eqt}[1]{\stackrel{\mathclap{\mbox{\scriptsize#1}}}{=}}
\newcommand{\leqt}[1]{\stackrel{\mathclap{\mbox{\scriptsize#1}}}{\leq}}

\newcommand{\geqt}[1]{\stackrel{\mathclap{\mbox{\scriptsize#1}}}{\geq}}
\newcommand{\ketbra}[1]{\ket{#1}\!\!\bra{#1}}

\newcommand{\sumno}{\sum\nolimits}

\newcommand{\e}{\varepsilon}
\renewcommand{\epsilon}{\varepsilon}

\newcommand{\id}{\mathds{1}}
\newcommand{\R}{\mathds{R}}
\newcommand{\N}{\mathds{N}}

\newcommand{\C}{\mathds{C}}

\newcommand{\ve}{\varepsilon}

\newcommand{\SEP}{\pazocal{S}}
\newcommand{\PPT}{\pazocal{P\!P\!T}}

\DeclareMathOperator{\Tr}{Tr}

\DeclareMathOperator{\co}{conv}
\DeclareMathOperator{\cone}{cone}

\DeclareMathOperator{\Span}{span}
\DeclareMathAlphabet{\pazocal}{OMS}{zplm}{m}{n}

\DeclareMathOperator{\spec}{spec}

\DeclareMathOperator{\tr}{tr}

\newcommand{\HH}{\pazocal{H}}

\newcommand{\MM}{\pazocal{M}}
\newcommand{\D}{\pazocal{D}}

\newcommand{\XX}{\pazocal{X}}

\newcommand{\PP}{\pazocal{P}}
\newcommand{\FF}{\pazocal{F}}

\newcommand{\lsmatrix}{\left(\begin{smallmatrix}}
\newcommand{\rsmatrix}{\end{smallmatrix}\right)}

\newcommand{\deff}[1]{\textbf{\emph{#1}}}

\stackMath
\newcommand\xxrightarrow[2][]{\mathrel{%
  \setbox2=\hbox{\stackon{\scriptstyle#1}{\scriptstyle#2}}%
  \stackunder[5pt]{%
    \xrightarrow{\makebox[\dimexpr\wd2\relax]{$\scriptstyle#2$}}%
  }{%
   \scriptstyle#1\,%
  }%
}}

\newcommand{\tendsn}[1]{\xxrightarrow[\! n\rightarrow \infty\!]{#1}}

\newcommand{\ctends}[3]{\xxrightarrow[\raisebox{#3}{$\scriptstyle #2$}]{\raisebox{-0.7pt}{$\scriptstyle #1$}}}

\stackMath

\makeatletter
\newcommand*\rel@kern[1]{\kern#1\dimexpr\macc@kerna}
\newcommand*\widebar[1]{%
  \begingroup
  \def\mathaccent##1##2{%
    \rel@kern{0.8}%
    \overline{\rel@kern{-0.8}\macc@nucleus\rel@kern{0.2}}%
    \rel@kern{-0.2}%
  }%
  \macc@depth\@ne
  \let\math@bgroup\@empty \let\math@egroup\macc@set@skewchar
  \mathsurround\z@ \frozen@everymath{\mathgroup\macc@group\relax}%
  \macc@set@skewchar\relax
  \let\mathaccentV\macc@nested@a
  \macc@nested@a\relax111{#1}%
  \endgroup
}

\counterwithin*{equation}{part}
\counterwithin*{thm}{part}
\counterwithin*{figure}{part}

\tikzset{meter/.append style={draw, inner sep=10, rectangle, font=\vphantom{A}, minimum width=30, line width=.8, path picture={\draw[black] ([shift={(.1,.3)}]path picture bounding box.south west) to[bend left=50] ([shift={(-.1,.3)}]path picture bounding box.south east);\draw[black,-latex] ([shift={(0,.1)}]path picture bounding box.south) -- ([shift={(.3,-.1)}]path picture bounding box.north);}}}
\tikzset{roundnode/.append style={circle, draw=black, fill=gray!20, thick, minimum size=10mm}}
\tikzset{squarenode/.style={rectangle, draw=black, fill=none, thick, minimum size=10mm}}

\definecolor{Blues5seq1}{RGB}{239,243,255}
\definecolor{Blues5seq2}{RGB}{189,215,231}
\definecolor{Blues5seq3}{RGB}{107,174,214}
\definecolor{Blues5seq4}{RGB}{49,130,189}
\definecolor{Blues5seq5}{RGB}{8,81,156}

\definecolor{Greens5seq1}{RGB}{237,248,233}
\definecolor{Greens5seq2}{RGB}{186,228,179}
\definecolor{Greens5seq3}{RGB}{116,196,118}
\definecolor{Greens5seq4}{RGB}{49,163,84}
\definecolor{Greens5seq5}{RGB}{0,109,44}

\definecolor{Reds5seq1}{RGB}{254,229,217}
\definecolor{Reds5seq2}{RGB}{252,174,145}
\definecolor{Reds5seq3}{RGB}{251,106,74}
\definecolor{Reds5seq4}{RGB}{222,45,38}
\definecolor{Reds5seq5}{RGB}{165,15,21}

\allowdisplaybreaks

\let\nc\newcommand
\nc{\proj}[1]{\ket{#1}\!\bra{#1}}
\renewcommand{\bar}{\;\rule{0pt}{9.5pt}\right|\;}
\nc{\lset}{\left\{\left.}
\nc{\rset}{\right\}}
\nc{\lsetr}{\left\{\,}
\nc{\rsetr}{\right.\right\}}
\nc{\barr}{\;\rule{0pt}{9.5pt}\left|\;}

\nc{\Sanov}{\mathrm{Sanov}}
\nc{\Stein}{\mathrm{Stein}}

\makeatletter
\renewenvironment{proof}[1][\proofname]{\par
\pushQED{\qed}%
\normalfont \topsep6\p@\@plus6\p@\relax
\trivlist
\item\relax
{\bfseries  %<- font shape
#1\@addpunct{.}}\hspace\labelsep\ignorespaces %<- includes punctuation code
}{%
\popQED\endtrivlist\@endpefalse
}
\makeatother

\makeatletter
\renewcommand{\p@subsection}{\thesection.}% Default is \thesection\,
\renewcommand{\p@subsubsection}{\thesection.\thesubsection.}% Default is \thesection\,\thesubsection\,
\makeatother

\newcommand{\rel}[3]{#1\big(#2\,\big\|\,#3\big)}

\newcommand{\LL}{\pazocal{L}}
\renewcommand{\AA}{\pazocal{A}}
\newcommand{\BB}{\pazocal{B}}
\newcommand{\all}{\mathds{ALL}}

\newcommand{\sanov}{\mathrm{Sanov}}
\setlength{\parskip}{4pt}

%%%%%%%%%%%%%%%%%%%%%%%%%%%%%%%%%%%%%%%%%%%%%%%%%%%%%%%%%%%%%%%%%%%%%%%%%%%%%%%%

\begin{document}

%%%%%%%%%%%%%%%%%%%%%%%%%%%%%%%%%%%%%%%%%%%%%%%%%%%%%%%%%%%%%%%%%%%%%%%%%%%%%%

\title{Asymptotic quantification of entanglement with a single copy}

\author{Ludovico Lami}
\email{ludovico.lami@gmail.com}
\affiliation{Scuola Normale Superiore, Piazza dei Cavalieri 7, 56126 Pisa, Italy}
\affiliation{QuSoft, Science Park 123, 1098 XG Amsterdam, the Netherlands}
\affiliation{Korteweg--de Vries Institute for Mathematics, University of Amsterdam, Science Park 105-107, 1098 XG Amsterdam, the Netherlands}
\affiliation{Institute for Theoretical Physics, University of Amsterdam, Science Park 904, 1098 XH Amsterdam, the Netherlands}

\author{Mario Berta}
\email{berta@physik.rwth-aachen.de}
\affiliation{Institute for Quantum Information, RWTH Aachen University, Aachen, Germany}

\author{Bartosz Regula}
\email{bartosz.regula@gmail.com}
\affiliation{Mathematical Quantum Information RIKEN Hakubi Research Team, RIKEN Pioneering Research Institute (PRI) and RIKEN Center for Quantum Computing (RQC), Wako, Saitama 351-0198, Japan}

%%%%%%%%%%%%%%%%%%%%%%%%%%%%%%%%%%%%%%%%%%%%%%%%%%%%%%%%%%%%%%%%%%%%%%%%%%%%%%

\begin{abstract}
Despite the central importance of quantum entanglement in quantum technologies, the understanding of the optimal ways to exploit it is still beyond our reach, and even measuring entanglement in an operationally meaningful way is prohibitively difficult. Here we study two fundamental tasks in the processing of entanglement: entanglement testing, which is a quantum state discrimination problem concerned with entanglement detection in the many-copy regime, and entanglement distillation,
concerned with purifying entanglement from noisy entangled states. 
We introduce a way of benchmarking the performance of distillation that focuses on the best achievable error rather than its yield in the asymptotic limit. When the underlying set of operations used for entanglement distillation is the axiomatic class of non-entangling operations, we show that the two figures of merit for entanglement testing and distillation coincide. We solve both problems by proving a generalised quantum Sanov's theorem, enabling the exact evaluation of asymptotic error rates of composite quantum hypothesis testing. We show in particular that the asymptotic figure of merit is given by the reverse relative entropy of entanglement, a single-letter quantity that can be evaluated using only a single copy of a quantum state --- a distinct feature among measures of entanglement that quantify the optimal performance of information-theoretic tasks.
\end{abstract}

\maketitle

%%%%%%%%%%%%%%%%%%%%%%%%%%%%%%%%%%%%%%%%%%%%%%%%%%%%%%%%%%%%%%%%%%%%%%%%%%%%%%
%%%%%%%%%%%%%%%%%%%%%%%%%%%%%%%%%%%%%%%%%%%%%%%%%%%%%%%%%%%%%%%%%%%%%%%%%%%%%%
%%%%%%%%%%%%%%%%%%%%%%%%%%%%%%%%%%%%%%%%%%%%%%%%%%%%%%%%%%%%%%%%%%%%%%%%%%%%%%
%%%%%%%%%%%%%%%%%%%%%%%%%%%%%%%%%%%%%%%%%%%%%%%%%%%%%%%%%%%%%%%%%%%%%%%%%%%%%%

\let\oldaddcontentsline\addcontentsline
\renewcommand{\addcontentsline}[3]{}

\section{Introduction}

The phenomenon of quantum entanglement is one of the most important resources that underlie the potential of quantum technologies to provide advantages in information processing and computation~\cite{teleportation,dense-coding,Ekert91, RennerPhD,Brunner-review,Horodecki-review}. The understanding of how to process and use entanglement is crucial to its applications, but 
our knowledge of the optimal performance of operational tasks involving entanglement is still incomplete.
Two key examples of such problems, which turn out to be profoundly connected, are entanglement testing and entanglement distillation.

Entanglement testing can be understood as a type of entanglement detection. In this task, one wishes to certify whether an untrusted source that is supposed to generate copies of some entangled state $\rho_{AB}$ is performing as intended or, alternatively, is faulty and produces only separable (unentangled) states. 
Natural figures of merit for this task are based on the minimal probabilities of a misdetection, which could be either a false positive --- that is, mistaking a working device for a faulty one --- or, vice versa, a false negative. 
In light of this operational interpretation, any such metric can be interpreted as a measure of the entanglement content of $\rho_{AB}$: the more entangled the state is, the easier it is to distinguish from separable ones. 
Such questions are naturally characterised through the framework of composite quantum hypothesis testing~\cite{Brandao2010}, but despite active progress in the study of related problems~\cite{gap,hayashi_sanov,hayashi_stein,blurring}, obtaining a computable expression for the optimal performance in this task has been elusive.

Another key operational primitive of quantum information processing is
entanglement distillation, a task 
introduced in the pioneering works~\cite{Bennett-distillation, Bennett-distillation-mixed, Bennett-error-correction}, which aims to purify noisy entangled states into maximally entangled ones. This process is an important
ingredient in many practical quantum information protocols, as high-fidelity entanglement is typically a prerequisite for quantum computation and communication schemes. Moreover, entanglement distillation is deeply connected to the theory of quantum error correction~\cite{Bennett-error-correction}. Despite this clear significance, and despite it being one of the very first operational protocols ever studied in quantum information theory, we still do not have a complete understanding of entanglement distillation. Most notably, we lack a computable formula for how much entanglement can be distilled from a given quantum state, and even deciding whether any entanglement whatsoever can be extracted is an unsolved problem in general~\cite{Horodecki-open-problems, list-open-problems}. Similar problems affect other entanglement processing tasks, and exact solutions generally exist only in few special cases.

The main difficulty in the study of both of these tasks lies in the fact that 
their performance can typically be improved by employing more copies of a given quantum state, which means that the ultimate efficiency of a protocol needs to be understood in an asymptotic sense: given more and more copies of a given quantum state, how does the performance improve? This leads to a natural information-theoretic description of such tasks in terms of asymptotic rates, whose evaluation is the main bottleneck in the understanding of the operational properties of quantum entanglement.

An unfortunate consequence of this asymptotic character of entanglement processing is that, even when one can identify a relevant closed-form quantity that describes the given task --- such as, for example, the quantum relative entropy~\cite{Vedral1997} or the entanglement of formation~\cite{Bennett-error-correction, Wootters1998} --- the optimal asymptotic rate can only be expressed using so-called \emph{regularised} formulas, which require the evaluation of an explicit limit in the number of copies of the given quantum state $\rho_{AB}$~\cite{Werner-symmetry, Shor2004, Hastings2008, Hayden-EC, devetak2005}. This leads to expressions of the form $\lim_{n\to\infty} \frac1n f(\rho^{\otimes n}_{AB})$, which are immensely difficult to evaluate even for simple functions $f$, thus preventing an efficient quantitative characterisation of
the asymptotic operational properties of entanglement.
Because of this, the optimal rates of not only entanglement testing or 
distillation but also other important operational tasks remain inaccessible in general. When it comes to distillation, this 
problem persists not merely in the standard, practically motivated setting for 
manipulating
 entanglement --- namely, in the paradigm of local operations and classical communication (LOCC)~\cite{Bennett-distillation, Bennett-distillation-mixed, Bennett-error-correction} --- but even in simplified mathematical frameworks where entanglement manipulation is studied under relaxed constraints~\cite{Rains2001, Horodecki2002, BrandaoPlenio1}, which are designed to provide a more tractable structure for studying entanglement conversion.

One may then wonder: if precise answers are so hard to find, could we instead obtain insights into the asymptotic properties of quantum entanglement
by adjusting the questions that we ask? More specifically, while traditional approaches to entanglement processing remain fundamental and key for many applications, could we obtain a simpler asymptotic characterisation of %this task 
these tasks by changing the way in which we benchmark the performance of protocols, shifting the focus to another figure of merit? This question will motivate the core of our approach.

In the setting of entanglement testing, this will entail a seemingly minor change of focus from the asymptotic probability of a false negative (type~II error exponent), which is what most prior works were concerned with~\cite{Brandao2010,gap,hayashi_stein,blurring}, to the asymptotic probability a false positive (type~I error exponent), for which a closed-form solution was unknown prior to our work~\cite{hayashi_sanov}. 
The study of these two deceptively similar variants of the problem requires 
conceptually different techniques, and --- crucially --- we will see that this modification will lead to a remarkable simplification of the resulting expression.

To characterise entanglement distillation, we propose a conceptual shift: instead of focusing on the quantity (yield) as the measure of the efficiency of the protocol when more and more copies of a given state are available, we will focus on the \emph{quality} of the obtained entanglement, which is represented by the optimal error exponent --- that is, the rate at which the error of the protocol can be decreased.
This approach is inspired by the information-theoretic characterisation of quantum hypothesis testing~\cite{Hiai1991, Ogawa2000, HAYASHI}, where it is precisely this exponent of error probability that constitutes the figure of merit. While any entanglement manipulation framework can be studied through this lens, here we will focus on the one defined by the axiomatic class of non-entangling operations~\cite{BrandaoPlenio1} --- this useful relaxation of the operationally motivated LOCC framework has been used to shed light on the connections between entanglement theory and thermodynamics~\cite{BrandaoPlenio1, irreversibility, probabilistic-reversibility, hayashi_stein, blurring}, and its axiomatic nature means that it can be generalised even to quantum resources beyond entanglement~\cite{Brandao-Gour,probabilistic-reversibility, hayashi_stein, blurring}. The simpler structure of these operations will allow us to obtain an exact asymptotic solution.

Our first result establishes an exact equivalence between 
the performance of the two tasks discussed above, namely entanglement testing (in its standard formulation, under all physically realisable measurements) and entanglement distillation (under non-entangling operations): we show that the error exponent of entanglement distillation equals the exponent of false positive error in entanglement testing. 
This connection will allow us to tackle both of these tasks at the same time through an information-theoretic study of the underlying hypothesis testing problem.
Indeed, computing the asymptotic exponent of entanglement testing is a generalisation of a result in quantum hypothesis testing known as quantum Sanov's theorem~\cite{bjelakovic_sanov,notzel_sanov,hayashi_sanov}. However, the much more complicated structure involved in the problem we encounter here means that no known results are sufficiently general to shed any light on it. The problem is also related to the generalised quantum Stein's lemma~\cite{Brandao2010,hayashi_stein,blurring} that attracted much attention recently, but its distinct structure means that it requires a different approach.

As our main contribution, we then establish a generalised quantum Sanov's theorem that yields an exact expression for the asymptotic performance of entanglement testing and, as a result, also for the error exponent of entanglement distillation under non-entangling operations. We in particular show that the exponent is given by a variant of the relative entropy of entanglement, the \emph{reverse} relative entropy of entanglement~\cite{Vedral1997, eisert_reverse}. The remarkable aspect of this result is that the quantity can be evaluated exactly --- without regularisation --- on a single copy of the given quantum state, circumventing the problems that affect other measures of entanglement connected with practical tasks.
Our result thus establishes the reverse relative entropy as a measure of entanglement with a twofold direct meaning, 
while at the same time being computable without having to evaluate a many-copy limit. This altogether gives an exact solution to the problem of entanglement testing and provides 
an alternative way of benchmarking entanglement distillation, 
avoiding the seemingly ubiquitous problem of regularised formulas in the quantification of the performance of asymptotic entanglement processing protocols and thus bypassing the resulting bottlenecks.

%%%%%%%%%%%%%%%%%%%%%%%%%%%%%%%%%%%%%%%%%%%%%%%%%%%%%%%%%%%%%%%%%%%%%%%%%%%%%%
%%%%%%%%%%%%%%%%%%%%%%%%%%%%%%%%%%%%%%%%%%%%%%%%%%%%%%%%%%%%%%%%%%%%%%%%%%%%%%
%%%%%%%%%%%%%%%%%%%%%%%%%%%%%%%%%%%%%%%%%%%%%%%%%%%%%%%%%%%%%%%%%%%%%%%%%%%%%%
%%%%%%%%%%%%%%%%%%%%%%%%%%%%%%%%%%%%%%%%%%%%%%%%%%%%%%%%%%%%%%%%%%%%%%%%%%%%%%

%%%%%%%%%%%%%%%%%%%%%%%%%%%%%%%%%%%%%%%%%%%%%%%%%%%%%%%%%%%%%%%%%%%%%%%%%%%%%%
%%%%%%%%%%%%%%%%%%%%%%%%%%%%%%%%%%%%%%%%%%%%%%%%%%%%%%%%%%%%%%%%%%%%%%%%%%%%%%
%%%%%%%%%%%%%%%%%%%%%%%%%%%%%%%%%%%%%%%%%%%%%%%%%%%%%%%%%%%%%%%%%%%%%%%%%%%%%%

\section{Entanglement testing and distillation}

%%%%%%%%%%%%%%%%%%%%%%%%%%%%%%%%%%%%%%%%%%%%%%%%%%%%%%%%%%%%%%%%%%%%%%%%%%%%%%

\subsection{Setting of entanglement testing}

In the basic scenario of quantum hypothesis testing (quantum state discrimination),  one is tasked with distinguishing between two quantum states $\rho$ and $\sigma$ by performing a collective measurement on $n$ copies of the unknown state. The probability of mistaking $\rho$ for $\sigma$ decays exponentially as $2^{-cn}$, and it is this \emph{error exponent} $c$ that one aims to quantify in order to understand how fast the distinguishability improves as more copies become available. Remarkably, in the limit as $n\to\infty$, the error exponent exactly equals the quantum relative entropy $D(\sigma \| \rho) = \Tr \!\left[\sigma (\log_2 \sigma - \log_2 \rho)\right]$~\cite{Hiai1991,Ogawa2000}. It is this result, known as quantum Stein's lemma, that gives the quantum relative entropy its operational meaning as a measure of distinguishability of quantum states.

Consider now a scenario where two separated parties, Alice and Bob, would like to use a device that is supposed to prepare $n$ copies of some entangled state $\rho_{AB}$. 
However, they suspect that the device may fail, preparing a state that has no entanglement whatsoever between Alice's and Bob's systems.
How to verify if one really obtained the desired entangled state? This task, which we call entanglement testing, can be phrased as a composite hypothesis testing problem~\cite{Brandao2010}: we are to distinguish between $\rho_{AB}^{\otimes n}$ and the whole set of separable quantum states with a measurement (see Figure~\ref{fig:entanglement_testing}). Just as in conventional hypothesis testing, we would like to understand the behaviour of the optimal error exponent for large $n$, where the optimisation refers to the discrimination strategies. 
In order for this to characterise the optimal performance of the most general discrimination schemes, 
we do not impose any a priori constraints on the kind of measurement that can be carried out on the system, meaning that the above optimisation is understood to run over all physically realisable quantum measurements. This, in turn, makes the error exponent difficult to control and constitutes the main challenge in understanding asymptotic entanglement testing.

\begin{figure}[h]
\includegraphics[width=0.9\textwidth]{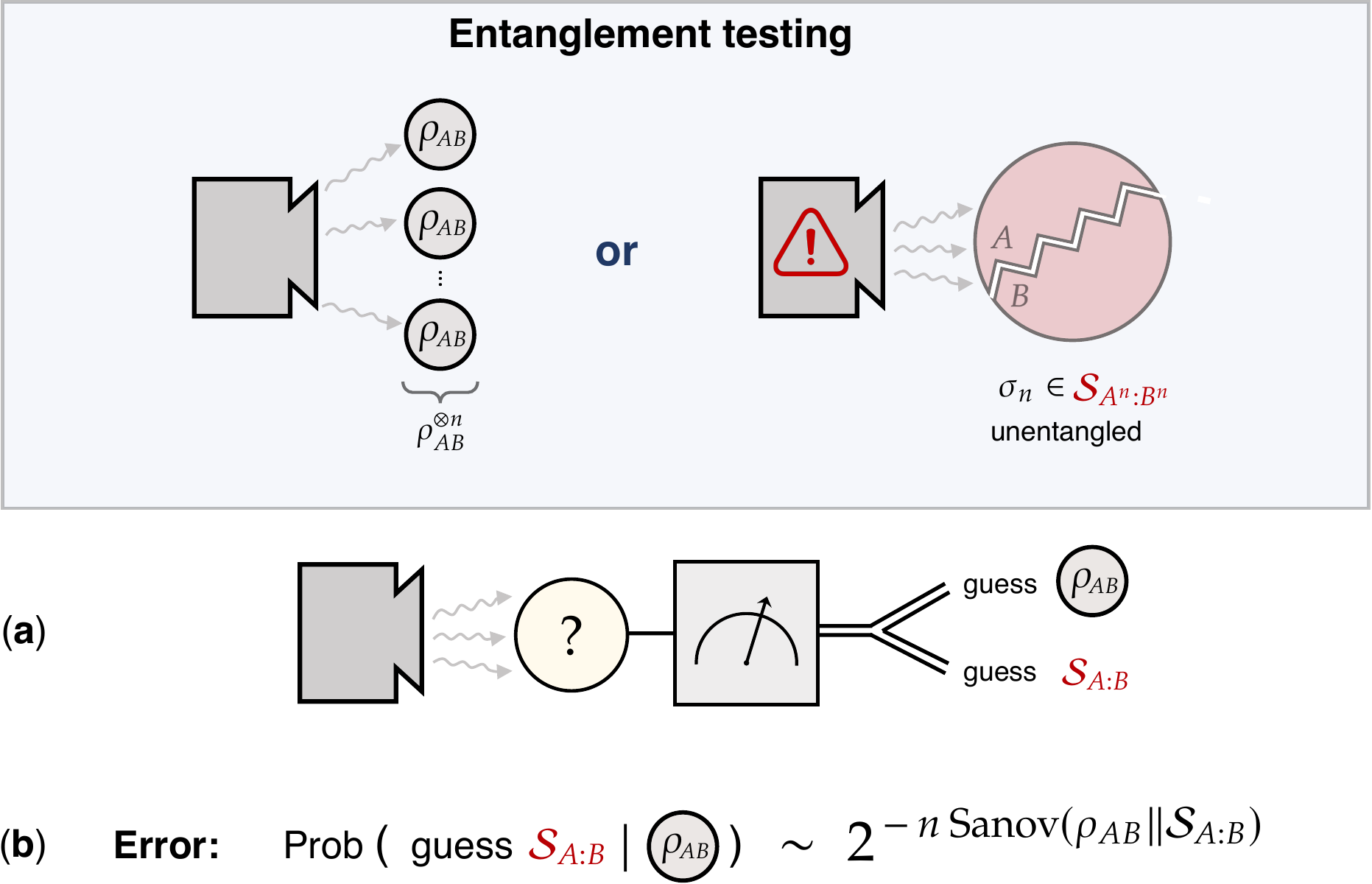}
\caption{%
%\sffamily
\textbf{The set-up and figure of merit in entanglement testing.} Entanglement testing is a quantum hypothesis testing problem concerned with distinguishing the case when a source is generating copies of a target entangled state $\rho_{AB}$ from the case when it malfunctions and instead produces only states $\sigma_n\in \SEP_{A^n:B^n}$ that are globally separable, i.e.\ exhibit no entanglement between between Alice's systems on one side and Bob's systems on the other. (a)~The procedure of entanglement testing consists of making a general two-outcome quantum measurement on the overall $n$-copy system that models the output of the device. The choice of the measurement here is arbitrary and it is precisely the experimenter's task to optimise this choice. 
(b)~Two types of error may occur: false positive, where a working device is mistaken for a faulty one, and false negative, where the opposite happens. By choosing a measurement optimally, the  probability of a false negative can be constrained to be arbitrarily small while the probability of a false positive can be made to decay exponentially fast to zero. The coefficient governing this exponential behaviour, called the Sanov exponent, is a central object of interest in this work.
}
\label{fig:entanglement_testing}
\end{figure}

There is, however, a certain freedom in choosing which type of error we quantify here. The so-called type~I error (false positive) occurs when we mistake $\rho_{AB}^{\otimes n}$ for a separable state, while a type~II error (false negative) occurs when we mistake a separable state for $\rho_{AB}^{\otimes n}$. For a fixed, arbitrarily small type~I error probability, the asymptotic exponent of the type~II error probability is known as the \emph{Stein exponent}; conversely, the asymptotic exponent of the type~I error probability with type~II probability fixed (arbitrarily small) is known as the \emph{Sanov exponent}.  
The Stein exponent of entanglement testing was first investigated in the works of Brand\~ao and Plenio~\cite{BrandaoPlenio2,Brandao2010}, although it was fully solved only very recently~\cite{gap,hayashi_stein,blurring}. Here we will focus instead on the Sanov exponent,
which we formally define as
\begin{equation}\begin{aligned}
	&\Sanov(\rho_{AB}\| \SEP_{A:B}) \\
    &\quad\ \coloneqq \lim_{\ve\to0} \liminf_{n\to\infty} - \frac1n \log_2 \min \Big\{ \Tr M_n \rho_{AB}^{\otimes n} \;\Big|\; 0 \leq M_n \leq \id,\; \Tr (\id - M_n) \sigma_n \leq \ve\ \ \forall\; \sigma_n \in \SEP_{A^n:B^n} \Big\},
\end{aligned}\end{equation}
with $\SEP_{A^n:B^n}$ standing for the set of all separable states on the $n$-partite quantum system $A^nB^n$, composed of $n$ subsystems $A^n = A_1 \ldots A_n$ held by Alice and $n$ subsystems $B^n = B_1 \ldots B_n$ held by Bob, and with $(M_n, \id - M_n)$ denoting the POVM elements of the measurement performed on the $n$-copy system. 
The evaluation of this exponent will turn out to be closely connected with the task of entanglement distillation.

%%%%%%%%%%%%%%%%%%%%%%%%%%%%%%%%%%%%%%%%%%%%%%%%%%%%%%%%%%%%%%%%%%%%%%%%%%%%%%

\subsection{Setting of entanglement distillation}

The basic setting of entanglement distillation, as introduced in~\cite{Bennett-distillation, Bennett-distillation-mixed, Bennett-error-correction}, is as follows. 
Our protagonists, Alice and Bob, share many copies of a bipartite quantum state $\rho_{AB}$ and aim to extract pure, maximally entangled states from it. Specifically, they can apply
a sequence of quantum channels $\Lambda_n$, subjected to some %with 
locality constraints to be specified later, such that, when acting on $n$ copies of $\rho_{AB}$, the final state approximates $m$ copies of the maximally entangled state $\ket{\Phi_+} \coloneqq \frac{1}{\sqrt{2}} (\ket{00}+\ket{11})$, up to an error $\ve_n$. We %denote
write this as $\Lambda_n(\rho_{AB}^{\otimes n}) \approx_{\ve_n} \proj{\Phi_+}^{\otimes m}$, where $\ve_n$-closeness is measured by a suitable measure of distance --- either the fidelity or, equivalently, the trace distance. Crucially, although the transformation here is approximate and allows for some error,  we will require that $\lim_{n\to\infty}\ve_n = 0$: as more and more copies of $\rho_{AB}$ become available, the quality of the distilled entanglement increases, becoming perfect in the asymptotic limit. Now, if we understand $\frac{m}{n}$ as the yield of this protocol, the \emph{distillable entanglement} $E_{d}(\rho_{AB})$ is then defined as the largest asymptotic yield $\lim_{n\to\infty} \frac{m}{n}$ optimised over all feasible protocols such that the error $\ve_n$ vanishes asymptotically.

\begin{figure}[ht]
\includegraphics[width=.85\textwidth]{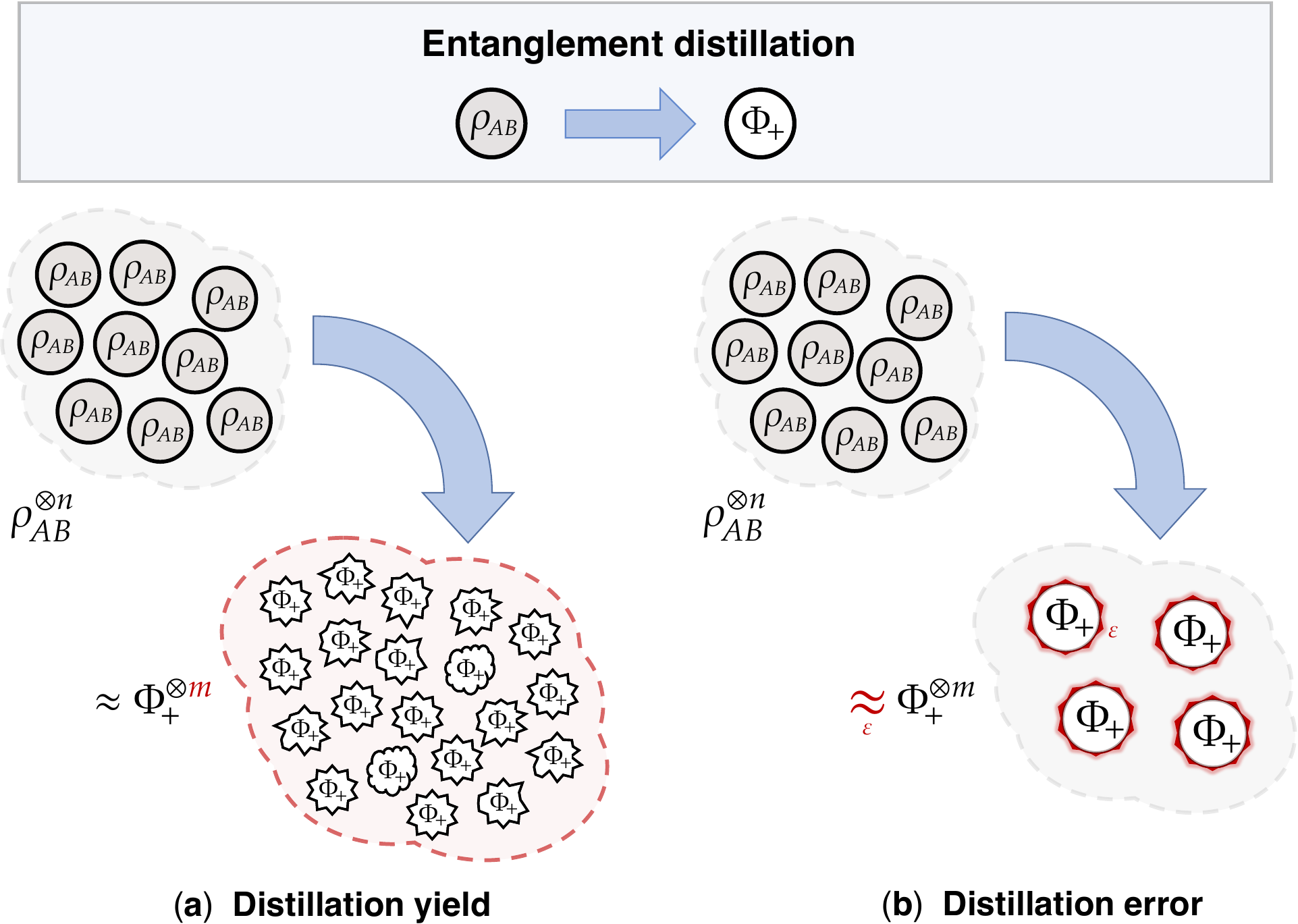}
\caption{%
%\sffamily
\textbf{Two ways of benchmarking entanglement distillation.} Entanglement distillation is the process of converting copies of a noisy entangled quantum state $\rho_{AB}$ into fewer copies of the pure maximally entangled state $\Phi_+$. To account for physical imperfections in manipulating quantum states, the process is not required to be exact: the resulting states must approximate copies of $\Phi_+$ only to some desired degree of precision, as quantified by the distillation error $\ve$. 
(a)~Conventional approaches to distillation focus on maximising distillation \emph{yield}, that is, the number of copies of $\Phi_+$ obtained per each copy of $\rho_{AB}$. The error of the procedure is irrelevant as long as it converges to zero in the asymptotic limit as the available number of copies of $\rho_{AB}$ grows to infinity --- for a fixed number of copies, the errors may be large. 
(b)~In this paper, we instead focus on minimising the above error, potentially sacrificing some yield to obtain higher-quality entanglement. Specifically, we require that the distillation error vanish exponentially fast as the number of available copies of $\rho_{AB}$ grows, while the total number of maximally entangled states $\Phi_+$ produced in the process is still as large as desired. Accordingly, our figure of merit is not the number of copies produced but the optimal \emph{error exponent}, that is, the rate of decay of the distillation error, which directly quantifies the quality of the entanglement at the output of the protocol.
}%
\label{fig:twoways}
\end{figure}

Naturally, not all protocols $\Lambda_n$ can be implemented by two parties that are spatially separated. Therefore, the optimisation must be restricted to a suitable class of allowed protocols --- often called `free operations' --- that respect the locality constraints between Alice and Bob. While the precise choice of the free operations depends on the specific setting under consideration, the most physically natural and commonly adopted class is that of local operations and classical communication (LOCC), as defined in the original works of~\cite{Bennett-distillation, Bennett-distillation-mixed, Bennett-error-correction}. Although well motivated practically, this set is known to have an extremely complicated mathematical structure~\cite{LOCC}, hindering in particular the understanding of asymptotic entanglement transformations. This has led to a long history of alternative approaches, where one would provide additional resources or otherwise extend the allowed operations beyond the set of LOCC~\cite{Rains2001, Martin-exact-PPT, Horodecki2002, BrandaoPlenio1}, resulting in invaluable insights into the foundations of the theory as well as into the operational power of LOCC operations themselves. 
Here we follow these ideas, adopting the axiomatic framework of Brand\~ao and Plenio~\cite{BrandaoPlenio1,BrandaoPlenio2}: we consider as free all non-entangling (NE) protocols $\Lambda_n$, that is, all quantum channels which are unable to generate any entanglement --- $\Lambda_n(\sigma)$ must remain unentangled for all unentangled states $\sigma$. This weak requirement is inspired by axiomatic approaches to the second law of thermodynamics~\cite{GILES, Lieb-Yngvason}, and it has already shed light on the theory of entanglement manipulation through these fundamental thermodynamic connections~\cite{BrandaoPlenio1, irreversibility, probabilistic-reversibility, hayashi_stein, blurring}. Unlike the LOCC-based approach, the Brand\~ao--Plenio one has the added advantage of being fundamentally resource-agnostic, meaning that it can be extended beyond entanglement, leading to a unified theory of all quantum resources.

Taking inspiration from quantum hypothesis testing, where the error exponents are the figures of merit, we can apply a similar reasoning here and ask about the distillation error exponent. Specifically, consider again a distillation protocol that outputs $m$ copies of $\ket{\Phi_+}$ with error $\ve_n$. We will now ask: how fast does the quality of the distilled entanglement improve as the number of distilled copies $m$ grows to infinity? 
Instead of focusing on the optimal yield, we will thus require that $\ve_n \sim 2^{-c n}$ and characterise the optimal error exponent $c$ (see Figure~\ref{fig:twoways}). The \emph{distillable entanglement error exponent} is then defined as the largest such exponent that can be achieved as the size of the input and output systems grows:
\begin{equation}\begin{aligned}
    E_{d,\rm err} (\rho) \coloneqq \lim_{m\to\infty} \,\sup \lset \lim_{n\to\infty} - \frac1n \log_2 \ve_n \bar \Lambda_n(\rho_{AB}^{\otimes n}) \approx_{\ve_n} \proj{\Phi_+}^{\otimes m},\; \Lambda_n \in \textrm{NE} \;\, \forall\, n \rset,
\end{aligned}\end{equation}
where we %again 
optimise over %all 
sequences of non-entangling distillation protocols to find the least achievable error. One can notice that this definition no longer places any importance on the precise number of maximally entangled copies that we obtain in the protocol (provided that it can be made as large as desired), but only on the exponentially decreasing error. This provides an alternative angle %of looking at 
for assessing the performance of distillation protocols, incomparable with previous approaches that focused on the distillation yield.

%%%%%%%%%%%%%%%%%%%%%%%%%%%%%%%%%%%%%%%%%%%%%%%%%%%%%%%%%%%%%%%%%%%%%%%%%%%%%%

\subsection{Connecting entanglement testing with entanglement distillation}

A curious --- and very consequential~\cite{gap,gap-comment} --- connection between entanglement testing and distillation was shown in the works of Brand\~ao and Plenio~\cite{BrandaoPlenio2}, where the Stein exponent of entanglement testing was connected with the asymptotic yield of entanglement distillation in the axiomatic setting of non-entangling operations. Here we establish a dual to that result, proving an exact connection between the Sanov exponent and the error of entanglement distillation.

\begin{lemma}\label{lem:distillation_sanov_equivalence}
The asymptotic error exponent of entanglement distillation under non-entangling operations equals the Sanov error exponent of hypothesis testing of all separable states $\SEP_{A:B}$ against $\rho_{AB}$: 
\begin{equation}\begin{aligned}
    E_{d,\rm err} (\rho_{AB}) = \Sanov(\rho_{AB} \| \SEP_{A:B}).
\end{aligned}\end{equation}
\end{lemma}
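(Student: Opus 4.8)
The plan is to prove the two inequalities $E_{d,\rm err}(\rho_{AB}) \le \Sanov(\rho_{AB}\|\SEP_{A:B})$ and $E_{d,\rm err}(\rho_{AB}) \ge \Sanov(\rho_{AB}\|\SEP_{A:B})$ separately, in each case turning an (almost) optimal protocol for one task into a protocol for the other. I measure the distillation error by the trace distance, $\ve_n = \tfrac12\big\|\Lambda_n(\rho_{AB}^{\otimes n}) - \proj{\Phi_+}^{\otimes m}\big\|_1$ (the fidelity formulation is entirely analogous). At level $n$ the Sanov problem uses a test $0 \le T_n \le \id$ on $A^nB^n$ with type-I error $\alpha_n = \Tr\!\big[(\id-T_n)\rho_{AB}^{\otimes n}\big]$ and type-II error $\beta_n = \sup_\sigma \Tr[T_n\sigma]$, the supremum running over all states $\sigma$ separable across $A^n:B^n$; then $\Sanov(\rho_{AB}\|\SEP_{A:B})$ is the largest exponential decay rate of $\alpha_n$ attainable subject to $\beta_n\to 0$.

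For $E_{d,\rm err} \le \Sanov$: given a non-entangling protocol $(\Lambda_n)_n$ with output $\proj{\Phi_+}^{\otimes m}$ and error $\ve_n$, I would form the pulled-back test $T_n := \Lambda_n^\dagger\!\big(\proj{\Phi_+}^{\otimes m}\big)$, where $\Lambda_n^\dagger$ is the Heisenberg-picture adjoint; since $\Lambda_n^\dagger$ is positive and unital, $0 \le T_n \le \id$. Its type-I error equals $1 - \Tr\!\big[\proj{\Phi_+}^{\otimes m}\Lambda_n(\rho_{AB}^{\otimes n})\big]$, hence is at most $\ve_n$, while for any $\sigma$ separable across $A^n:B^n$ the output $\Lambda_n(\sigma)$ is separable across $A^m:B^m$ (because $\Lambda_n$ is non-entangling), so its type-II error $\Tr\!\big[\proj{\Phi_+}^{\otimes m}\Lambda_n(\sigma)\big]$ is bounded by $\max_{\tau\in\SEP}\Tr\!\big[\proj{\Phi_+}^{\otimes m}\tau\big] = 2^{-m}$, the inverse Schmidt rank of $\proj{\Phi_+}^{\otimes m}$. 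Thus for each fixed $m$ the pulled-back tests achieve a type-I exponent at least the distillation exponent while keeping $\beta_n \le 2^{-m}$; a diagonal argument over a sufficiently slowly growing $m(n)\to\infty$ then produces a single admissible test family with $\beta_n\to 0$ and type-I exponent converging to $E_{d,\rm err}$, so $\Sanov \ge E_{d,\rm err}$.

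For $E_{d,\rm err} \ge \Sanov$: fix a rate $s$ achievable in the Sanov problem, witnessed by tests $(T_n)_n$ with $-\tfrac1n\log\alpha_n \to s$ and $\beta_n\to 0$, and fix an arbitrary output size $m$. I would use the measure-and-prepare channel $\Lambda_n(X) := \Tr[T_nX]\,\proj{\Phi_+}^{\otimes m} + \Tr\!\big[(\id-T_n)X\big]\,\tfrac{\id}{2^{2m}}$, which is manifestly CPTP. On a separable input $\sigma$ it outputs the isotropic state $p\,\proj{\Phi_+}^{\otimes m} + (1-p)\tfrac{\id}{2^{2m}}$ with $p = \Tr[T_n\sigma] \le \beta_n$, which is separable precisely when $p \le \tfrac{1}{2^m+1}$; since $m$ is fixed while $\beta_n\to 0$, this holds for all $n\ge N(m)$, so $\Lambda_n$ is non-entangling for all large $n$ (replacing the finitely many remaining $\Lambda_n$ by a fixed non-entangling channel changes nothing asymptotically). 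On the input $\rho_{AB}^{\otimes n}$ one gets $\Lambda_n(\rho_{AB}^{\otimes n}) = (1-\alpha_n)\proj{\Phi_+}^{\otimes m} + \alpha_n\tfrac{\id}{2^{2m}}$, which is within trace distance $\alpha_n$ of $\proj{\Phi_+}^{\otimes m}$, so $\ve_n \le \alpha_n$ and the distillation exponent for output size $m$ is at least $s$. Since this holds for every $m$, the supremum in the definition of $E_{d,\rm err}$ is $\ge s$ at each $m$, so $E_{d,\rm err} \ge s$; letting $s\uparrow\Sanov$ concludes.

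The one genuinely delicate point is the structural mismatch between the two quantities: $E_{d,\rm err}$ carries an explicit $m\to\infty$ limit together with the non-entangling constraint on the whole channel, whereas the Sanov exponent only demands $\beta_n\to 0$ with no prescribed rate. The first direction handles this by diagonalising $m(n)\to\infty$. The second direction hinges on the observation that, because $m$ is fixed \emph{before} $n$ is sent to infinity, the vanishing type-II error eventually falls below the isotropic separability threshold $\tfrac{1}{2^m+1}$ --- so no quantitative control on the speed of $\beta_n\to 0$ is ever needed --- together with the exact isotropic-state separability criterion, which is what makes the measure-and-prepare channel exactly non-entangling. Everything else is routine. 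This reasoning runs parallel to, and dualises, the Brand\~ao--Plenio link between the Stein exponent of entanglement testing and the distillable rate.
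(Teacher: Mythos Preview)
Your proof is correct and follows essentially the same approach as the paper: the pulled-back test $\Lambda_n^\dagger(\proj{\Phi_+}^{\otimes m})$ for one direction and a measure-and-prepare channel exploiting the isotropic-state separability threshold for the other. The only cosmetic differences are that the paper uses $\frac{\id-\proj{\Phi_m}}{2^{2m}-1}$ instead of $\frac{\id}{2^{2m}}$ as the garbage state and works directly at fixed $\ve=2^{-m}$ (proving the exact equality $E_{d,\rm err}^{(m)}=\liminf_n\frac1n D_H^{2^{-m}}(\SEP\|\rho^{\otimes n})$ before sending $m\to\infty$), which sidesteps your diagonal arguments but is not materially different.
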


This shows an equivalence between two a priori rather different tasks: one concerned with extracting entanglement, one with simply detecting it. The fact that the two can be so closely connected will prove extremely useful to us, as we will be able to employ the mathematical machinery of information theory to resolve the asymptotic exponent exactly. 
We stress that, although the study of entanglement distillation depends on the choice of the allowed free operations (here, non-entangling operations), the task of entanglement testing is defined independently of such constraints --- it follows the standard definition of quantum state discrimination, where all measurements allowed by quantum mechanics are considered.

%%%%%%%%%%%%%%%%%%%%%%%%%%%%%%%%%%%%%%%%%%%%%%%%%%%%%%%%%%%%%%%%%%%%%%%%%%%%%%

\section{A generalised quantum Sanov's theorem}

Just as the exponent of hypothesis testing between two states is given by the quantum relative entropy $D(\sigma\|\rho)$, it is natural to expect the relative entropy to make an appearance in characterising the asymptotic exponent of entanglement testing. However, formalising such connections goes beyond the current state of the art in composite quantum hypothesis testing, requiring the development of new techniques.

Our main result is the complete solution of the Sanov exponent of entanglement testing, which by Lemma~\ref{lem:distillation_sanov_equivalence} also gives a resolution of the error exponent of entanglement distillation under non-entangling operations. The key role here will be played by the \emph{reverse relative entropy of entanglement}, defined as~\cite{eisert_reverse}
\begin{equation}\begin{aligned}
    D(\SEP_{A:B} \| \rho_{AB}) \coloneqq \min_{\sigma_{AB} \in \SEP_{A:B}}\, D(\sigma_{AB} \| \rho_{AB}),
\end{aligned}\end{equation}
where the term `reverse' refers to the fact that the relative entropy of entanglement was originally defined with the arguments in the opposite order, as $D(\rho_{AB}\|\SEP_{A:B}) \coloneqq \min_{\sigma_{AB} \in \SEP_{A:B}}\, D(\rho_{AB} \| \sigma_{AB})$~\cite{Vedral1997}.

\begin{thm}\label{thm:main}
For any state $\rho_{AB}$, the asymptotic Sanov error exponent of entanglement testing under all physical quantum measurements --- and, as a result, the error exponent of entanglement distillation under non-entangling operations --- equals the reverse relative entropy of entanglement:
\begin{equation}\begin{aligned}
    \Sanov(\rho_{AB} \| \SEP_{A:B}) = D(\SEP_{A:B} \| \rho_{AB}) = E_{d,\rm err} (\rho_{AB}).
\end{aligned}\end{equation}
\end{thm}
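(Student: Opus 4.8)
The plan is to prove the two matching bounds $\Sanov(\rho_{AB}\|\SEP_{A:B}) \le D(\SEP_{A:B}\|\rho_{AB})$ and $\Sanov(\rho_{AB}\|\SEP_{A:B}) \ge D(\SEP_{A:B}\|\rho_{AB})$; the identification with $E_{d,\rm err}$ is already supplied by Lemma~\ref{lem:distillation_sanov_equivalence}. The conceptual anchor, and the reason the answer is single-letter, is the additivity identity
\begin{equation}\begin{aligned}
\min_{\sigma_n \in \SEP_{A^n:B^n}} D(\sigma_n \| \rho_{AB}^{\otimes n}) = n\, D(\SEP_{A:B}\|\rho_{AB}),
\end{aligned}\end{equation}
where $\SEP_{A^n:B^n}$ is the set of states separable across $A_1\cdots A_n : B_1\cdots B_n$. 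Here ``$\le$'' is the product ansatz $\sigma_n = \sigma_\star^{\otimes n}$ for a single-copy minimiser $\sigma_\star$, and ``$\ge$'' follows from superadditivity of the relative entropy under a product second argument, $D(\sigma_n\|\rho^{\otimes n}) \ge \sum_{i=1}^n D(\sigma_n^{(i)}\|\rho)$ (the gap being a sum of mutual informations), together with the fact that each single-copy marginal $\sigma_n^{(i)}$ of a separable state is again separable, hence $D(\sigma_n^{(i)}\|\rho) \ge D(\SEP_{A:B}\|\rho)$.

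For the converse ($\le$), take any sequence of tests $\{T_n, \id - T_n\}$ achieving the Sanov exponent, so that the worst-case type-II error $\ve_n \coloneqq \sup_{\sigma_n \in \SEP_{A^n:B^n}} \Tr[T_n \sigma_n]$ vanishes. Specialising the composite null to $\sigma_\star^{\otimes n}\in\SEP_{A^n:B^n}$ and applying the data-processing inequality for the relative entropy to the two-outcome measurement $\omega \mapsto (\Tr[T_n\omega], \Tr[(\id-T_n)\omega])$ gives, with $\alpha_n \coloneqq \Tr[(\id-T_n)\rho^{\otimes n}]$,
\begin{equation}\begin{aligned}
n\, D(\sigma_\star\|\rho) = D(\sigma_\star^{\otimes n}\|\rho^{\otimes n}) \ \ge\ (1-\ve_n)\log\tfrac{1}{\alpha_n} - 1,
\end{aligned}\end{equation}
so $-\tfrac1n\log\alpha_n \le D(\sigma_\star\|\rho) + o(1) = D(\SEP_{A:B}\|\rho) + o(1)$. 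This is the elementary half of Stein's lemma applied to the free state hardest to distinguish from $\rho$.

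The achievability bound ($\ge$) is the substantial part: it is a generalised quantum Sanov theorem for the non-i.i.d.\ composite null $\{\SEP_{A^n:B^n}\}_n$, whereas the known quantum Sanov theorems~\cite{bjelakovic_sanov,notzel_sanov,hayashi_sanov} cover only i.i.d.\ families. I need tests with $\sup_{\sigma_n}\Tr[T_n\sigma_n] \to 0$ and $-\tfrac1n\log\Tr[(\id - T_n)\rho^{\otimes n}] \to D(\SEP_{A:B}\|\rho)$, built in three steps. First, \emph{symmetrise}: since $\rho^{\otimes n}$ and $\SEP_{A^n:B^n}$ are permutation invariant, one may restrict to permutation-invariant $T_n$ and, for the type-II bound, permutation-invariant $\sigma_n$. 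Second, a \emph{de Finetti / postselection reduction}: a permutation-invariant separable state on $n$ copies admits the one-sided operator bound $\sigma_n \le \mathrm{poly}(n)\,\widehat\omega_n$ with $\widehat\omega_n = \int \sigma_x^{\otimes n}\,\dd\mu(x)$ a fixed mixture of i.i.d.\ separable states --- a separable-state analogue of the Christandl--K\"onig--Renner postselection bound --- the key point being that the overhead is only polynomial, so this single inequality collapses the worst case over the whole correlated set $\SEP_{A^n:B^n}$ onto $\widehat\omega_n$. Third, a \emph{Neyman--Pearson test}: take $\id - T_n = \big\{\,\mathrm{poly}(n)\,\widehat\omega_n \ge 2^{n(D(\SEP_{A:B}\|\rho) - \delta)}\,\rho^{\otimes n}\,\big\}$; then $\Tr[(\id - T_n)\rho^{\otimes n}] \le \mathrm{poly}(n)\,2^{-n(D(\SEP_{A:B}\|\rho) - \delta)}$ at once (the prefactor harmless since $\tfrac1n\log\mathrm{poly}(n)\to 0$), giving a type-I exponent of at least $D(\SEP_{A:B}\|\rho) - \delta$, while $\Tr[T_n\sigma_n] \le \mathrm{poly}(n)\,\Tr[T_n\widehat\omega_n] = \mathrm{poly}(n)\int \Tr[T_n\sigma_x^{\otimes n}]\,\dd\mu(x) \to 0$ because under $\sigma_x^{\otimes n}$ the normalised log-likelihood ratio between $\sigma_x^{\otimes n}$ and $\rho^{\otimes n}$ concentrates at $D(\sigma_x\|\rho) \ge D(\SEP_{A:B}\|\rho) > D(\SEP_{A:B}\|\rho) - \delta$, so $\sigma_x^{\otimes n}$ falls in the region defining $\id - T_n$ with probability $\to 1$. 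Letting $\delta \to 0$ concludes.

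The hard part will be the second and third steps --- taming the correlated, non-i.i.d.\ structure of $\SEP_{A^n:B^n}$. One needs the separable-state postselection/de Finetti bound with \emph{polynomial} overhead only (a mere trace-norm de Finetti statement would require discarding a macroscopic fraction of the copies and would kill the exponent), and one needs to control non-commutativity: I would pinch with respect to the eigenbasis of $\rho^{\otimes n}$, whose number of distinct eigenvalues --- hence the pinching overhead --- stays polynomial by permutation symmetry, and replace the measure $\mu$ by a sufficiently fine finite net of the compact set $\SEP_{A:B}$, using continuity of $\sigma \mapsto D(\sigma\|\rho)$ for the uniform concentration estimate. The reason the final exponent is the \emph{single-letter} $D(\SEP_{A:B}\|\rho)$, rather than a Chernoff-type or regularised quantity, ultimately rests on the additivity identity stated at the outset, and this is the genuinely new ingredient beyond the classical Sanov theorem and the i.i.d.\ quantum case. (The degenerate case $D(\SEP_{A:B}\|\rho) = +\infty$ --- which occurs precisely when $\supp\rho_{AB}$ contains no product vector --- can be handled directly, e.g.\ by testing with the support projector of $\rho_{AB}^{\otimes n}$.)
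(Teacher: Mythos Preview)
Your converse direction is fine and essentially matches the paper's argument (the paper routes through $D_{\max}^\e$, but the content is the same). The achievability direction, however, has a genuine gap: the ``separable-state analogue of the Christandl--K\"onig--Renner postselection bound'' you invoke --- i.e.\ an operator inequality $\sigma_n \le \mathrm{poly}(n)\int_{\SEP} \sigma^{\otimes n}\,\dd\mu(\sigma)$ for every permutation-invariant $\sigma_n\in\SEP_{A^n:B^n}$ --- is not a known result, and it cannot be a consequence of the Brand\~ao--Plenio Axioms~1--5 alone. The paper constructs an explicit classical family $(\FF_n)_n$ satisfying those axioms for which $\sanov(p\|\FF)=0$ while $D(\FF\|p)=\infty$; in that example $\FF_1=\{r_1\}$ is a single point, yet $r_n\in\FF_n$ has $r_n(1^n)=\tfrac12$ while $r_1^{\otimes n}(1^n)=2^{-n}$, so no polynomial-overhead postselection bound against i.i.d.\ free states can hold. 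Hence your Step~2 is exactly where the argument would have to exploit a property specific to separable states beyond Axioms~1--5 --- and you have not identified which one, nor how. (Even granting the bound, your Step~3 is sketchy: the test projector is defined through $\widehat\omega_n$ vs.\ $\rho^{\otimes n}$, whereas your concentration statement is about $\sigma_x^{\otimes n}$ vs.\ $\rho^{\otimes n}$; these are different log-likelihood ratios, and bridging them uniformly in $x$ with only a polynomial loss requires additional work.)

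The paper's route is entirely different. It first proves a \emph{classical} generalised Sanov theorem for families satisfying Axioms~1--5 plus the extra Axiom~6 (faithfulness of the regularised relative entropy of resource), using the blurring lemma of~\cite{blurring}: assuming the exponent is strictly below $D(\FF\|p)$, one finds a free $q_n\in\FF_n$ with polynomially large weight on a type class $T_{n,s}$ for some non-free $s\notin\FF_1$; blurring then spreads this weight to dominate $s^{\otimes n}$, forcing $D^\infty(s\|\FF)=0$ and contradicting Axiom~6. The quantum case is then obtained by a double-blocking measurement reduction: group copies in blocks of size $k$, measure each block with a near-optimal measurement for $D^{\all}(\FF_k\|\rho^{\otimes k})$, apply the classical theorem to the resulting classical family (whose Axiom~6 is verified via Piani's compatibility criterion), and close the gap between $D^{\all}$ and $D$ with asymptotic spectral pinching as $k\to\infty$. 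The additivity identity you highlighted is indeed the reason the final answer is single-letter, but it enters only at this last pinching step; the heart of the proof is the blurring-plus-faithfulness argument, not a de~Finetti reduction.
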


The remarkable aspect of the result is that, although both the distillable entanglement error exponent $E_{d,\rm err}$ and the Sanov exponent express \emph{asymptotic} information-theoretic properties of the quantum state $\rho_{AB}$ --- that is, they quantify the performance of $\rho_{AB}^{\otimes n}$ in the limit of large $n$ --- the quantity $D(\SEP_{A:B} \| \rho_{AB})$ is \emph{single letter}, in that it does not require a regularisation and can be evaluated on a single copy of $\rho_{AB}$.
This lets us avoid the biggest issue that plagues most solutions of asymptotic rates of entanglement manipulation protocols.

The main technical hurdle in proving Theorem~\ref{thm:main} is that the hypotheses (states) involved in the discrimination task depart from the typically considered setting of  independent and identically distributed (i.i.d.)\ ones. The recent years have seen significant interest in such hypothesis testing tasks `beyond i.i.d.'\ in quantum information theory~\cite{bjelakovic_sanov, notzel_sanov, Brandao2010, brandao_adversarial, hayashitomamichel16, Sagawa2019, berta_composite, Mosonyi2022, gap, watanabe_2024, hayashi_stein, blurring}, but none of the previous results are sufficiently general to cover our setting.
Our proof of the Theorem proceeds in two steps. First, we prove the corresponding result in the case of classical information theory --- where, instead of quantum states, we constrain ourselves to classical probability distributions. Despite the mathematically simpler structure, this result is already non-trivial, as some intuitive approaches used in i.i.d.\ cases fail to work. Instead, we employ a recently introduced powerful mathematical technique called blurring~\cite{blurring}, which allows us to handle general composite problems in hypothesis testing. Finally, we show that the classical solution can be lifted to a fully quantum one by performing suitable measurements on the quantum systems in consideration. Our solution of the problem is in fact very general and extends beyond entanglement testing to the testing of more general quantum resources. The intuition for the proof method is presented in the Methods (Section~\ref{sec:methods}), while the complete proof can be found in the \hyperlink{supp}{Supplementary Information}.

Although conceptually quite different, our result may be compared with previous related findings that evaluated asymptotic rates of entanglement distillation by connecting them with hypothesis testing problems. 
This notably includes the generalised quantum Stein's lemma, as originally conjectured in~\cite{Brandao2010} and recently proven in~\cite{hayashi_stein,blurring}. The result states that the Stein exponent of entanglement testing --- that is, the asymptotic exponent of the type~II error probability in discriminating a given state $\rho_{AB}^{\otimes n}$ from all separable states --- is given exactly by the regularised relative entropy of entanglement, $D^\infty(\rho_{AB} \| \SEP_{A:B}) \coloneqq  \lim_{n\to\infty} \frac1n D(\rho_{AB}^{\otimes n} \| \SEP_{A^n:B^n})$. As shown in~\cite{BrandaoPlenio2}, this also equals the asymptotic yield of entanglement distillation under non-entangling operations. The major difference between this result and ours is the need for regularisation: although the generalised quantum Stein's lemma ostensibly provides an exact expression of the distillable entanglement, 
this is given by a regularised quantity, which prevents an efficient evaluation of it except for limited special cases. A variant of this result was also shown in a 
setting less permissive than all non-entangling operations, namely, the more restricted class 
of `dually non-entangling operations'~\cite{DNE-distillable}, where the entanglement yield can again be evaluated through a connection with a composite hypothesis testing problem~\cite{brandao_adversarial}. % through which it can be evaluated exactly. 
This asymptotic rate is, however, also affected by the problem of regularisation, which our result in Theorem~\ref{thm:main} completely sidesteps.

Some words on the applicability of %our result in
Theorem~\ref{thm:main} are in order. There are instances of quantum states from which maximal entanglement can be distilled exactly, with no error. This notably includes pure entangled states $\rho_{AB} = \proj{\psi_{AB}}$~\cite{Popescu-probabilistic}. As a consequence, in such cases the error exponent can be chosen to be arbitrarily high, and so $E_{d,\rm err}$ diverges to infinity. 
This is indeed expected and is fully captured by Theorem~\ref{thm:main}: for all such states, we have that $D(\SEP_{A:B} \| \rho_{AB}) = \infty$.
Although this looks as if it may limit the applicability of our result, such cases highly contrast with quantum states typically encountered in experimental settings:  perfect zero-error entanglement extraction is impossible from all full- or high-rank quantum states~\cite{kent_1998,fang_2020-1}, meaning that $D(\SEP_{A:B} \| \rho_{AB})$ is necessarily finite for all generic $\rho_{AB}$.
It is precisely these noisy states for which computing the asymptotic rates of entanglement distillation has been such a difficult task, as conventional techniques in entanglement distillation, which previously enabled a complete description of distillation for pure states~\cite{Bennett-distillation}, did not manage to shed much light on the general case of mixed states.
This means that our results can find direct applicability as an entanglement benchmark in the regime complementary to the well-studied and well-understood setting of noiseless pure states, serving as a well-behaved entangled measure for generic noisy quantum systems.

%%%%%%%%%%%%%%%%%%%%%%%%%%%%%%%%%%%%%%%%%%%%%%%%%%%%%%%%%%%%%%%%%%%%%%%%%%%%%%
%%%%%%%%%%%%%%%%%%%%%%%%%%%%%%%%%%%%%%%%%%%%%%%%%%%%%%%%%%%%%%%%%%%%%%%%%%%%%%
%%%%%%%%%%%%%%%%%%%%%%%%%%%%%%%%%%%%%%%%%%%%%%%%%%%%%%%%%%%%%%%%%%%%%%%%%%%%%%
%%%%%%%%%%%%%%%%%%%%%%%%%%%%%%%%%%%%%%%%%%%%%%%%%%%%%%%%%%%%%%%%%%%%%%%%%%%%%%

\section{Discussion}

The main significance of our result is the demonstration that truly asymptotic properties of entanglement can be characterised exactly without the need to consider asymptotic, regularised entanglement measures. This is important from a computational perspective --- as evaluating regularised quantities is typically extremely hard, making it difficult to quantify optimal rates and give benchmarks on feasible protocols --- but also from a theoretical one, as single-letter expressions are much easier to characterise mathematically and can lead to an improved theoretical understanding of the ultimate limitations of entanglement manipulation. 

Our findings also strengthen the connections between the theories of entanglement testing and axiomatic entanglement distillation by giving a twofold meaning to the reverse relative entropy of entanglement $D(\SEP_{A:B}\|\rho_{AB})$, an entropic entanglement measure, 
as both the optimal rate of type~I error in the task of entanglement testing as well as the error exponent of entanglement distillation under non-entangling operations.

These developments all rest on our key technical result, 
the generalised quantum Sanov's theorem. It allowed us to characterise quantum hypothesis testing tasks where one of the hypotheses is very general --- in the case of entanglement theory, it is the whole set of separable states. The result represents an advance in the theory of quantum hypothesis testing, where dealing with such non-i.i.d.\ hypotheses has long been a major obstacle. Indeed, a gap in the original proof of the generalised quantum Stein's lemma was found~\cite{gap,gap-comment}, stemming precisely from the difficulty in composite hypothesis testing; only recently have complete solutions finally appeared~\cite{hayashi_stein,blurring}, and it is precisely one of these techniques --- namely, the blurring method introduced in~\cite{blurring} --- that also allowed us to resolve the generalised quantum Sanov's theorem. 
As we discuss in more detail in Supplementary Note~\ref{app:quantum-fullproof}, the close relationship between the reverse relative entropy and quantum hypothesis testing can be extended beyond the theory of quantum entanglement to more general sets of quantum states.
Further developments of the technical methods to handle such composite, non-i.i.d.\ problems remains one of the major open problems of quantum information theory.

Our evaluation of the error exponent of entanglement distillation, on the other hand, 
provides an alternative angle that is not directly comparable with the original frameworks of entanglement distillation based on asymptotic yield~\cite{Bennett-error-correction}. Nevertheless, it is worth noting that in the latter settings, single-letter solutions were known only in very limited special cases, e.g.\ pure~\cite{Bennett-distillation} or maximally correlated states~\cite{Rains2001}. 
Additionally, simplified and computable asymptotic solutions can sometimes be obtained in `zero-error' entanglement manipulation~\cite{Martin-exact-PPT,computable-cost}, where one imposes that no error can be made whatsoever; such settings are however highly idealised and not directly useful in practice. To the best of our knowledge, our work represents the first solution of an asymptotic entanglement transformation protocol, in the sense of an asymptotic task with %vanishing error
error vanishing in the limit, that admits a single-letter solution for all quantum states.

The precise connection with entanglement distillation here
relies on the choice of the axiomatic framework of non-entangling operations.
Although often considered simpler, such axiomatic approaches have not been shown to lead to single-letter expressions in the asymptotic study of entanglement before. It would certainly be interesting to extend this relation to other sets of free operations, but such an exact correspondence is most likely impossible in the most practical settings such as LOCC due to the difficulties of characterising bound entanglement~\cite{Horodecki-open-problems}. The advantage of the axiomatic approach that we have shown is that it allows for these deep connections, both conceptual and quantitative, to be established. 
Importantly, however, the equality between the reverse relative entropy and the exponent of entanglement testing is completely independent of our 
assumptions on axiomatic entanglement distillation: indeed, the task of entanglement testing does not hinge on any choice of free operations, and only uses the basic structure of quantum measurements and separable states.

A conclusion that one may draw from our approach is that, when dealing with asymptotic protocols, it can be beneficial to change one's way of looking at the problem by focusing on the error exponent rather than the asymptotic yield. This seemingly simple insight opened the door to major developments in our understanding of entanglement manipulation: it is what allowed us to establish the connection between entanglement distillation and quantum Sanov's theorem in entanglement testing, ultimately leading to a complete single-letter solution of the distillable entanglement error exponent. The basic idea can be immediately generalised to a myriad of other settings in the study of quantum and classical information, and we hope that this will lead to many more fruitful connections and developments in quantum information processing.

%%%%%%%%%%%%%%%%%%%%%%%%%%%%%%%%%%%%%%%%%%%%%%%%%%%%%%%%%%%%%%%%%%%%%%%%%%%%%%
%%%%%%%%%%%%%%%%%%%%%%%%%%%%%%%%%%%%%%%%%%%%%%%%%%%%%%%%%%%%%%%%%%%%%%%%%%%%%%
%%%%%%%%%%%%%%%%%%%%%%%%%%%%%%%%%%%%%%%%%%%%%%%%%%%%%%%%%%%%%%%%%%%%%%%%%%%%%%
%%%%%%%%%%%%%%%%%%%%%%%%%%%%%%%%%%%%%%%%%%%%%%%%%%%%%%%%%%%%%%%%%%%%%%%%%%%%%%

\section{Methods}\label{sec:methods}

The aim of this section is to provide intuition for the main technical contributions of our approach as well as the main difficulties we had to avoid on the way to establishing a generalised quantum Sanov's theorem, together with its equivalence with the exponent of entanglement distillation under non-entangling operations. Full technical proofs can be found in the \hyperlink{supp}{Supplementary Information}.

%%%%%%%%%%%%%%%%%%%%%%%%%%%%%%%%%%%%%%%%%%%%%%%%%%%%%%%%%%%%%%%%%%%%%%%%%%%%%%%%

\subsection{Equivalence between entanglement distillation and entanglement testing}

Recall that our main object of study is the Sanov exponent $\Sanov(\rho_{AB}\|\SEP_{A:B})$. 
To express this exponent in a convenient way, we will use the \deff{hypothesis testing relative entropy}~\cite{Wang-Renner,Buscemi2010}
\begin{equation}\begin{aligned}
    D^\ve_H(\sigma \| \rho) \coloneqq -\log_2 \min \lset \Tr M \rho \bar 0 \leq M \leq \id,\; \Tr (\id - M) \sigma \leq \ve \rset.
\end{aligned}\end{equation}
Thinking of $M$ as an arbitrary measurement operator --- precisely, an element of a positive operator-valued measure (POVM) --- we can understand $(M, \id - M)$ as the most general two-outcome measurement that we may use to discriminate between $\rho$ and $\sigma$.
Assigning the first outcome of this measurement to the state $\sigma$ and the second to $\rho$, $D^\ve_H(\sigma\|\rho)$ then quantifies precisely the optimal type~I error exponent of hypothesis testing when the type~II error probability is constrained to be at most $\ve$. 
We can then write
\begin{equation}\begin{aligned}
    \Sanov(\rho_{AB}\| \SEP_{A:B}) \coloneqq \lim_{\ve\to0} \liminf_{n\to\infty} \frac1n D^\ve_H (\SEP_{A^n:B^n} \| \rho_{AB}^{\otimes n}),
\end{aligned}\end{equation}
where we note that the optimised hypothesis testing relative entropy can be written as
\begin{equation}\begin{aligned}
    D^\ve_H(\SEP_{A:B} \| \rho_{AB}) &= \min_{\sigma\in \SEP_{A:B}} D^\ve_H(\sigma_{AB} \| \rho_{AB}) \\
    &= -\log_2 \min \lset \Tr M \rho_{AB} \bar 0 \leq M \leq \id,\; \Tr M \sigma \geq 1-\ve \;\ \forall \sigma \in \SEP_{A:B} \rset ,
\end{aligned}\end{equation}
with the equality on the second line following from von Neumann's minimax theorem~\cite{vN-minimax}.

We remark here that the name `Sanov's theorem' is typically used in the classical information theory literature to refer to a slightly different result, concerned with the probability of observing samples whose empirical distribution (type) lies in a given set of distributions~\cite[Sec.~11.4]{CT}. We follow other works in quantum information theory, starting with~\cite{bjelakovic_sanov}, that used the name of `quantum Sanov's theorem' to refer to a hypothesis testing problem with a composite null hypothesis, similar to (but not directly comparable with) the setting we study here. We specifically use the name of `generalised quantum Sanov's theorem' as our composite hypothesis testing problem involves a general set of non-i.i.d.\ states $\SEP_{A^n:B^n}$, similar to how the `generalised quantum Stein's lemma' is now commonly used to refer to the closely related composite setting introduced in~\cite{Brandao2010}. The same generalised variant of quantum Sanov's theorem was previously studied in~\cite{hayashi_sanov}, where only bounds on the optimal asymptotic exponent were obtained (see also Section~\ref{sec:additivity} below). Yet another quantum variant of Sanov's theorem, more closely related to the original formulation of classical Sanov's theorem based on empirical distributions, was recently proposed by Hayashi~\cite{hayashi_another_sanov} --- this, however, is not directly related to the setting studied here.

The claim of our Lemma~\ref{lem:distillation_sanov_equivalence} is the asymptotic equivalence between this quantity and the error exponent of entanglement distillation, which we recall to be
\begin{equation}\begin{aligned}
    E_{d,\rm err} (\rho_{AB}) = \lim_{m\to\infty} E_{d,\rm err}^{(m)} (\rho_{AB}),
\end{aligned}\end{equation}
where we use $E_{d,\rm err}^{(m)}$ to denote the exponent of distillation under non-entangling operations for a fixed number of $m$ output copies:
\begin{equation}\begin{aligned}\label{eq:m-copy-exponent}
    E_{d,\rm err}^{(m)} (\rho_{AB}) \coloneqq \sup \lset \liminf_{n\to\infty} - \frac1n \log_2 \ve_n \bar \Lambda_n(\rho_{AB}^{\otimes n}) \approx_{\ve_n} \proj{\Phi_+}^{\otimes m},  \; \Lambda_n \in \textrm{NE} \rset,
\end{aligned}\end{equation}
where the supremum is understood to be over all sequences $(\Lambda_n)_{n\in\mathds{N}}$ of  operations satisfying the specified constraints, an in particular belonging to the class of non-entangling (NE) maps.

We now outline the main part of our argument, the details of which can be found in Supplementary Note~\ref{app:lemma1}. The approach bears some technical similarity with a construction used in~\cite{BrandaoPlenio2}, but a crucial difference is that we employ the connection in a rather different way --- instead of the type~II hypothesis testing error, which was the object of study in~\cite{BrandaoPlenio2}, we are interested in the type~I error, and suitable modifications of the proof have to be made to account for this. This shift is what distinguishes our approach and ultimately leads to quantitatively different results.

On the one hand, any distillation protocol $(\Lambda_n)_n$ can be turned into a suitable sequence of tests $(M_n, \id - M_n)$ that perform entanglement testing with a small type~I error probability. Precisely, since $\Lambda_n(\rho^{\otimes n}_{AB}) \approx_{\ve_n} \proj{\Phi_+}^{\otimes m}$, we can construct a measurement by defining $M_n \coloneqq\id - \Lambda_n^\dagger(\proj{\Phi_+}^{\otimes m}) $, where $\Lambda_n^\dagger$ denotes the adjoint map of $\Lambda_n$, representing the action of the channel in the Heisenberg picture. We can then show that the type~II error probability of this test is at most $2^{-m}$, while the type~I error is at most $\ve_n$; this gives a feasible protocol for entanglement testing, leading to the bound
\begin{equation}\label{eq:Sanov-inequality}\begin{aligned}
    \min_{\sigma_n\in \SEP_{A^n:B^n}} D^{2^{-m}}_H( \sigma_n \| \rho_{AB}^{\otimes n}) \geq \log_2 \frac{1}{\Tr M_n \rho_{AB}^{\otimes n}} \geq - \log_2 \ve_n,
\end{aligned}\end{equation}
which is one direction of the claimed relation.

For the other direction, we take any sequence of feasible measurement operators $M_n$ for entanglement testing of $\rho_{AB}^{\otimes n}$
and use them to construct a distillation protocol. This is done through a simple measure-and-prepare procedure: we first perform the measurement $(M_n, \id - M_n)$, and if we obtain the first outcome (that is, we think that the input state is separable), then we simply prepare a suitable separable state; if, however, we obtain the second outcome (that is, we think that the state is $\rho_{AB}^{\otimes n}$) then we prepare our desired target state $\ket{\Phi_+}^{\otimes m}$. In Supplementary Note~\ref{app:lemma1} we show that this constitutes a feasible distillation protocol with error $\ve_n = \Tr M_n \rho^{\otimes n}_{AB}$, giving
\begin{equation}\begin{aligned}
    E_{d,\rm err}^{(m)} (\rho_{AB}) \geq \liminf_{n\to\infty} \frac1n \min_{\sigma_n \in \SEP_{A^n:B^n}} D^{2^{-m}}_H( \sigma_n \| \rho^{\otimes n}_{AB}).
\end{aligned}\end{equation}

Altogether the above arguments show that
\begin{equation}\begin{aligned}
   E_{d,\rm err} (\rho_{AB}) = \lim_{m\to\infty} E_{d,\rm err}^{(m)} (\rho) = \lim_{m\to\infty} \liminf_{n\to\infty} \frac1n \min_{%\sigma \in \SEP_{A : B}
   \sigma_n \in \SEP_{A^n:B^n}} D^{2^{-m}}_H (\sigma_{n} \| \rho_{AB}^{\otimes n}) = \Sanov(\rho_{AB} \| \SEP_{A:B})
\end{aligned}\end{equation}
which establishes an equivalence between the error exponent of entanglement distillation and the Sanov exponent of entanglement testing.

%%%%%%%%%%%%%%%%%%%%%%%%%%%%%%%%%%%%%%%%%%%%%%%%%%%%%%%%%%%%%%%%%%%%%%%%%%%%%%%%

\subsection{On entropies and their (non-)additivity}\label{sec:additivity}

Let us now consider the claim of our main result, namely, that $\Sanov(\rho_{AB}\|\SEP_{A:B}) = D(\SEP_{A:B}\|\rho_{AB})$.

A simple but key observation that helps motivate this claim is that the reverse relative entropy of entanglement is in fact additive on tensor product states~\cite{eisert_reverse}. That is, we have
\begin{equation}\begin{aligned}
    D(\SEP_{AA':BB'} \| \rho_{AB} \otimes \omega_{A'B'}) = D(\SEP_{A:B} \| \rho_{AB}) + D(\SEP_{A':B'} \| \rho_{A'B'})
\end{aligned}\end{equation}
for all states $\rho_{AB}$, $\omega_{A'B'}$. To see this, let $\sigma_{AA'BB'} \in \SEP_{AA':BB'}$ be a minimiser of $D(\SEP_{AA':BB'} \| \rho_{AB} \otimes \omega_{A'B'})$, and use the fact that $\log_2(\rho_{AB} \otimes \omega_{A'B'}) = \log_2\rho_{AB} \otimes \id_{A'B'} + \id_{AB} \otimes \log_2 \omega_{A'B'}$ to get
\begin{equation}\begin{aligned}
    D(\sigma_{AA'BB'} \| \rho_{AB} \otimes \omega_{A'B'}) &= -S(\sigma_{AA'BB'}) + D(\sigma_{AB} \| \rho_{AB}) + D(\sigma_{A'B'} \| \omega_{AB}) + S(\sigma_{AB}) + S(\sigma_{A'B'})\\
&= I(AA':BB')_{\sigma} + D(\sigma_{AB} \| \rho_{AB}) + D(\sigma_{A'B'} \| \omega_{AB})\\
&\geq D(\SEP_{A:B} \| \rho_{AB}) + D(\SEP_{A':B'} \| \omega_{A'B'}),
\end{aligned}\end{equation}
where the last line follows from the non-negativity of the quantum mutual information $I(AA':BB')_{\sigma} = S(\sigma_{AB}) + S(\sigma_{A'B'}) - S(\sigma_{AA'BB'})$~\cite[Theorem~11.6.1]{MARK} and the fact that the reduced systems $\sigma_{AB}$ and $\sigma_{A'B'}$ are always separable for $\sigma_{AA'BB'}$ separable between $AA'$ versus $BB'$. This already tells us that this quantity can help us avoid issues with many-copy formulas, as regularisation is simply not needed for this formula.

While the converse direction $\Sanov(\rho_{AB}\|\SEP_{A:B}) \leq D(\SEP_{A:B}\|\rho_{AB})$ can straightforwardly be concluded from the converse of the standard i.i.d.\,setting, for the other direction, we need to construct a composite hypothesis test that works well enough to distinguish any separable state from $\rho_{AB}^{\otimes n}$. Now, consider first a simpler case: if we were to test against a fixed tensor product state $\sigma_{AB}^{\otimes n}$ instead of the whole set of separable states, the quantum Stein's lemma~\cite{Hiai1991} would immediately tell us that $D(\sigma_{AB}\|\rho_{AB})$ is an achievable error exponent. In more detail, the modern and arguably simplest approach for proving the achievability part of i.i.d.\ quantum hypothesis testing goes through the family of \deff{Petz--R\'enyi divergences} $D_{\alpha}(\sigma\|\rho) = \frac{1}{\alpha-1} \log_2 \Tr \sigma^{\alpha} \rho^{1-\alpha}$~\cite{PetzRenyi}, which leads via Audenaert et al.'s inequality~\cite{Audenaert2008} to
\begin{equation}\begin{aligned}
    \Sanov(\rho_{AB}\|\sigma_{AB}) \geq \lim_{\alpha\to 1^{-}} \lim_{n\to\infty} \frac1n D_\alpha(\sigma^{\otimes n}_{AB} \| \rho_{AB}^{\otimes n})=D(\sigma_{AB}\|\rho_{AB}).
\end{aligned}\end{equation}
Here, the crucial point in the derivation is that $D_\alpha(\sigma^{\otimes n}_{AB} \| \rho^{\otimes n}_{AB}) = n D_\alpha(\sigma_{AB} \| \rho_{AB})$ is an additive bound on the error probability that becomes asymptotically tight with $\lim_{\alpha\to 1^{-}} D_\alpha(\sigma_{AB}\|\rho_{AB}) = D(\sigma_{AB}\|\rho_{AB})$~\cite{PetzRenyi}. One might then wonder: could these state-of-the-art quantum hypothesis testing methods also be used for the generalised Sanov's theorem, where the fixed state $\sigma_{AB}^{\otimes n}$ is replaced with the set of states $\SEP_{A:B}$?

Indeed, this approach was recently initiated in~\cite{hayashi_sanov} and consequently the question was raised if the corresponding Petz--R\'enyi divergences of entanglement $D_\alpha(\SEP_{A:B} \| \rho_{AB})$ become additive. Perhaps surprisingly, however, we can show that in contrast to aforementioned special case of $\alpha=1$, the divergences are \emph{not} additive for $\alpha\in(0,1)$. Namely, by taking the antisymmetric Werner state $\rho_a$ as an example, it can be shown that~\cite{Rubboli2022} (see Supplementary Note~\ref{app:additivity-violation})
\begin{equation}\begin{aligned}
    D_\alpha(\SEP_{%A:B
    AA':BB'} \| \rho_a \otimes \rho_a) < 2 \, D_\alpha(\SEP_{A:B} \| \rho_a).
\end{aligned}\end{equation}
This non-additivity means that, in order to characterise the generalised Sanov exponent, we would really need to work with the regularised quantities $\lim_{n\to\infty} \frac1n D_\alpha(\SEP_{%A:B
A^n:B^n} \| \rho_{AB}^{\otimes n})$. Unfortunately, this prevents us from being able to use the known continuity results for the Petz--R\'enyi divergences (cf.~\cite{hayashi_sanov,gap}) and makes it difficult to follow the approach of Ref.~\cite{hayashi_sanov} to establish a connection with the reverse relative entropy $D(\SEP_{A:B} \| \rho_{AB})$, which is our goal. As such, we need to overcome this technical bottleneck in known proof techniques and develop an approach that will allow us to resolve the case of the generalised Sanov's theorem.

%%%%%%%%%%%%%%%%%%%%%%%%%%%%%%%%%%%%%%%%%%%%%%%%%%%%%%%%%%%%%%%%%%%%%%%%%%%%%%%%

\subsection{Axiomatic approach} \label{sec:axioms}

Recall that our main goal is to characterise the asymptotic error exponent of the task of entanglement testing, that is, distinguishing a sequence of states $\rho^{\otimes n}_{AB}$ from the set of separable states $\SEP_{A:B}$. However, it will be useful to forget about separable states for now and try to understand the set in an axiomatic manner, using only some of its basic properties. Such an axiomatic approach is due to the influential works of Brand\~ao and Plenio~\cite{Brandao2010} in connection with the generalised quantum Stein's lemma (cf.\ the recent works~\cite{gap,hayashi_stein,blurring}).

This has a dual purpose: on the one hand, it will immediately allow us to apply many of our results to quantum resource theories beyond entanglement~\cite{Brandao-Gour,RT-review}; more importantly, however, it will actually also be a crucial ingredient in our proof of the generalised quantum Sanov's theorem for entanglement theory itself.

To this end, let us work out a list of abstract mathematical properties obeyed by the set of separable states as well as by other relevant sets of free states. 
The first five of these properties have been proposed by Brand\~{a}o and Plenio~\cite{Brandao2010}, and are sometimes known as the \deff{Brand\~{a}o--Plenio axioms}. To state them, we consider some quantum system with Hilbert space $\HH$, and a sequence $(\FF_n)_n$ of sets $\FF_n \subseteq \D\big(\HH^{\otimes n}\big)$ of density operators on $n$ copies of $\HH$. States in $\FF_n$ are conventionally referred to as \deff{free states}, and a state that is not free is called \deff{resourceful}. We posit the following:
\begin{enumerate}
\item For each $n$, $\FF_n$ is a convex and closed subset of states.
\item $\FF_1$ contains some full-rank state $\sigma_0>0$, e.g.\ the maximally mixed state.
\item The family $(\FF_n)_n$ is closed under partial traces: tracing out any number of the $n$ subsystems cannot make a free state resourceful.
\item The family $(\FF_n)_n$ is closed under tensor products: the tensor product of any two free states is also free.
\item Each $\FF_n$ is closed under permutations: permuting any of the $n$ subsystems cannot create resource from a free state.
\end{enumerate}

Picking $\HH =\HH_{AB} = \HH_A \otimes \HH_B$ as a bipartite Hilbert space and taking $\FF_n = \SEP_{A^n:B^n}$ as the set of separable states on $\HH_A^{\otimes n} \otimes \HH_B^{\otimes n}$ (with all $A$ systems on one side and all $B$ systems on the other) clearly satisfies all of the above Axioms~1--5. However, these axioms are also obeyed by many other sets of free states, corresponding to different quantum resource theories~\cite{RT-review}. 
All of our definitions can be immediately extended to such sets, with the conjectured generalised Sanov's theorem now asking about whether 
\bb
\Sanov(\rho\|\FF) \stackrel{?}{=} D(\FF\|\rho) = \min_{\sigma \in \FF_1} D(\sigma \| \rho)\, .
\ee

While the above natural set of axioms indeed turns out to be sufficient to prove the generalised quantum Stein's lemma~\cite{hayashi_stein,blurring}, it is noteworthy that the axioms are {\it not} sufficient for the generalised Sanov's theorem. In Supplementary Note~\ref{app:counterexample} we give a classical example that fulfils Axioms 1--5, while anyway having
\begin{equation}\begin{aligned}
    \Sanov(\rho\|\FF)=0<\infty=D(\FF\|\rho)
    \label{counterexample_methods}
\end{aligned}\end{equation}
for some (classical) state $\rho$.
To remedy this problem, we need to introduce a further assumption about the sets $\FF_n$. We first consider the following additional axiom:
\begin{enumerate} \setcounter{enumi}{5}
\item The \deff{regularised relative entropy of resource} is faithful, i.e.\ for all resourceful $\rho\in \D(\HH)$ with $\rho\notin \FF_1$, we have that $\rel{D^\infty}{\rho}{\FF} \coloneqq \lim_{n\to\infty} \frac1n\, \rel{D}{\rho^{\otimes n}}{\FF_n} > 0$.
\end{enumerate}

We note here that this concerns the conventional definition of the relative entropy $D^\infty(\rho\|\FF)$ rather than the `reverse' variant $D(\FF\|\rho)$. 
This rather non-trivial property is obeyed by many quantum resources encountered in practice --- for instance, in the case of separable states, it has been proved to hold independently by Brand\~{a}o and Plenio~\cite[Corollary~II.2]{Brandao2010} and by Piani~\cite{Piani2009}. It is however not universal, and the counterexample in~\eqref{counterexample_methods} violates this axiom.
Indeed, Axiom~6 turns out to be sufficient, together with Axioms~1--5, to imply the generalised Sanov theorem in the \emph{fully classical} case --- that is, when instead of general quantum states, we restrict ourselves to classical probability distributions (commuting states). However, 
the axiom does not seem to suffice to establish the quantum extension of this finding. To derive the quantum result we will instead need an 
axiom that is seemingly rather different from Axiom~6, but actually closely related to it. This new Axiom~$6^\prime$ is concerned with how measurements close to the identity act on the set of free states:

\begin{enumerate}
\item[$6^\prime$.] For some choice of numbers $r_n\in (0,1]$, the sequence $(\mathds{M}_n)_n$ of sets of measurements
\bb
\mathds{M}_n \coloneqq \left\{ \left( \frac{\id^{\otimes n} + X_n}{2},\, \frac{\id^{\otimes n} - X_n}{2}\right):\ X_n = X_n^\dag \in \LL\big(\HH^{\otimes n}\big),\ \|X_n\|_\infty \leq r_n \right\} ,
\label{axiom_6_prime}
\ee
where $\|\cdot\|_\infty$ denotes the operator norm, is \deff{compatible} with $(\FF_n)_n$~\cite{Piani2009, brandao_adversarial}. This means that whenever a measurement $\MM\in \mathds{M}_n$ is performed on the first $n$ sub-systems of a free state $\sigma\in \FF_{n+m}$, the resulting post-measurement state on the last $m$ sub-systems is also a free state in $\FF_m$, for each one of the two possible outcomes of $\MM$.
\end{enumerate}

Aside from the fact that both are obeyed by the set of separable states, as we show in Supplementary Note~\ref{app:quantum-fullproof}, it is a priori unclear why Axiom~$6^\prime$ is in any way related to Axiom~6. The connection between the two follows from the work of Piani~\cite{Piani2009}, who proved that Axiom~6 is satisfied whenever one can find a tomographically complete set of measurements that is compatible with the free states; the sets $\mathds{M}_n$ in~\eqref{axiom_6_prime} are, in fact, tomographically complete, because the POVM operators $(\id^{\otimes n} + X_n)/2$ span the space of Hermitian operators on $\HH^{\otimes n}$. It turns out that Axiom~6' is precisely what we need to prove the generalised quantum Sanov's theorem. 

In the following, our proof strategy will be to:
\begin{enumerate}[label=(\roman*)]
\item Derive the generalised Sanov's theorem for the commutative case of sets of classical states $\FF_n$ that respect Axioms 1--6 (Sections~\ref{sec:blurring} -- \ref{sec:classical-sanov}). 
\item Suitably choose measurement operations for lifting the result to the non-commutative (quantum) setting --- assuming Axioms~1--5 as well as Axiom~$6^\prime$
(Section~\ref{sec:quantum-lifting}).
\end{enumerate}

%%%%%%%%%%%%%%%%%%%%%%%%%%%%%%%%%%%%%%%%%%%%%%%%%%%%%%%%%%%%%%%%%%%%%%%%%%%%%%%%

\subsection{Max-relative entropy and the blurring lemma}
\label{sec:blurring}

Instead of directly working with the hypothesis testing relative entropy $D^\ve_H(\sigma \| \rho) $, our proofs start with a dual formulation in terms of the \deff{smooth max-relative entropy}. It is defined as 
\begin{equation}\begin{aligned}
\label{eq:smooth-Dmax}
    D^\ve_{\max}(\sigma \| \rho)\coloneqq\log_2\inf\left\{\mu\in\mathds{R}\,\middle|\,\tilde{\sigma}\leq\mu\rho,\; \frac12\|\tilde{\sigma}-\sigma\|_1 \leq \ve\right\},
\end{aligned}\end{equation}
where we choose to measure $\ve$-closeness of states in terms of the trace distance. The smooth max-relative entropy enjoys, for any $\delta>0$ small enough, the duality relation~\cite{Tomamichel2013,Anshu2019,weakstrong}
\begin{equation}\begin{aligned}
D_{\max}^{\sqrt{1-\e}}(\sigma\|\rho) \leq D_H^\e(\sigma\|\rho) \leq D_{\max}^{1-\e-\delta}(\sigma\|\rho) + \log_2 \frac{1}{\delta}\, ,
\end{aligned}\end{equation}
which implies that we can essentially replace the hypothesis testing relative entropy with the smooth max-relative entropy, up to suitably modifying the smoothing parameter.

The generalised Sanov's theorem for general sets of states $\FF$ then becomes equivalent to 
\begin{equation}\begin{aligned}
\lim_{n\to\infty} \frac1n\, \rel{D_{\max}^\e}{\FF_n}{\rho^{\otimes n}} \eqt{?} D(\FF \| \rho)\qquad\forall\ \e\in (0,1),
\label{eq:sanov-dual}
\end{aligned}\end{equation}
and, using standard entropic arguments, it is not too difficult to show the special case $\ve \to 0$.
Further, since the function on the left-hand side of~\eqref{eq:sanov-dual} is monotonically non-increasing in $\e$, we immediately have that $\lim_{n\to\infty} \frac1n\, \rel{D_{\max}^\e}{\FF_n}{\rho^{\otimes n}} \leq D(\FF \| \rho)$; % and 
consequently, it remains to prove the opposite direction. By contradiction, our goal will be to show for the classical case that
\begin{equation}\begin{aligned}
\frac1n\, \rel{D_{\max}^\e}{\FF_n}{p^{\otimes n}} \tendsn{} \lambda < D(\FF \| p),
\label{eq:dual-approach}
\end{aligned}\end{equation}
under the assumption of Axioms 1--6.

The crucial tool for working with classical non-i.i.d.\ distributions in $\FF_n$ is the {\it blurring lemma} recently established by one of us~\cite{blurring}. Namely, for any pair of positive integers $n,m\in \N_+$ one defines the \deff{blurring map} $B_{n,m}:\R^{\XX^n} \to \R^{\XX^n}$ that transforms any input probability distribution by adding $m$ symbols of each kind $x\in \XX$, shuffling the resulting sequence, and discarding $m$ symbols.

To better understand the action of this map it is useful to recall some concepts from the theory of types~\cite{CSISZAR-KOERNER}. The \deff{type} of a sequence $x^n$ of $n$ symbols from a finite alphabet $\XX$, denoted $t_{x^n}$, is simply the empirical probability distribution of the symbols of $\XX$ found in $x^n$: in other words, $t_{x^n}(x) = \frac1n\,N(x|x^n)$, where $N(x|x^n)$ denotes the number of times $x$ appears in the sequence $x^n$. The set of $n$-types is denoted as $\mathcal{T}_n$ (we regard the alphabet as fixed). A standard counting argument reveals that the number of types is only polynomial in $n$, unlike the number of possible sequences $x^n$, which is exponential. More precisely, we have the estimate $|\mathcal{T}_n| \leq (n+1)^{|\XX|-1}$. This means that the size of the \deff{type classes}, i.e.\ the set of sequences of a given type, is generically exponential. In what follows, we will indicate with $T_{n,t}$ the type class associated with a given $n$-type $t\in \mathcal{T}_n$. Clearly, the union of all the type classes reproduces the set of all sequences.

An important observation for us is that any probability distribution $p_n$ on $\XX^n$ that is invariant under permutations, meaning that the probability of two sequences that differ only by the order of the symbols is the same, can be understood in the space of types rather than in the space of sequences. In other words, such a probability distribution is uniquely specified by the values $p_n(T_{n,t})$ it assigns to each type class. The essence of the {\it blurring lemma} as stated below in~\eqref{blurring_lemma_eq-maintext} is the analysis of the effect that the above blurring map has in type space. Since blurring perturbs the type of the input sequence a little in a random way, this action amounts to an effective `smearing' of the input probability distribution in type space: a little of the probability weight that every type class carries `spills over' to neighboring type classes.

More quantitatively, the classical one-shot blurring lemma from~\cite[Lemma~9%8
]{blurring} then tells us that for $\delta,\eta>0$ and $p_n, q_n \in \PP\big(\XX^n\big)$ permutationally symmetric with $p_n\left(\bigcup\nolimits_{t\in \mathcal{T}_n:\ \|s - t\|_\infty \leq \delta} T_{n,t} \right) \geq 1 - \eta$, we have
\bb
\rel{D_{\max}^{\eta}}{p_n}{B_{n,m}(q_n)} \leq \log_2 \frac{1}{q_n\left(\bigcup\nolimits_{t\in \mathcal{T}_n:\ \|s - t\|_\infty \leq \delta} T_{n,t} \right)} + n g\!\left( \left(2\delta + \tfrac1n\right) |\XX| \right),
\label{blurring_lemma_eq-maintext}
\ee
for $m=\ceil{2\delta n}$ and with the fudge function $g(x) \coloneqq (x+1)\log_2(x+1) - x\log_2 x$. We refer to Supplementary Note~\ref{app:sanov-classical} for more and further to~\cite[Lemma~9]{blurring} for a detailed technical derivation.

%%%%%%%%%%%%%%%%%%%%%%%%%%%%%%%%%%%%%%%%%%%%%%%%%%%%%%%%%%%%%%%%%%%%%%%%%%%%%%%%

\subsection{Classical generalised Sanov's theorem}
\label{sec:classical-sanov}

We will now attempt to give an intuitive but mathematically non-rigorous description of the proof of the classical version of Sanov's theorem, which states that $\sanov(p\|\FF) = D(\FF \| p)$ under Axioms 1--6 of Section~\ref{sec:axioms}. Following Section~\ref{sec:blurring}, and in particular~\eqref{eq:dual-approach}, by contradiction  we can then construct two sequences of $\e$-close probability distributions $q'_n, q_n$, with $q_n \in \FF_n$, such that $q'_n \leq 2^{n\lambda} p^{\otimes n}$.

To make sense of this inequality, we have to evaluate it on a cleverly chosen set. The key tool to do that is a simple lemma by Sanov --- sometimes also known, alas, as Sanov's theorem. This tells us that~\cite{Sanov57}
\bb
p^{\otimes n} \big(\{x^n\!:\, t_{x^n}\in \AA\} \big) \leq \mathrm{poly}(n)\, 2^{-n D(\AA\|p)}
\ee
for any set of probability distributions $\AA$. It is clear what to do now: by choosing $\AA = \FF_1$, we get on the right-hand side the exponential factor $2^{n \left(\lambda - D(\FF\|p)\right)}$, which goes to zero sufficiently fast to overcome the polynomial. Thus we have that $q'_n\big(\{x^n\!:\, t_{x^n}\in \FF_1 \} \big) \ctends{}{n\to\infty}{1.5pt} 0$; in other words, a sequence drawn according to $q'_n$ has asymptotically vanishing probability of having a free type --- that is, a type in $\FF_1$. 

This at first sight may seem good, but it should make us immediately suspicious, because $q'_n$ is supposed to be $\e$-close to a \emph{free} probability distribution $q_n\in \FF_n$. It thus holds that $q_n\big(\{x^n\!:\, t_{x^n}\notin \FF_1 \} \big) \gtrsim 1-\e$ asymptotically, that is, sequences drawn with respect to the free probability distribution $q_n$ have non-free type with asymptotically non-vanishing probability. 

Let us elaborate on this intuition. Since there are only a polynomial number of types, the above reasoning shows that there exists a non-free type $s\notin %\FF
\FF_1$ such that $q_n(T_{n,s}) \gtrsim \frac{1-\e}{\mathrm{poly}(n)}$. Of course, $s$ might depend on $n$, but for now the reader will have to trust us that up to extracting converging sub-sequences we can circumvent this obstacle (see Supplementary Note~\ref{app:sanov-classical}). So, now we have a \emph{free} probability distribution $q_n$ that has substantial (i.e.\ only polynomially vanishing) weight on a certain type class $T_{n,s}$ corresponding to a \emph{non-free} type $s\notin %\FF
\FF_1$.

Enter blurring. By blurring $q_n$, we can make it have substantial weight not only on $T_{n,s}$ but on all type classes $T_{n,t}$ with $t\approx s$. This is precisely what blurring does: it spreads around weight among close type classes. Hence, we will have that $\tilde{q}_n(T_{n,t}) \gtrsim \frac{1-\e}{\mathrm{poly}(n)\, 2^{\alpha n}}$ for all $t\approx s$, where $\alpha >0$ is a very small exponential price we have to pay to blur $q_n$ into $\tilde{q}_n$. For a more quantitative understanding of this phenomenon we refer the reader to~\eqref{blurring_lemma_eq-maintext}, as well as to the full technical proof in Supplementary Note~\ref{app:sanov-classical}.

Now, since $\tilde{q}_n$ has substantial weight in a whole neighbourhood of types around $s$, it becomes 
ideally suited to dominate probability distributions that are very concentrated there. There is an obvious candidate for one such distribution, and it is $s^{\otimes n}$ itself! What this reasoning will eventually show is that
\bb
s^{\otimes n} \lesssim \frac{\mathrm{poly}(n)\, 2^{\alpha n}}{1-\e}\, \tilde{q}_n\, ,
\ee
where in $\lesssim$ we have swept under the carpet the fact that $s^{\otimes n}$ needs to be deprived of its exponentially vanishing non-typical tails in order for this entry-wise inequality to work.

Now we are basically done. Since blurring does not increase the max-relative entropy of resource significantly, it is possible to find a free probability distribution $r_n\in \FF_n$ such that $\tilde{q}_n \leq 2^{\beta n} r_n$ for some small $\beta>0$. Chaining the inequalities will give us
\bb
s^{\otimes n} \lesssim \frac{\mathrm{poly}(n)\, 2^{(\alpha+\beta) n}}{1-\e}\, r_n\, ,
\ee
which, by the asymptotic equipartition  property expressed as~\cite{Brandao2010,Datta-alias}
\bb
 \lim_{\ve \to 0} \liminf_{n\to\infty} \frac1n\, \rel{D^\ve_{\max}}{s^{\otimes n}}{\FF_n} = D^\infty(s\|\FF)\, ,
\ee
eventually implies that $D^\infty(s\|\FF) = 0$. This is in direct contradiction with Axiom~6, and this contradiction will complete the proof.

A full technical proof along the argument sketched above is given in Supplementary Note~\ref{app:sanov-classical}.

As a by-product of our argument, it is actually possible to design a simple explicit test that is asymptotically nearly optimal for the hypothesis task at hand. Namely, given a string of symbols $x^n\in \XX^n$ and some small tolerance $\zeta>0$:
\begin{itemize}
    \item if $\frac12 \left\|t_{x^n} - \FF_1\right\|_1 \leq \zeta$, where $t_{x^n}$ is the type of $x^n$, then we guess that the underlying probability distribution is free;
    \item otherwise, we guess that it is $p$.
\end{itemize}
This test can be shown to achieve an asymptotically vanishing type~II error probability in the limit when $n\to\infty$, and a type~I error exponent that is approximately equal to the reverse relative entropy $D(\FF\|p)$ if $\zeta>0$ is sufficiently small.

%%%%%%%%%%%%%%%%%%%%%%%%%%%%%%%%%%%%%%%%%%%%%%%%%%%%%%%%%%%%%%%%%%%%%%%%%%%%%%%%

\subsection{Lifting from classical to quantum}
\label{sec:quantum-lifting}

Once a solution of the classical problem has been established, we need to extend it to quantum systems. To this end, a standard strategy is to measure: indeed, quantum measurements map quantum states to classical probability distributions, so we can use them to bring the problem to a form we can tackle with our classical result.

In the context of hypothesis testing, and, more specifically, resource testing --- where, remember, we have to distinguish between a state $\rho^{\otimes n}$ and a generic free state $\sigma_n\in \FF_n$ --- a possible strategy could be the following: we could choose a suitable measurement $\MM$ with outcomes labelled by $x\in \XX$, with $\XX$ some finite alphabet, and carry it out on every copy of the system we have been given. By doing so, we map the problem into a classical resource testing problem, where we have to distinguish between $p^{\otimes n}$, with $p \coloneqq \MM(\rho)$ being the probability distribution obtained by measuring $\rho$, and a generic free distribution $q_n \coloneqq \MM^{\otimes n}(\sigma_n)$, with $\sigma_n\in \FF_n$. 

Calling $\widetilde{\FF}_n$ the set of $q_n$'s obtained in this way, we can now try to apply the classical version of our generalised Sanov's theorem to this set. To this end, one simply needs to verify Axioms~1--6 in Section~\ref{sec:axioms} for this sequence of sets $\big(\widetilde{\FF}_n\big)_n$. While Axioms~1--5 are relatively straightforwardly checked, verifying Axiom~6 requires a more technically complex attack. We solve this problem 
by showing that Axiom~$6^\prime$ at the \emph{quantum} level directly implies Axiom~6 for the \emph{classical} sets $\widetilde{\FF}_n$ (see Theorem~\ref{quantum_Sanov_thm} in Supplementary Note~\ref{app:quantum-fullproof} for details). 
Entanglement theory satisfies also Axiom~$6^\prime$ (as proven in Corollary~\ref{cor:sep-sanov}), so we can proceed. Applying our classical generalised Sanov's theorem, we know that this strategy yields a type~I error decay equal to
\bb
\Sanov(\rho \| \FF) \geq \min_{q \in \widetilde{\FF}_1} D(q \| p) = \min_{\sigma\in \FF_1} D\big(\MM(\sigma) \,\big\|\, \MM(\rho)\big)\, .
\ee
Note that the first inequality holds because what we described is a physically possible strategy, so it yields a lower bound on the Sanov exponent. We can now further optimise over the measurement $\MM$, which yields the bound
\bb
\Sanov(\rho \| \FF) \geq \min_{\sigma\in \FF_1} D^{\mathds{ALL}}(\sigma \|\rho)\, .
\ee
Here, $D^{\mathds{ALL}}(\sigma \|\rho)$ is the measured relative entropy~\cite{Donald1986} between $\sigma$ and $\rho$, optimised over all possible measurements. 

However, we are not done yet, because the above expression is in general not equal to $\min_{\sigma\in \FF_1} D(\sigma\|\rho) = D(\FF\|\rho)$ 
due to the action of the measurement, which in general decreases the relative entropy distance between states~\cite{Berta2017}. To fix this remaining issue, we adopt a {\it double blocking} procedure. In practice, before measuring we group the $n$ systems we have at our disposal into groups of $k$ systems each (discarding the rest, if any); here, $k$ is a fixed constant. By doing so we obtain that
\bb
\Sanov(\rho \| \FF) \geq \min_{\sigma\in \FF_1} \frac1k\, D^{\mathds{ALL}}\big(\sigma_k \,\big\|\, \rho^{\otimes k}\big)\, .
\ee
Optimising over $k$ gives the main claim, because, by the entropic pinching inequality~\cite[Lemma~4.11]{TOMAMICHEL}, the right-hand side converges to $D(\FF\|\rho)$ as $k\to\infty$, as claimed. Similarly to the classical case, also in the quantum one it is possible to describe a nearly optimal test (i.e.\ a measurement) for resource testing, although in a 
less explicit way due to the involved lifting procedure.

The full proof details are given in Supplementary Note~\ref{app:quantum-fullproof}.

%%%%%%%%%%%%%%%%%%%%%%%%%%%%%%%%%%%%%%%%%%%%%%%%%%%%%%%%%%%%%%%%%%%%%%%%%%%%%%
%%%%%%%%%%%%%%%%%%%%%%%%%%%%%%%%%%%%%%%%%%%%%%%%%%%%%%%%%%%%%%%%%%%%%%%%%%%%%%
%%%%%%%%%%%%%%%%%%%%%%%%%%%%%%%%%%%%%%%%%%%%%%%%%%%%%%%%%%%%%%%%%%%%%%%%%%%%%%
%%%%%%%%%%%%%%%%%%%%%%%%%%%%%%%%%%%%%%%%%%%%%%%%%%%%%%%%%%%%%%%%%%%%%%%%%%%%%%

\section*{Acknowledgments}

We thank Hao-Chung Cheng, Filippo Girardi, Zhiwen Lin, and Roberto Rubboli for helpful discussions. MB thanks Masahito Hayashi for presenting his work~\cite{hayashi_sanov} on a composite version of quantum Sanov's theorem when visiting the Institute for Quantum Information RWTH Aachen in November 2023.

L.L. acknowledges financial support from MIUR (Ministero dell'Istruzione, dell'Universit\`{a} e della Ricerca) through the project `Dipartimenti di Eccellenza 2023--2027' of the `Classe di Scienze' department at the Scuola Normale Superiore, and from the European Union under the European Research Council (ERC Grant Agreement No.\ 101165230). M.B. acknowledges funding by the European Research Council (ERC Grant Agreement No.\ 948139) and support from the Excellence Cluster - Matter and Light for Quantum Computing (ML4Q). B.R. acknowledges the support of the Japan Society for the Promotion of Science (JSPS) KAKENHI grant no.\ 24K16984 and  Japan Science and Technology Agency (JST) PRESTO grant no.\ JPMJPR25FB. L.L. and M.B. thank RIKEN for their hospitality while part of this project was carried out.

%%%%%%%%%%%%%%%%%%%%%%%%%%%%%%%%%%%%%%%%%%%%%%%%%%%%%%%%%%%%%%%%%%%%%%%%%%%%%%
%%%%%%%%%%%%%%%%%%%%%%%%%%%%%%%%%%%%%%%%%%%%%%%%%%%%%%%%%%%%%%%%%%%%%%%%%%%%%%
%%%%%%%%%%%%%%%%%%%%%%%%%%%%%%%%%%%%%%%%%%%%%%%%%%%%%%%%%%%%%%%%%%%%%%%%%%%%%%
%%%%%%%%%%%%%%%%%%%%%%%%%%%%%%%%%%%%%%%%%%%%%%%%%%%%%%%%%%%%%%%%%%%%%%%%%%%%%%

\begin{filecontents}{added.bib}
@misc{zhiwen,
author = {Zhiwen Lin},
year = {2024},
note = {{p}rivate communication}
}

@misc{rubboli,
author = {Roberto Rubboli},
year = {2024},
note = {{p}rivate communication}
}
\end{filecontents}

\bibliographystyle{apsc}
\bibliography{biblio,added}

%%%%%%%%%%%%%%%%%%%%%%%%%%%%%%%%%%%%%%%%%%%%%%%%%%%%%%%%%%%%%%%%%%%%%%%%%%%%%%
%%%%%%%%%%%%%%%%%%%%%%%%%%%%%%%%%%%%%%%%%%%%%%%%%%%%%%%%%%%%%%%%%%%%%%%%%%%%%%
%%%%%%%%%%%%%%%%%%%%%%%%%%%%%%%%%%%%%%%%%%%%%%%%%%%%%%%%%%%%%%%%%%%%%%%%%%%%%%
%%%%%%%%%%%%%%%%%%%%%%%%%%%%%%%%%%%%%%%%%%%%%%%%%%%%%%%%%%%%%%%%%%%%%%%%%%%%%%

\clearpage

\appendix

\let\addcontentsline\oldaddcontentsline
\hypertarget{supp}{}
\begin{center}
%\vspace*{\baselineskip}
{\large \textbf{--- Supplementary Information ---\\[8pt] Asymptotic quantification of entanglement with a single copy }}
\end{center}

\tableofcontents

\section{Proof of Lemma~\ref{lem:distillation_sanov_equivalence}}\label{app:lemma1}

We recall the definitions of the two relevant quantities. On the one hand, the hypothesis testing relative entropy of entanglement testing is
\begin{equation}\begin{aligned}
    D^\ve_H(\SEP_{A:B} \| \rho_{AB}) &= \min_{\sigma_{AB}\in \SEP_{A:B}} D^\ve_H(\sigma_{AB} \| \rho_{AB}) \\
    &= -\log_2 \max_{\sigma_{AB} \in \SEP_{A:B}} \min \lset \Tr M \rho_{AB} \bar 0 \leq M \leq \id,\; \Tr M \sigma \geq 1-\ve  \rset\\
    &= -\log_2 \min \lset \Tr M \rho_{AB} \bar 0 \leq M \leq \id,\; \Tr M \sigma \geq 1-\ve \; \forall \sigma \in \SEP_{A:B} \rset,
\end{aligned}\end{equation}
where the equality between the second and third line follows from von Neumann's minimax theorem~\cite{vN-minimax}. The Sanov exponent is then given by
\bb\label{eq:sanov_exp1}
\Sanov(\rho_{AB} \| \SEP_{A:B}) = \lim_{\ve\to 0} \liminf_{n\to\infty}  \frac1n D^\ve_H(\SEP_{A^n:B^n} \| \rho^{\otimes n}_{AB}).
\ee

On the other hand, the distillation exponent for $m$ copies of the maximally entangled state $\ket{\Phi_+}$~is
\begin{equation}\begin{aligned}\label{eq:m-copy-exponent_app}
    E_{d,\rm err}^{(m)} (\rho_{AB}) = \sup \lset \liminf_{n\to\infty} - \frac1n \log_2 \ve_n \bar F\big( \Lambda_n(\rho_{AB}^{\otimes n}) ,\, \proj{\Phi_m}\big) \geq 1-\ve_n, \; \Lambda_n \in \textrm{NE} \rset,
\end{aligned}\end{equation}
where  $\ket{\Phi_m} \coloneqq \ket{\Phi_+}^{\otimes m}$ and $F(\rho,\sigma) = \left\|\sqrt{\rho}{\sqrt{\vphantom{\rho}\sigma}}\right\|_1^2$ denotes the fidelity. The asymptotic error exponent of entanglement distillation is then
\bb\label{eq:sanov_exp2}
 E_{d,\rm err} (\rho_{AB}) = \lim_{m\to\infty} E_{d,\rm err}^{(m)} (\rho_{AB}).
\ee
At this point, two remarks are in order. First, we should comment about the terminology `error exponent'. Error exponents are typically defined slightly differently in traditional Shannon information theory (see \cite{Nagaoka2006} for a quantum example): namely, in common tasks one often looks for the optimal exponent of error decay under a fixed rate $r\coloneqq\frac{n}{m(n)}$, whereas we do not fix the precise number $m$ of maximally entangled copies to obtain, only requiring that $m\to\infty$ as $n\to\infty$.

Second, one could alternatively employ the trace distance $\frac12\left\|\Lambda_n(\rho_{AB}^{\otimes n}) - \proj{\Phi_m} \right\|_1$ as an error metric in Eq.~\eqref{eq:m-copy-exponent_app}. The choice between the (in)fidelity and the trace distance is immaterial in the task of distillation, and all results will be exactly the same.

{
\renewcommand{\thethm}{1}
\begin{lemma}
The asymptotic error exponent of entanglement distillation under non-entangling operations equals the Sanov error exponent of hypothesis testing of all separable states $\SEP_{A:B}$ against $\rho_{AB}$:
\begin{equation}\begin{aligned}
    E_{d,\rm err} (\rho_{AB}) = \Sanov(\rho_{AB} \| \SEP_{A:B}).
\end{aligned}\end{equation}
\end{lemma}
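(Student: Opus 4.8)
The plan is to prove the two inequalities $\Sanov(\rho_{AB}\|\SEP_{A:B})\geq E_{d,\rm err}(\rho_{AB})$ and $\Sanov(\rho_{AB}\|\SEP_{A:B})\leq E_{d,\rm err}(\rho_{AB})$ separately, each by an explicit reduction between entanglement distillation protocols and hypothesis tests of $\SEP$ against $\rho_{AB}^{\otimes n}$. The only substantive ingredient beyond bookkeeping with limits is the elementary fact that $\ket{\Phi_m}=\ket{\Phi_+}^{\otimes m}$ is a maximally entangled state of Schmidt rank $2^m$, so that $\langle\Phi_m|\omega|\Phi_m\rangle\leq 2^{-m}$ for every separable $\omega$.

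For $\Sanov\geq E_{d,\rm err}$ I would convert a distillation protocol into a test. Given non-entangling $(\Lambda_n)_n$ with $F\big(\Lambda_n(\rho_{AB}^{\otimes n}),\proj{\Phi_m}\big)\geq 1-\ve_n$, set $M_n\coloneqq\id-\Lambda_n^\dagger(\proj{\Phi_m})$; complete positivity and unitality of the adjoint give $0\leq M_n\leq\id$. Since $\Lambda_n(\sigma_n)\in\SEP$ for every $\sigma_n\in\SEP_{A^n:B^n}$, the overlap bound yields $\Tr M_n\sigma_n=1-\Tr\big(\proj{\Phi_m}\Lambda_n(\sigma_n)\big)\geq 1-2^{-m}$, while $\Tr M_n\rho_{AB}^{\otimes n}=1-\langle\Phi_m|\Lambda_n(\rho_{AB}^{\otimes n})|\Phi_m\rangle\leq\ve_n$. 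Hence $M_n$ is feasible in the optimisation defining $D^{2^{-m}}_H(\sigma_n\|\rho_{AB}^{\otimes n})$ for every $\sigma_n\in\SEP$, so $D^{2^{-m}}_H(\SEP_{A^n:B^n}\|\rho_{AB}^{\otimes n})\geq-\log\ve_n$. Dividing by $n$, taking $\liminf_n$, optimising over protocols, and then $m\to\infty$ gives $\lim_m\liminf_n\tfrac1n D^{2^{-m}}_H(\SEP_{A^n:B^n}\|\rho_{AB}^{\otimes n})\geq E_{d,\rm err}(\rho_{AB})$; since $D^\ve_H$ is non-decreasing in $\ve$ and $\{2^{-m}\}_m$ is cofinal as $\ve\downarrow 0$, the left-hand side equals $\Sanov(\rho_{AB}\|\SEP_{A:B})$.

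For the converse I would run a measure-and-prepare construction. Fix $m$, set $\ve\coloneqq\tfrac{1}{2^m+1}$, and for each $n$ let $M_n$ be an optimal test for $D^\ve_H(\SEP_{A^n:B^n}\|\rho_{AB}^{\otimes n})$, so $\Tr M_n\sigma_n\geq 1-\ve$ for all $\sigma_n\in\SEP_{A^n:B^n}$ and $\Tr M_n\rho_{AB}^{\otimes n}=2^{-D^\ve_H(\SEP_{A^n:B^n}\|\rho_{AB}^{\otimes n})}\eqqcolon\ve_n$. Define the channel $\Lambda_n(X)\coloneqq\Tr(M_nX)\,\tfrac{\id}{4^m}+\Tr\big((\id-M_n)X\big)\proj{\Phi_m}$ onto the output pair of $m$-qubit registers $A^m:B^m$. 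It is non-entangling: on a separable input $\sigma_n$ it returns the isotropic state $(1-p)\tfrac{\id}{4^m}+p\proj{\Phi_m}$ with $p=\Tr\big((\id-M_n)\sigma_n\big)\leq\ve=\tfrac{1}{2^m+1}$, which is exactly the separability threshold of an isotropic state of local dimension $2^m$. Moreover $F\big(\Lambda_n(\rho_{AB}^{\otimes n}),\proj{\Phi_m}\big)=1-\ve_n(1-4^{-m})\geq 1-\ve_n$, so this is a feasible distillation protocol with error parameter $\ve_n$, whence $E^{(m)}_{d,\rm err}(\rho_{AB})\geq\liminf_n\big(-\tfrac1n\log\ve_n\big)=\liminf_n\tfrac1n D^{1/(2^m+1)}_H(\SEP_{A^n:B^n}\|\rho_{AB}^{\otimes n})$. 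Letting $m\to\infty$ and again using monotonicity of $D^\ve_H$ and cofinality of $\{1/(2^m+1)\}_m$, the right-hand side tends to $\Sanov(\rho_{AB}\|\SEP_{A:B})$, so $E_{d,\rm err}(\rho_{AB})=\lim_m E^{(m)}_{d,\rm err}(\rho_{AB})\geq\Sanov(\rho_{AB}\|\SEP_{A:B})$, closing the argument.

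The hard part is this converse: the leakage $p\proj{\Phi_m}$ produced by the measure-and-prepare channel on separable inputs is of order $2^{-m}$, comparable to how much a generic separable state can be perturbed before becoming entangled, so unlike in the familiar equivalences with type-II testing the channel is non-entangling only if the smoothing level is tied to $m$ --- the precise choice $\ve=\tfrac{1}{2^m+1}$ being dictated by the isotropic-state separability threshold. This costs nothing asymptotically, since $\ve\downarrow0$ and $m\to\infty$ are both taken at the end, but it means the nested $\liminf_n$, $\lim_m$ and $\lim_{\ve\to0}$ limits have to be matched with some care, leaning on monotonicity of the hypothesis testing relative entropy in its smoothing parameter. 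The interchangeability of the fidelity and trace-distance error metrics in \eqref{eq:m-copy-exponent_app}, used to pass between the two formulations freely, follows from the Fuchs--van de Graaf inequalities together with $\ve_n\to0$.
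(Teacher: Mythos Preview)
Your proof is correct and follows essentially the same approach as the paper: the same adjoint construction $M_n=\id-\Lambda_n^\dagger(\proj{\Phi_m})$ for one direction, and a measure-and-prepare channel for the converse. The only cosmetic difference is in the converse, where the paper prepares $\frac{\id-\proj{\Phi_m}}{2^{2m}-1}$ as the ``garbage'' state (allowing the slightly larger threshold $\ve=2^{-m}$ via the Vidal--Tarrach separability criterion) while you prepare $\id/4^m$ and invoke the isotropic-state threshold $\ve=1/(2^m+1)$; both choices work and the ensuing limit arguments are identical.
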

}
\begin{proof}
Consider any feasible distillation protocol $\Lambda_n \in \rm NE$ such that
\begin{equation}\begin{aligned}
    1-\ve_n &\leq F\big( \Lambda_n(\rho_{AB}^{\otimes n}) ,\, \proj{\Phi_m}\big)\\
    &= \bra{\Phi_m} \Lambda_n(\rho_{AB}^{\otimes n}) \ket{\Phi_m}\\
    &= \Tr \left[ \rho_{AB}^{\otimes n} \, \Lambda_n^\dagger(\proj{\Phi_m})\right]
\end{aligned}\end{equation}
where $\Lambda_n^\dagger$ denotes the adjoint map of $\Lambda_n$. Define $M_n = \id - \Lambda_n^\dagger(\proj{\Phi_m})$, which is a valid POVM element since $\Lambda_n$ is a quantum channel. Notice now that, for any %$\sigma \in \SEP_{A:B}$
$\sigma_n \in \SEP_{A^n:B^n}$, it holds that
\begin{equation}\begin{aligned}
    \Tr M_n \sigma_n &= 1 - \braket{\Phi_m|\Lambda_n(\sigma_n)|\Phi_m}\\
    &\geq 1 - \max_{\sigma'_n \in %\SEP_{A:B}
    \SEP_{A^n:B^n}} \braket{\Phi_m|\sigma'_n|\Phi_m}\\
    &\geq 1 - 2^{-m},
\end{aligned}\end{equation}
where the second line follows since $\Lambda_n$ is non-entangling, and the last line is a basic property of the maximally entangled state~\cite{Horodecki-teleportation}.
Entanglement testing with the measurement $(M_n, \id - M_n)$ thus gives a general upper bound on the distillation error exponent as
\begin{equation}\begin{aligned}
    %\min_{\sigma \in \SEP_{A:B}} D^{2^{-m}}_H( \sigma \| \rho_{AB}) 
    \min_{\sigma_n \in \SEP_{A^n:B^n}} D^{2^{-m}}_H( \sigma_n \| \rho_{AB}^{\otimes n}) \geq \log_2 \frac{1}{\Tr M_n \rho_{AB}^{\otimes n}} \geq - \log_2 \ve_n.
\end{aligned}\end{equation}
Dividing by $n$, taking the supremum over $\Lambda_n\in \textrm{NE}$, and then the limit $n\to\infty$, yields
\bb \label{eq:sanov_eq1_app}
E_{d,\rm err}^{(m)}(\rho_{AB}) \leq \liminf_{n\to\infty} \frac1n \min_{\sigma_n \in \SEP_{A^n:B^n}} D^{2^{-m}}_H( \sigma_n \| \rho_{AB}^{\otimes n}) = \liminf_{n\to\infty} \frac1n\, D^{2^{-m}}_H( \SEP_{A^n:B^n} \| \rho_{AB}^{\otimes n})\, .
\ee

For the other direction, consider any measurement operator $M_n$ satisfying $\Tr M_n \sigma \geq 1-2^{-m}$ for all separable $\sigma$. %and 
Construct the map $\Lambda_n$ acting on the system $A^nB^n$, producing at the output an $(m+m)$-qubit operator, and defined as
\begin{equation}\begin{aligned}
    \Lambda_n (\cdot) = \Tr \!\left[ ( \id - M_n) (\cdot) \right] \proj{\Phi_m} + \Tr \left[  M_n (\cdot) \right] \frac{\id - \proj{\Phi_m}}{2^{2m}-1}\, .
\end{aligned}\end{equation}
We immediately have that $F\big( \Lambda_n(\rho_{AB}^{\otimes n}) ,\, \proj{\Phi_m}\big) = \Tr \left[ ( \id - M_n) \rho^{\otimes n} \right]$.
Furthermore, for any %$\sigma \in \SEP_{A:B}$
$\sigma_n \in \SEP_{A^n:B^n}$ it holds that
\begin{equation}\begin{aligned}\label{eq:isotropic}
    \Lambda_n(\sigma_n) &= %\Tr \!\left[ ( \id - M_n) \rho^{\otimes n} \right] 
    \Tr \!\left[ ( \id - M_n) \sigma_n \right] \left( \proj{\Phi_m} + \frac{\Tr M_n \sigma_n}{1 - \Tr M_n \sigma_n} \frac{\id - \proj{\Phi_m}}{2^{2m}-1} \right)\\
    &= %\Tr \!\left[ ( \id - M_n) \rho^{\otimes n} \right] 
    \Tr \!\left[ ( \id - M_n) \sigma_n \right] \left( \proj{\Phi_m} + (2^m - 1) \,\frac{\id - \proj{\Phi_m}}{2^{2m}-1}  + \eta \,\frac{\id - \proj{\Phi_m}}{2^{2m}-1} \right)
\end{aligned}\end{equation}
for some $\eta \geq 0$, since $\Tr M_n \sigma_n \geq 1-2^{-m}$ by assumption. But since both $\frac{\id - \proj{\Phi_m}}{2^{2m}-1}$ and $\proj{\Phi_m} + (2^m - 1) \,\frac{\id - \proj{\Phi_m}}{2^{2m}-1}$ are separable states~\cite{VidalTarrach}, this means that $\Lambda_n(\sigma_n)$ is also separable, and hence $\Lambda_n$ is a non-entangling map.
Thus
\begin{equation}\begin{aligned}
    E_{d,\rm err}^{(m)} (\rho_{AB}) \geq \liminf_{n\to\infty} \frac1n \log_2 \frac{1}{\Tr M_n \rho^{\otimes n}},
\end{aligned}\end{equation}
and optimising over all feasible measurements gives
\bb \label{eq:sanov_eq2_app}
E_{d,\rm err}^{(m)}(\rho_{AB}) \geq \liminf_{n\to\infty} \frac1n\, D^{2^{-m}}_H( \SEP_{A^n:B^n} \| \rho_{AB}^{\otimes n})\, .
\ee
% \begin{equation}\begin{aligned}\label{eq:sanov_eq2_app}
%     E_{d,\rm err}^{(m)} (\rho_{AB}) \geq \liminf_{n\to\infty} \frac1n \min_{\sigma \in \SEP_{A:B}} D^{2^{-m}}_H( \sigma \| \rho^{\otimes n}_{AB}).
% \end{aligned}\end{equation}
Finally, noting that the function $D^\ve_H$ is monotonically non-decreasing in $\ve$, we have that
\begin{equation}\begin{aligned} \label{eq:sanov_eq3_app}
\lim_{\ve\to 0} \liminf_{n\to\infty} \frac1n\, 
%\min_{\sigma \in \SEP_{A : B}} D^{\ve}_H (\sigma_{n} \| \rho_{AB}^{\otimes n}) 
D^{\ve}_H (\SEP_{A^n:B^n} \| \rho_{AB}^{\otimes n}) = \lim_{m\to\infty} \liminf_{n\to\infty} \frac1n\, %\min_{\sigma \in \SEP_{A : B}} D^{2^{-m}}_H (\sigma_{n} \| \rho_{AB}^{\otimes n})
D^{2^{-m}}_H (\SEP_{A^n:B^n} \| \rho_{AB}^{\otimes n})
\end{aligned}\end{equation}
since the limit $\ve \to 0$ exists and can be taken along any sequence, in particular $\ve = 2^{-m}$ with $m \to \infty$. Combining~\eqref{eq:sanov_eq1_app},~\eqref{eq:sanov_eq2_app}, and~\eqref{eq:sanov_eq3_app}, we thus infer that %conclude the claim
\begin{equation}\begin{aligned}
   E_{d,\rm err} (\rho_{AB}) = \lim_{m\to\infty} E_{d,\rm err}^{(m)} (\rho) = \lim_{m\to\infty} \liminf_{n\to\infty} \frac1n\, %\min_{\sigma \in \SEP_{A : B}} D^{2^{-m}}_H (\sigma_{n} \| \rho_{AB}^{\otimes n}) 
   D^{2^{-m}}_H (\SEP_{A^n:B^n} \| \rho_{AB}^{\otimes n}) = \Sanov(\rho_{AB} \| \SEP_{A:B})\, ,
\end{aligned}\end{equation}
concluding the proof. 
\end{proof}

\begin{rem*}
We will later see that the number of target states $m$ is actually irrelevant for the result, and in fact it holds that
\begin{equation}\begin{aligned}
    E^{(m)}_{d,\rm err}(\rho_{AB}) = E_{d,\rm err}(\rho_{AB}) = \lim_{n\to\infty} \frac1n D^\ve_H(\SEP_{A:B} \| \rho^{\otimes n}_{AB}) = D(\SEP_{A:B} \| \rho)
\end{aligned}\end{equation}
for all $m$ and all $\ve \in (0,1)$. This follows from the strong converse property of the generalised quantum Sanov's theorem (see Supplementary Note~\ref{app:quantum-fullproof}).
\end{rem*}

\begin{rem*}
The result of Lemma~\ref{lem:distillation_sanov_equivalence} holds not only in the case of separable states and non-entangling operations: the exact same proof applies to the case where the set $\SEP_{A:B}$ in Eq.~\eqref{eq:sanov_exp1} is replaced with the set $\PPT_{A:B}$ of states with positive partial transpose (PPT), and analogously non-entangling operations NE in Eq.~\eqref{eq:m-copy-exponent_app} are replaced with PPT-preserving channels, i.e.\ ones that map PPT states to PPT states. This is essentially because states of the form encountered in Eq.~\eqref{eq:isotropic}, known as isotropic states, are separable if and only if they are PPT~\cite{Horodecki-teleportation}. This will in fact allow not only for Lemma~\ref{lem:distillation_sanov_equivalence} but also our main result in Theorem~\ref{thm:main} to apply to the case where the free states are PPT states. (This immediate application of our methods to PPT states was also observed by Zhiwen Lin~\cite{zhiwen} after the initial preprint of this manuscript appeared online.)
\end{rem*}

\section{Sanov's theorem: notation and definitions}
\label{app:notation-definitions}

\subsection{Quantum setting with Brand\~{a}o--Plenio axioms} \label{sec_BP_axioms}

Let $\HH$ be a Hilbert space of finite dimension $d$. We consider a family $(\FF_n)_n$ of sets of `free states'. In practice, each $\FF_n$ is a subset of the set of density operators on $n$ copies of the system, in formula $\FF_n \subseteq \D\big(\HH^{\otimes n}\big)$. The sets of free states should satisfy some elementary properties, called the \emph{Brand\~{a}o--Plenio axioms}~\cite[p.~795]{Brandao2010}:
\begin{enumerate}
    \item Each $\FF_n$ is a convex and closed subset of $\D(\HH^{\otimes n})$, and hence also compact (since $\HH$ is finite dimensional).
    \item $\FF_1$ contains some full-rank state $\FF_1 \ni \sigma_0>0$.
    \item The family $(\FF_{n})_n$ is closed under partial traces, i.e.\ if $\sigma\in \FF_{n+1}$ then $\Tr_{n+1}\sigma \in \FF_n$, where $\Tr_{n+1}$ denotes the partial trace over the last subsystem.
    \item The family $(\FF_{n})_n$ is closed under tensor products, i.e.\ if $\sigma\in \FF_n$ and $\sigma'\in \FF_m$ then $\sigma\otimes \sigma'\in \FF_{n+m}$.
    \item Each $\FF_n$ is closed under permutations, i.e.\ if $\sigma\in \FF_n$ and $\pi\in S_n$ denotes an arbitrary permutation of a set of $n$ elements, then also $U_\pi^{\vphantom{\dag}} \sigma U_\pi^\dag\in \FF_n$, where $U_\pi$ is the unitary implementing $\pi$ over $\HH^{\otimes n}$. 
\end{enumerate}

These axioms are conceived as an abstraction of the properties of the set of separable states. But there is one more subtle property of the set of separable states that is not captured by Axioms~1--5, and it is the fact that the regularised relative entropy of entanglement is faithful on entangled states, i.e.\ it is zero on a state if and only that state is separable. This highly non-trivial fact has been proved independently by Brand\~{a}o and Plenio~\cite[Corollary~II.2]{Brandao2010} and by Piani~\cite{Piani2009}. As shown in Supplementary Note~\ref{app:counterexample}, such a conclusion does not follow solely from the above Axioms~1--5, so we will need to  introduce an additional one:

\begin{enumerate} \setcounter{enumi}{5}
\item The regularised relative entropy of resource is faithful, i.e.
\bb
\rho\notin \FF_1 \quad \Longrightarrow\quad \rel{D^\infty}{\rho}{\FF} \coloneqq \lim_{n\to\infty} \frac1n\, \rel{D}{\rho^{\otimes n}}{\FF_n} > 0\, .
\label{axiom_6}
\ee
\end{enumerate}

As usual, the existence of the limit on the right-hand side of~\eqref{axiom_6} is guaranteed by Axiom~4 (closure under tensor products) via Fekete's lemma~\cite{Fekete1923}.

%%%%%%%%%%%%%%%%%%%%%%%%%%%%%%%%%%%%%%%%%%%%%%%%%%%%%%%%%%%%%%%%%%%%%%%%%%%%%%%%

\subsection{Classical setting and the method of types} \label{subsec:types}

All of the above Axioms 1 -- 6 make perfect sense for classical probability distributions, too. This is the case we are going to care about first, so we better fix some terminology. Let $\XX$ be a finite alphabet of size $|\XX|$. Denote with $\PP(\XX)$ the set of probability distributions on $\XX$. For a positive integer $n\in \N_+$, an \deff{$\boldsymbol{n}$-type} on $\XX$ is a probability distribution $t\in \PP(\XX)$ such that $nt(x)\in \N$ is an integer for all $n\in \N$. Therefore, the set of all $n$-types on $\XX$ can be defined as
\bb
\mathcal{T}_n \coloneqq \left\{ \left(\frac{k_j}{n}\right)_{j=1,\ldots, d}\!\!:\ k_j\in \N\ \ \forall\ j\, ,\ \sumno_{j=1}^d k_j = n\right\} .
\ee
In the above equation we did not indicate the dependency of $\mathcal{T}_n$ on $\XX$, as we shall always assume that $\XX$ is fixed. We will stick to this convention throughout the note in order to simplify the notation.

A well-known counting argument shows that
\bb
\left| \mathcal{T}_n \right| \leq \binom{n+d-1}{d-1}\leq (n+1)^{d-1}\, .
\label{counting_types}
\ee
For a given $t \in \mathcal{T}_n$, the associated \deff{type class} $T_{n,t}$ is the set of sequences of length $n$ made from elements in $\XX$ that have type $t$. In formula,
\bb
T_{n,t} \coloneqq \left\{ x^n\in \XX^n\!:\ N(x|x^n) = nt(x)\ \ \forall\ x\in \XX \right\} ,
\ee
where $N(x|x^n)$ denotes the number of times the symbol $x$ appears in the sequence $x^n$. Clearly, any sequence in $T_{n,t}$ can be obtained from any other such sequence by applying a suitable permutation.

%%%%%%%%%%%%%%%%%%%%%%%%%%%%%%%%%%%%%%%%%%%%%%%%%%%%%%%%%%%%%%%%%%%%%%%%%%%%%%%%

\subsection{Relative entropies}

For a generic divergence $\mathds{D}$ and two sets of quantum states $\AA,\BB\subseteq \D(\HH)$, we write
\bb
\mathds{D}(\AA\|\BB) \coloneqq \inf_{a\in \AA,\, b\in \BB} D(a\|b)\, .
\label{relent_distance_sets}
\ee
In the case where $\AA = \{a\}$ is composed of one element only, we use the shorthand notation $\mathds{D}(a\|\BB)$, and similarly in the case $\BB = \{b\}$. The same definitions can be given in the case where $\AA,\BB\subseteq \PP(\XX)$ are sets of classical probability distributions.

%%%%%%%%%%%%%%%%%%%%%%%%%%%%%%%%%%%%%%%%%%%%%%%%%%%%%%%%%%%%%%%%%%%%%%%%%%%%%%%%

\subsection{Filtered divergences and compatibility conditions}

Given a finite-dimensional quantum system $A$ with Hilbert space $\HH$, a \deff{measurement} on $A$ is a CPTP map $\LL\big(\HH\big) \to \R^N$, where $N\in \N^+$ is the number of outcomes (assumed to be finite). We can equivalently represent a measurement with $N$ outcomes by an $N$-outcome POVM, i.e.\ a sequence $(E_x)_{x=1,\ldots, N}$ of positive semi-definite operators $E_x\geq 0$ such that $\sum_x E_x = \id$. Given a set $\mathds{M}$ of measurements on $A$, one can use it to define a `filtered' notion of relative entropy, by setting
\bb
D^{\mathds{M}}(\rho\| \sigma) \coloneqq \sup_{\MM \in \mathds{M}} \rel{D}{\MM(\rho)}{\MM(\sigma)}\, .
\ee
A particularly simple choice of $\mathds{M}$ is the set of \emph{all} measurements on $A$, denoted by $\all$. It turns out that in this case one can relate the measured relative entropy, $D^{\all}(\rho\| \sigma)$, first considered by Donald~\cite{Donald1986} (see also~\cite{Petz-old,Berta2017}), to the standard, Umegaki relative entropy, $D(\rho\| \sigma)$. This is done via the \emph{pinching inequality}~\cite{Hayashi2002}, which gives us the relation~\cite[Lemma~4.11]{TOMAMICHEL} (see also~\cite[Eq.~(47)]{berta_composite})
\bb
D(\rho\|\sigma) - \log_2 \left|\spec(\sigma)\right| \leq D^{\all}(\rho\| \sigma) \leq D(\rho\|\sigma)\, ,
\label{entropic_pinching}
\ee
where $|\spec(\sigma)|$ denotes the number of \emph{distinct} eigenvalues of $\sigma$. Using the estimate~\cite[Eq.~(8)]{Sutter2017}
\bb
\big|\spec\big(\sigma^{\otimes k}\big)\big| \leq \binom{k+d-1}{d-1} \leq \frac{(k+d-1)^{d-1}}{(d-1)!}\, ,
\ee
where $d$ is the dimension of the underlying quantum system, one sees that for all states $\rho_k$ on $A^k$ and $\sigma$ on $A$ it holds that
\bb
\rel{D}{\rho_k}{\sigma^{\otimes k}} - \log_2 \frac{(k+d-1)^{d-1}}{(d-1)!} \leq \rel{D^{\all}}{\rho_k}{\sigma^{\otimes k}} \leq \rel{D}{\rho_k}{\sigma^{\otimes k}}\, .
\label{entropic_pinching_k_copies}
\ee
The fact that the fudge term on the left-hand side of the above inequality is of the form $\log_2 \mathrm{poly}(k)$, and therefore dividing by $k$ and taking the limit $k\to\infty$ makes it vanish, is at the heart of the \emph{asymptotic spectral pinching} method
~\cite{Hiai1991, Hayashi2002, Sutter2017}.

For any two given sets $\AA,\BB\subseteq \D(\HH)$ and an arbitrary set of measurements $\mathds{M}$, we can also plug the divergence $D^{\mathds{M}}$ into~\eqref{relent_distance_sets} and define a corresponding `filtered distance' $D^{\mathds{M}}(\AA\| \BB)$, given by
\bb
D^{\mathds{M}}(\AA\| \BB) = \inf_{\rho\in \AA,\, \sigma\in \BB} D^{\mathds{M}}(\rho\|\sigma) = \inf_{\rho\in \AA,\, \sigma\in \BB} \sup_{\MM\in \mathds{M}} \rel{D}{\MM(\rho)}{\MM(\sigma)}\, .
\label{measured_relent_distance_sets}
\ee
There are at least two natural questions to ask at this point. First, under what conditions is $D^{\mathds{M}}(\cdot\| \cdot)$ a faithful measure of statistical distance between quantum states? The key notion in this respect is that of \emph{information completeness}.

\begin{Def}
A set of measurements $\mathds{M}$ on a quantum system with finite-dimensional Hilbert space $\HH$ is said to be \deff{informationally complete} if the statistics under all measurements in $\mathds{M}$ suffice to reconstruct any state on $\HH$. In mathematical terms, this is equivalent to
\bb
\Span\left\{ E_x:\ (E_x)_x \in \mathds{M} \right\} = \LL(\HH)\, .
\ee
\end{Def}

If $\mathds{M}$ is informationally complete, then it is not difficult to see that $D^{\mathds{M}}(\cdot\| \cdot)$ is faithful over pairs of quantum states, i.e.\ $D^{\mathds{M}}(\rho\| \sigma) > 0$ whenever $\rho\neq \sigma$.

Looking again at the definition of $D^{\mathds{M}}(\AA\|\BB)$, the second natural question to ask is whether the infimum and supremum on the right-hand side of~\eqref{measured_relent_distance_sets} can be exchanged. A sufficient condition that enables precisely that has been identified by Brand\~{a}o, Harrow, Lee, and Peres in~\cite{brandao_adversarial}.

\begin{Def}[{\cite[Definition~5]{brandao_adversarial}}] \label{closed_finite_labelled_mixtures_def}
A set of measurement $\mathds{M}$ on a given quantum system is \deff{closed under finite labelled mixtures} if for all finite alphabets $\pazocal{Y}$ and all collections of POVMs $\big(E_{x|y}\big)_x\in \mathds{M}$ labelled by $y\in \pazocal{Y}$ and all probability distributions $p$ on $\pazocal{Y}$ the POVM constructed by drawing a symbol in $\pazocal{Y}$ and performing the corresponding measurement $\big(E_{x|y}\big)_x$ is still in $\mathds{M}$. In formula,
\bb
\big(p(y)\, E_{x|y}\big)_{x,y} \in \mathds{M}\, .
\ee 
\end{Def}

The following lemma from~\cite{brandao_adversarial} answers the above question.

\begin{lemma} \label{minimax_measured_relent_lemma}
Let $\AA,\BB\subseteq \D(\HH)$ be closed and convex sets of (finite-dimensional) quantum states, and let $\mathds{M}$ be a set of quantum measurements on the same system. If $\mathds{M}$ is closed under finite labelled mixtures according to Definition~\ref{closed_finite_labelled_mixtures_def}, then
\bb
D^{\mathds{M}}(\AA\| \BB) = \inf_{\rho\in \AA,\, \sigma\in \BB} \sup_{\MM\in \mathds{M}} \rel{D}{\MM(\rho)}{\MM(\sigma)} = \sup_{\MM\in \mathds{M}} \inf_{\rho\in \AA,\, \sigma\in \BB}  \rel{D}{\MM(\rho)}{\MM(\sigma)}\, ,
\ee
and both infima are achieved, i.e.\ they can be replaced by minima. 
\end{lemma}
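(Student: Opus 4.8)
The inequality $\inf_{\rho,\sigma}\sup_{\MM}\rel{D}{\MM(\rho)}{\MM(\sigma)}\geq\sup_{\MM}\inf_{\rho,\sigma}\rel{D}{\MM(\rho)}{\MM(\sigma)}$ is weak duality and holds for any function, so the substance to be proved is the reverse inequality together with attainment of the two infima. The plan is to deduce this from a minimax theorem of Sion type applied to the map
\bbb
f\big((\rho,\sigma),\MM\big) \coloneqq \rel{D}{\MM(\rho)}{\MM(\sigma)} \in [0,+\infty]
\eee
on the product $(\AA\times\BB)\times\mathds{M}$. The $(\rho,\sigma)$-side of this setup is unproblematic: $\AA\times\BB$ is convex and, as a closed subset of the compact set $\D(\HH)\times\D(\HH)$, also compact; and for each fixed $\MM$ the function $(\rho,\sigma)\mapsto\rel{D}{\MM(\rho)}{\MM(\sigma)}$ is jointly convex and lower semicontinuous, since the classical relative entropy is jointly convex and lower semicontinuous and $\MM$ is affine and continuous. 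In particular $(\rho,\sigma)\mapsto\sup_{\MM}f$ is convex and lower semicontinuous, so the outer infimum in the left-hand side is attained on $\AA\times\BB$, and likewise the inner infimum in the right-hand side is attained for each fixed $\MM$; this yields the ``both infima are achieved'' claim once the equality is established.

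The only place where the hypothesis on $\mathds{M}$ is needed is in handling the measurement argument, and the key point is that \emph{closure under finite labelled mixtures}, rather than mere convexity of $\mathds{M}$ as a set of POVMs, makes the family $\{\,f(\cdot,\MM)\,\}_{\MM\in\mathds{M}}$ closed under \emph{pointwise convex combinations} with \emph{equality} (plain convexity of the POVM set would only give the wrong inequality $f(\cdot,\text{average POVM})\leq\text{average of the }f$'s, by joint convexity of $D$). Concretely, given $\MM_1,\MM_2\in\mathds{M}$ and $\lambda\in[0,1]$, the labelled mixture $\MM_\lambda$ --- draw $y\in\{1,2\}$ with probabilities $\lambda_1=\lambda$, $\lambda_2=1-\lambda$, perform $\MM_y$, and record $(x,y)$ --- lies in $\mathds{M}$ by Definition~\ref{closed_finite_labelled_mixtures_def}, and since $\MM_\lambda(\rho)$ assigns $\lambda_y\,\MM_y(\rho)(x)$ to the outcome $(x,y)$, the common factor $\lambda_y$ cancels inside the logarithm, giving $f\big((\rho,\sigma),\MM_\lambda\big)=\lambda\, f\big((\rho,\sigma),\MM_1\big)+(1-\lambda)\, f\big((\rho,\sigma),\MM_2\big)$ for all $(\rho,\sigma)$ at once. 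Iterating, every finite convex combination of the functions $f(\cdot,\MM_i)$ is realised by a single measurement in $\mathds{M}$. I would then pass to the convex set $\Pi$ of finitely supported probability measures on $\mathds{M}$ and set $\tilde f\big((\rho,\sigma),\pi\big)\coloneqq\E_{\MM\sim\pi}\,f\big((\rho,\sigma),\MM\big)$: this $\tilde f$ is affine (hence concave and upper semicontinuous) in $\pi$, jointly convex and lower semicontinuous in $(\rho,\sigma)$, and the labelled-mixture identity yields the two pointwise facts $\sup_{\pi\in\Pi}\tilde f(\cdot,\pi)=\sup_{\MM\in\mathds{M}}f(\cdot,\MM)$ and $\sup_{\pi\in\Pi}\inf_{\rho,\sigma}\tilde f(\cdot,\pi)=\sup_{\MM\in\mathds{M}}\inf_{\rho,\sigma}f(\cdot,\MM)$, the latter because each $\tilde f(\cdot,\pi)$ equals $f(\cdot,\MM)$ for some $\MM\in\mathds{M}$.

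With these properties in place, Sion's minimax theorem applies to $\tilde f$ on the compact convex set $\AA\times\BB$ and the convex set $\Pi$, giving $\inf_{\rho,\sigma}\sup_{\pi}\tilde f=\sup_{\pi}\inf_{\rho,\sigma}\tilde f$; translating both sides through the two identities above turns this into exactly the asserted equality $\inf_{\rho,\sigma}\sup_{\MM}\rel{D}{\MM(\rho)}{\MM(\sigma)}=\sup_{\MM}\inf_{\rho,\sigma}\rel{D}{\MM(\rho)}{\MM(\sigma)}$, and attainment of both infima was noted above. I expect the main obstacle to be not the conceptual structure but the rigorous invocation of the minimax theorem in the presence of (i) the extended-real-valued, possibly non-finite relative entropy and (ii) the infinite-dimensional strategy space $\Pi$. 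Point (i) is handled by working throughout with proper lower semicontinuous functions and separately disposing of the degenerate case where the common value is $+\infty$; point (ii) is handled by observing that only finitely supported mixtures are ever needed, so that $\Pi$ may be exhausted by finite-dimensional probability simplices, on which $\tilde f$ is continuous and affine in $\pi$ and the classical Sion (or even von Neumann) theorem applies directly, after which one takes the supremum over simplices.
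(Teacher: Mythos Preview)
The paper does not supply its own proof of this lemma: it is quoted verbatim from \cite{brandao_adversarial} and invoked as a black box. So there is no in-paper argument to compare against; I can only assess your proposal on its own merits.

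Your strategy is sound and is essentially the one used in \cite{brandao_adversarial}. The crucial observation --- that a finite labelled mixture $\MM_\lambda$ satisfies the \emph{exact} identity $\rel{D}{\MM_\lambda(\rho)}{\MM_\lambda(\sigma)}=\sum_y \lambda_y\,\rel{D}{\MM_y(\rho)}{\MM_y(\sigma)}$, rather than merely the inequality that plain POVM-convexity would give --- is precisely the point of Definition~\ref{closed_finite_labelled_mixtures_def}, and you identify it correctly. With this in hand, the reduction to a Sion/von~Neumann minimax over finite simplices $\Delta_k$ together with compactness of $\AA\times\BB$ and lower semicontinuity of $D$ is the standard route, and your outline of it is accurate.

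The one place where your sketch is genuinely incomplete is the treatment of the value $+\infty$. Your statement that ``point (i) is handled by working throughout with proper lower semicontinuous functions'' glosses over a real issue: on a simplex $\Delta_k$, the map $\pi\mapsto\tilde f((\rho,\sigma),\pi)$ can fail to be upper semicontinuous when some $f((\rho,\sigma),\MM_i)=+\infty$ (it jumps down at the boundary face $\pi_i=0$), so the hypotheses of Sion's theorem are not literally met. This is not fatal, and there are at least two clean fixes: either (a) bypass Sion on the simplex entirely and argue directly via the finite intersection property of the closed convex sublevel sets $\{(\rho,\sigma):f((\rho,\sigma),\MM)\leq s\}$ in the compact $\AA\times\BB$, using your affine identity to show that any finite subcollection has nonempty intersection; or (b) perturb $\BB$ to $(1-\delta)\BB+\delta\tau$ for a full-rank $\tau$, which forces all relative entropies finite, apply minimax there, and pass to the limit $\delta\to 0$ using lower semicontinuity. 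Either route closes the gap; as written, your proposal flags the issue but does not resolve it.
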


In what follows, we will consider sequences $(\mathds{M}_n)_n$ of sets $\mathds{M}_n$ of measurements on $n$ copies of a given quantum system. When sets of restricted measurements are present alongside sets of restricted states, a key notion is that of compatibility between these two objects. We borrow the following definition from~\cite{brandao_adversarial}.

\begin{Def}[{\cite[Definition~4]{brandao_adversarial}}] \label{quantum_compatibility_def}
Let $\HH$ be a Hilbert space, $(\FF_n)_n$ a sequence of sets of quantum states $\FF_n \subseteq \D\big(\HH^{\otimes n}\big)$, and $(\mathds{M}_n)_n$ a sequence of sets $\mathds{M}_n$ of $n$-copy measurements on the corresponding quantum system. We say that $(\mathds{M}_n)_n$ is \deff{compatible} with $(\FF_n)_n$ if for all $n,m\in \N^+$, for all POVM operators $E_x^{(n)}$ that appear in a measurement belonging to $\mathds{M}_n$, and for all states $\rho_{n+m}\in \FF_{n+m}$, it holds that
\bb
\Tr_{1\ldots n} \left[ \left(E_x^{(n)}\otimes \id^{\otimes m}\right) \rho_{n+m}\right] \in \cone(\FF_m)\, ,
\ee
where $\cone (\FF_m) \coloneqq \left\{\lambda \sigma:\ \lambda\geq 0,\ \sigma \in \FF_m\right\}$.
\end{Def}

While the above definitions have all been given in the quantum setting, they can equally well be considered in the fully classical case. A {\it measurement} or {\it statistical test} on a classical system with (finite) alphabet $\XX$ is simply a classical channel $\Lambda$ with input $\XX$ and output in another discrete alphabet $\pazocal{Y}$, represented by a conditional probability distribution $\Lambda(y|x)$. Information completeness of a set of {\it classical measurements} $\mathds{L}$ simply means that
\bb
\Span \left\{ (\Lambda(y|x))_x:\ \Lambda\in \mathds{L},\ \Lambda: \R^{\XX} \to \R^{\pazocal{Y}},\ y\in \pazocal{Y} \right\} = \R^\XX ,
\label{information_completeness_classical}
\ee
where $(\Lambda(y|x))_x$ is thought of as a vector in $\R^{\XX}$. Similarly, a sequence of sets of classical measurements $(\mathds{L}_n)_n$ is compatible with a sequence of sets of classical probability distributions $(\FF_n)_n$ if for all $n,m\in \N^+$, $q_{n+m} \in \FF_{n+m}$, $\Lambda \in \mathds{L}_n$ with $\Lambda:\R^{\XX^n} \to \R^{\pazocal{Y}}$, and $y\in \pazocal{Y}$, it holds that $\tilde{q}_m \in \cone (\FF_m)$, where
\bb
\tilde{q}_m(\tilde{x}^m) \coloneqq \sum_{x^n} \Lambda(y|x^n)\, q_{n+m}(x^n\tilde{x}^m)\qquad \forall\ \tilde{x}^m\in \XX^m .
\label{classical_compatibility}
\ee

Whereas above supplementary material largely followed the literature, the methods presented in the following are completely novel.

%%%%%%%%%%%%%%%%%%%%%%%%%%%%%%%%%%%%%%%%%%%%%%%%%%%%%%%%%%%%%%%%%%%%%%%%%%%%%%%%

\section{Generalised classical Sanov's theorem}
\label{app:sanov-classical}

Given a sequence of sets of free states $\FF_n$ and some $\rho\in \D(\HH)$, we consider the hypothesis testing task of distinguishing many i.i.d.\ copies of $\rho$ from a generic free state $\sigma_n \in \FF_n$. In contrast with the Stein setting, here we are interested in the rate of decay of the \emph{type I} error. The corresponding exponent, called the \deff{Sanov exponent}, is formally defined as
\bb
\sanov(\rho\|\FF) \coloneqq \lim_{\e\to 0^+} \liminf_{n\to\infty} \frac1n\, \rel{D_H^\e}{\FF_n}{\rho^{\otimes n}}\, ,
\ee
where as usual $\rel{D_H^\e}{\FF_n}{\rho^{\otimes n}} = \inf_{\sigma_n\in \FF_n} \rel{D_H^\e}{\sigma_n}{\rho^{\otimes n}}$.

The main result of this section consists in a single-letter expression for the above exponent in the classical case where $\rho = p\in \PP(\XX)$ is a probability distribution. The Sanov exponent turns out to be equal to the \deff{reverse relative entropy of resource} of $p$, given by 
\bb
D(\FF \| p) = \min_{q\in \FF_1} D(q\|p)\, . 
\ee
Our main classical result is as follows.

\begin{thm}[(Generalised classical Sanov's theorem)] \label{Sanov_thm}
Let $\XX$ be a finite alphabet, and let $(\FF_n)_n$ be a sequence of sets of probability distributions $\FF_n \subseteq \PP\big(\XX^n\big)$ that obeys Axioms~1--6 in Section~\ref{sec_BP_axioms} (i.e.\ all the Brand\~{a}o--Plenio axioms and in addition Axiom~6). Then it holds that
\bb
\lim_{n\to\infty} \frac1n\, \rel{D_H^\e}{\FF_n}{p^{\otimes n}} = D(\FF \| p)\qquad \forall\ p\in \PP(\XX),\quad \forall\ \e\in (0,1)\, ,
\label{Sanov}
\ee
entailing in particular that
\bb
\sanov(p\|\FF) = D(\FF \| p)\, .
\ee
\end{thm}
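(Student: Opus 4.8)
The plan is to prove the two inequalities in~\eqref{Sanov} separately, with the upper bound being essentially standard and the lower bound constituting the real work. For the upper bound $\lim_{n\to\infty} \frac1n D_H^\e(\FF_n\|p^{\otimes n}) \leq D(\FF\|p)$, I would pass to the dual smooth max-relative entropy formulation~\eqref{eq:sanov-dual}: since $\frac1n D_{\max}^\e(\FF_n\|p^{\otimes n})$ is monotonically non-increasing in $\e$, it suffices to handle $\e=0$, where a direct entropic computation using a single-copy minimiser $q\in\FF_1$ and the tensor power $q^{\otimes n}\in\FF_n$ (valid by Axiom~4) gives $\frac1n D_{\max}(q^{\otimes n}\|p^{\otimes n}) = D_{\max}(q\|p)$, and then optimising and relating back to $D$ via the duality between $D_H^\e$ and $D_{\max}^\e$ yields the bound. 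This is the part the excerpt already flags as ``not too difficult.''

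The lower bound is where I would invoke the full machinery. Arguing by contradiction as in~\eqref{eq:dual-approach}, suppose $\frac1n D_{\max}^\e(\FF_n\|p^{\otimes n}) \to \lambda < D(\FF\|p)$ along some subsequence. Then for each $n$ there are $\e$-close distributions $q_n', q_n$ with $q_n\in\FF_n$ and $q_n' \leq 2^{n\lambda} p^{\otimes n}$. The first key step is to apply the classical Sanov lemma of~\cite{Sanov57}, $p^{\otimes n}(\{x^n: t_{x^n}\in\FF_1\}) \leq \mathrm{poly}(n)\, 2^{-nD(\FF_1\|p)}$, evaluated on the event that the empirical type lies in $\FF_1$; combined with $q_n'\leq 2^{n\lambda}p^{\otimes n}$ and $\lambda < D(\FF\|p) = D(\FF_1\|p)$, this forces $q_n'$ (hence $q_n$, up to the $\e$) to place almost all its mass on sequences whose type is \emph{not} in $\FF_1$. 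Since there are only polynomially many types, a pigeonhole argument produces a non-free type $s_n\notin\FF$ carrying weight $q_n(T_{n,s_n})\gtrsim \frac{1-\e}{\mathrm{poly}(n)}$; passing to a convergent subsequence we may take $s_n\to s\notin\FF$ (using closure of $\FF_1$ from Axiom~1), and by a small perturbation keep the weight bound while fixing the type.

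The second key step is the blurring argument. Applying the one-shot blurring lemma~\eqref{blurring_lemma_eq-maintext} to $q_n$ (which is permutationally symmetric by Axiom~5), we spread its weight from the single type class $T_{n,s}$ across all nearby type classes, at an exponentially small cost $2^{\alpha n}$ with $\alpha$ controlled by the fudge function $g$ and made arbitrarily small by shrinking the blurring radius. The blurred distribution $\tilde q_n$ then dominates the typical part of $s^{\otimes n}$ entry-wise, up to factors $\mathrm{poly}(n)\,2^{\alpha n}/(1-\e)$ — here one truncates the exponentially vanishing atypical tails of $s^{\otimes n}$. Crucially, since blurring does not inflate the max-relative entropy of resource too much (a property established in~\cite{blurring}, using closure of $\FF_n$ under permutations, tensor products, and partial traces — Axioms 3--5 — and the full-rank Axiom~2 to absorb the added symbols), there is a genuine free distribution $r_n\in\FF_n$ with $\tilde q_n \leq 2^{\beta n} r_n$, $\beta$ small. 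Chaining gives $s^{\otimes n} \lesssim \frac{\mathrm{poly}(n)\,2^{(\alpha+\beta)n}}{1-\e}\, r_n$, i.e. $\frac1n D_{\max}^{\e'}(s^{\otimes n}\|\FF_n) \leq \alpha+\beta+o(1)$ for a suitable smoothing $\e'$; letting $n\to\infty$ and then $\e'\to 0$, the asymptotic equipartition property $\lim_{\e\to 0}\liminf_n \frac1n D_{\max}^\e(s^{\otimes n}\|\FF_n) = D^\infty(s\|\FF)$ of~\cite{Brandao2010,Datta-alias} forces $D^\infty(s\|\FF)=0$ after sending $\alpha,\beta\to0$. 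Since $s\notin\FF_1$, this contradicts the faithfulness Axiom~6, completing the proof; the final statement $\sanov(p\|\FF)=D(\FF\|p)$ is then immediate from the definition of the Sanov exponent as the $\e\to0$ limit.

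\textbf{Main obstacle.} The delicate point is the subsequence/quantifier juggling around the type $s$: the offending type $s_n$ depends on $n$, and the blurring radius $\delta$, the smoothing parameters, and the exponential costs $\alpha,\beta$ are all coupled, so one must order the limits carefully (fix $\delta$, run $n\to\infty$ along a subsequence on which $s_n\to s$, extract the AEP bound, then send $\delta\to 0$) to avoid a circular dependence. Making this diagonal argument rigorous — and verifying that the blurring lemma's hypotheses (the concentration $q_n(\bigcup_{\|s-t\|_\infty\leq\delta} T_{n,t})\geq 1-\eta$) are actually met with $\eta$ small enough — is the technical heart of the proof.
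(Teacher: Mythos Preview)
Your proposal follows the paper's approach closely and the overall architecture is correct. There are, however, two concrete slips that would block the argument as written.

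First, in the pigeonhole step you extract $s_n\notin\FF_1$ with $s_n\to s$ and conclude $s\notin\FF_1$ ``using closure of $\FF_1$.'' This inference runs the wrong way: closedness of $\FF_1$ prevents limits of free types from escaping $\FF_1$, not limits of non-free types from entering it. Nothing so far rules out $s\in\partial\FF_1$. The paper fixes this by first invoking Lemma~\ref{e_relent_convergence_lemma} to choose $\zeta>0$ with $D^\zeta(\FF\|p)>\lambda$, then applying Sanov's lemma not to $\FF_1$ but to the $\zeta$-fattening $\{t:\frac12\|t-\FF_1\|_1\leq\zeta\}$. The pigeonholed types then satisfy $\frac12\|s_n-\FF_1\|_1>\zeta$ uniformly, whence $\frac12\|s-\FF_1\|_1\geq\zeta$ and $s\notin\FF_1$ follows legitimately.

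Second, you have the roles in the blurring lemma slightly scrambled. The permutational symmetry and the $(1-\eta)$-concentration hypotheses are on the \emph{first} argument of $D_{\max}^\eta(\,\cdot\,\|B_{n,m}(\,\cdot\,))$, which here is $s^{\otimes n}$ --- and for $s^{\otimes n}$ both are automatic (the former trivially, the latter by the law of large numbers). The free distribution $q_n$ sits in the second slot, gets blurred, and need satisfy neither hypothesis; in particular Axiom~5 does \emph{not} assert that individual $q_n\in\FF_n$ are permutationally symmetric (only that $\FF_n$ is closed under permutations), but fortunately this is not required. What is needed from $q_n$ is only the lower bound $q_n\big(\bigcup_{\|s-t\|_\infty\leq\delta}T_{n,t}\big)\geq q_n(T_{n,s_n})\gtrsim\frac{1-\e}{\mathrm{poly}(n)}$, which feeds into the right-hand side of~\eqref{blurring_lemma_eq-maintext}. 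So the ``main obstacle'' you flag --- verifying concentration of $q_n$ with small $\eta$ --- is a non-issue; the genuine delicacy is the limit ordering, which you otherwise describe correctly.
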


%%%%%%%%%%%%%%%%%%%%%%%%%%%%%%%%%%%%%%%%%%%%%%%%%%%%%%%%%%%%%%%%%%%%%%%%%%%%%%%%

\subsection{A key tool: the blurring lemma}
\label{app:classical-blurring}

For any pair of positive integers $n,m\in \N_+$ and, as usual, a fixed alphabet $\XX$, we define the \deff{blurring map} as a linear map $B_{n,m}:\R^{\XX^n} \to \R^{\XX^n}$ that transforms any input probability distribution by adding $m$ symbols of each kind $x\in \XX$, shuffling the resulting sequence, and discarding $m$ symbols. In this way, if the input sequence is of length $n$, then the same is true of the output sequence. We can formalise the action of $B_{n,m}$ as
\bb
B_{n,m}(\cdot) \coloneqq \tr_{m} \mathcal{S}_{n+m} \left( (\cdot) \otimes \bigotimes \nolimits_x \delta_x^{\otimes m} \right) ,
\label{blurring}
\ee
where $\delta_x$ denotes the deterministic probability distribution concentrated on $x$ (i.e.\ such that $\delta_x(y) = 1$ if $y=x$, and $\delta_x(y)=0$ otherwise). Note that the output of the blurring map is always permutationally symmetric. To quantify the effect of the blurring map we use the smooth max-relative entropy, whose definition for general quantum states we recall as~\cite{RennerPhD,Datta08}
\bb
\label{eq:smooth-Dmax-SM}
    D^\ve_{\max}(\sigma \| \rho) = \log_2\inf\left\{\mu\in\mathds{R}\,\middle|\,\tilde{\sigma}\leq\mu\rho,\; \frac12\|\tilde{\sigma}-\sigma\|_1 \leq \ve\right\}.
\ee
The following is a precise statement of~\eqref{blurring_lemma_eq-maintext} from the main text, taken from~\cite{blurring}.

\begin{lemma}[{(Classical one-shot blurring~\cite[Lemma~9]{blurring})}]
\label{blurring_lemma}
Let $p_n, q_n \in \PP\big(\XX^n\big)$ be two $n$-copy probability distributions, with $p_n$ permutationally symmetric. For some $\delta,\eta>0$, let $p_n$ be $(1-\eta)$-concentrated on the $\delta$-ball of $n$-types around a single-copy probability distribution $s\in \PP(\XX)$, in the sense that
\bb
p_n\left(\bigcup\nolimits_{t\in \mathcal{T}_n:\ \|s - t\|_\infty \leq \delta} T_{n,t} \right) \geq 1 - \eta\, ,
\ee
where $\|s - t\|_\infty \coloneqq \max_{x\in \XX} \left|s(x) - t(x)\right|$. Then for $m=\ceil{2\delta n}$ it holds that
\bb
\rel{D_{\max}^{\eta}}{p_n}{B_{n,m}(q_n)} \leq \log_2 \frac{1}{q_n\left(\bigcup\nolimits_{t\in \mathcal{T}_n:\ \|s - t\|_\infty \leq \delta} T_{n,t} \right)} + n g\!\left( \left(2\delta + \tfrac1n\right) |\XX| \right) ,
\label{blurring_lemma_eq}
\ee
where the blurring map $B_{n,m}$ is defined by~\eqref{blurring}, and the function $g$ is the `bosonic entropy function'
\bb
g(x) \coloneqq (x+1)\log_2(x+1) - x\log_2 x.
\label{g}
\ee
\end{lemma}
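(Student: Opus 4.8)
The plan is to turn the smooth max-relative-entropy bound into an \emph{entrywise} domination $\hat p_n\le\mu\,B_{n,m}(q_n)$ after a harmless truncation of $p_n$, to collapse the problem to the level of type classes using permutation symmetry, and then to prove a single combinatorial estimate on the blurring transition kernel in type space; the fudge term $n g\big((2\delta+\tfrac1n)|\XX|\big)$ will come out of that estimate. Concretely, set $G\coloneqq\bigcup_{t\in\mathcal{T}_n:\,\|s-t\|_\infty\le\delta}T_{n,t}$ and let $\hat p_n$ be the restriction of $p_n$ to $G$. The concentration hypothesis gives $\tfrac12\|\hat p_n-p_n\|_1=\tfrac12\,p_n(G^c)\le\tfrac{\eta}{2}\le\eta$, so by the definition~\eqref{eq:smooth-Dmax} of $D^\eta_{\max}$ it suffices to produce a scalar $\mu$ with $\hat p_n\le\mu\,B_{n,m}(q_n)$ entrywise, for then $\rel{D^\eta_{\max}}{p_n}{B_{n,m}(q_n)}\le\log\mu$. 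Since $p_n$ is permutationally symmetric and $G$ is permutation-invariant, $\hat p_n$ is symmetric, and $B_{n,m}(q_n)$ is symmetric by construction~\eqref{blurring}; hence the entrywise inequality is equivalent to the type-class inequality $p_n(T_{n,t})\le\mu\,B_{n,m}(q_n)(T_{n,t})$ for every $t$ with $\|s-t\|_\infty\le\delta$ (it is automatic for $t$ outside the $\delta$-ball, where $\hat p_n$ vanishes), and since $p_n(T_{n,t})\le1$ the whole problem reduces to a lower bound on $B_{n,m}(q_n)(T_{n,t})$ for good types $t$.

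Next I would treat $B_{n,m}$ as a Markov kernel on type classes. Because blurring appends the fixed padding block, symmetrises over all positions, and then deletes a uniformly random set of positions, it is permutation-covariant; hence the probability $\kappa(t\mid t')$ that an input of type $t'$ yields an output of type $t$ depends only on the types, and every output sequence of type $t$ is equally likely. Therefore $B_{n,m}(q_n)(T_{n,t})=\sum_{t'\in\mathcal{T}_n}q_n(T_{n,t'})\,\kappa(t\mid t')\ge\big(\min_{t':\,\|s-t'\|_\infty\le\delta}\kappa(t\mid t')\big)\,q_n(G)$. The decisive point is to sum over \emph{all} good $t'$ and extract the minimum of $\kappa$, rather than to single out one $t'$: the latter would cost a factor $|\mathcal{T}_n|\le(n+1)^{|\XX|-1}$ by~\eqref{counting_types} and destroy the clean bound. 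Thus $\mu\coloneqq\big(q_n(G)\,\min_{t,t'\text{ good}}\kappa(t\mid t')\big)^{-1}$ is admissible, and everything reduces to showing $\kappa(t\mid t')\ge 2^{-n g((2\delta+1/n)|\XX|)}$ whenever $t,t'$ both lie in the $\delta$-ball around $s$.

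This last estimate is the crux, and I expect it to be the main obstacle. For good $t,t'$ the triangle inequality gives $\|t-t'\|_\infty\le2\delta$. Unfolding~\eqref{blurring}, a length-$n$ sequence of type $t'$ becomes, after padding, a multiset with $n\,t'(x)+m$ copies of each $x$, and ``output has type $t$'' is precisely the event that the uniformly random deleted set contains exactly $r(x)\coloneqq n\,t'(x)+m-n\,t(x)$ copies of $x$ for every $x$; hence $\kappa(t\mid t')=\big(\prod_x\binom{n\,t'(x)+m}{r(x)}\big)\big/\binom{n+m|\XX|}{m|\XX|}$. One checks $r(x)\ge0$ for all $x$ exactly when $t(x)-t'(x)\le m/n$, which holds because $\|t-t'\|_\infty\le2\delta$ and $m=\ceil{2\delta n}\ge2\delta n$ --- this is exactly the calibration that dictates the choice $m=\ceil{2\delta n}$. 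Bounding each numerator binomial below by $1$ and the denominator above via the standard estimate $\binom{N+k}{k}\le\frac{(N+k)^{N+k}}{N^Nk^k}=2^{N g(k/N)}$ with $N=n$ and $k=m|\XX|$ --- using that $g$ is increasing and $m|\XX|/n\le(2\delta+\tfrac1n)|\XX|$ --- gives $\kappa(t\mid t')\ge 2^{-n g((2\delta+1/n)|\XX|)}$, and substituting into the previous paragraph yields $\log\mu\le\log\frac{1}{q_n(G)}+n g\big((2\delta+\tfrac1n)|\XX|\big)$, which is~\eqref{blurring_lemma_eq}. The difficulty is entirely in this step: one must identify the blurring kernel in type space correctly, verify that the single free parameter $m$ is tuned so that the $2\delta$-neighbourhood of types --- which is what the triangle inequality produces from the two $\delta$-concentration facts --- is reachable, and recognise that the bosonic entropy function $g$ of~\eqref{g} is the right currency here, since $\binom{N+k}{k}\le 2^{N g(k/N)}$ shows that padding by $m$ copies of each of the $|\XX|$ letters costs at most $n g(m|\XX|/n)$ bits in total. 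Everything else is routine bookkeeping with the definitions of $D^\eta_{\max}$, types, and permutation symmetry.
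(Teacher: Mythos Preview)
The paper does not actually prove this lemma: it is quoted verbatim from~\cite[Lemma~8]{blurring}, and the text explicitly defers to that reference for the derivation. So there is no ``paper's own proof'' to compare against. That said, your argument is correct and is essentially the natural proof. A few remarks confirming the delicate points:

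\begin{itemize}
\item Your truncation $\hat p_n$ is a \emph{sub}-probability, but the definition~\eqref{eq:smooth-Dmax} does not require the smoothing state to be normalised, so this is fine.
\item The reduction to type classes is valid even though $q_n$ is \emph{not} assumed symmetric: the point is that $B_{n,m}(\delta_{x^n})$ depends only on the type of $x^n$ (the padding plus symmetrisation erases all other information), so $B_{n,m}(q_n)(T_{n,t})=\sum_{t'}q_n(T_{n,t'})\,\kappa(t\mid t')$ holds for arbitrary $q_n$.
\item Your identification of $\kappa(t\mid t')$ with the multivariate hypergeometric probability is correct, and the check $r(x)\ge 0$ for all $x$ iff $\|t-t'\|_\infty\le m/n$ is exactly why $m=\ceil{2\delta n}$ is the right calibration.
\item The binomial estimate $\binom{N+k}{k}\le (N+k)^{N+k}/(N^N k^k)=2^{N g(k/N)}$ follows from $\binom{N+k}{k}p^k(1-p)^N\le 1$ at $p=k/(N+k)$, and together with the monotonicity of $g$ and $m\le 2\delta n+1$ this gives the claimed fudge term.
\end{itemize}

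Your proof is complete; there is nothing to add.
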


Note that if $q_n\big(\bigcup\nolimits_{t\in \mathcal{T}_n:\ \|s - t\|_\infty \leq \delta} T_{n,t} \big) = 0$ then~\eqref{blurring_lemma_eq} holds trivially with the convention that $\log_2 1/0 = \infty$.

%%%%%%%%%%%%%%%%%%%%%%%%%%%%%%%%%%%%%%%%%%%%%%%%%%%%%%%%%%%%%%%%%%%%%%%%%%%%%%%%

\subsection{Proof of the generalised Sanov's theorem: easy part}

Throughout this subsection we will be analysing the fully quantum case. In the next one, instead, we will restrict ourselves to classical probability distributions. 

Because of the weak/strong converse duality~\cite{Tomamichel2013,Anshu2019,weakstrong}
\bb
D_{\max}^{\sqrt{1-\e}}(\rho\|\sigma) \leq D_H^\e(\rho\|\sigma) \leq D_{\max}^{1-\e-\delta}(\rho\|\sigma) + \log_2 \frac{1}{\delta},
\label{weak_strong_converse_duality}
\ee
which holds for all $\e \in (0,1)$ and all $\delta \in (0,1-\ve]$, 
the generalised Sanov's theorem is equivalent to
\bb
\lim_{n\to\infty} \frac1n\, \rel{D_{\max}^\e}{\FF_n}{\rho^{\otimes n}} \eqt{?} D(\FF \| \rho)\qquad \forall\ \rho\in \D(\HH),\quad \forall\ \e\in (0,1)\, .
\ee
We start by establishing that the above relation holds when $\e\to 0^+$. We first need a preliminary lemma.

\begin{lemma} \label{e_relent_convergence_lemma}
Let $\FF\subseteq \D(\HH)$ be a closed subset of states. For some $\rho\in \D(\HH)$ and $\e>0$, define $D^\e(\FF\|\rho) \coloneqq \min_{\sigma':\, \frac12 \|\sigma' - \FF\|_1\leq \e} D(\sigma'\|\rho)$, where $\|\sigma' - \FF\|_1 \coloneqq \min_{\sigma\in \FF} \|\sigma'- \sigma\|_1$. Then it holds that
\bb
\lim_{\e\to 0^+} D^\e(\FF\|\rho) = D(\FF\|\rho)\, .
\ee
\end{lemma}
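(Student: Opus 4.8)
The plan is to combine a monotonicity observation, which forces the limit to exist, with a compactness-plus-lower-semicontinuity argument, which identifies it. Throughout, the only properties of $D(\cdot\|\rho)$ I will use are that it is nonnegative on states and \emph{lower semicontinuous} in its first argument (with the usual convention $D(\sigma'\|\rho)=+\infty$ when $\supp\sigma'\not\subseteq\supp\rho$), and the only property of $\FF$ I will use is that it is a nonempty closed — hence compact, since $\D(\HH)$ is — subset of states.

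First I would dispose of the easy inequality. For any $\e\ge 0$ every $\sigma\in\FF$ is feasible in the definition of $D^\e(\FF\|\rho)$, since $\tfrac12\|\sigma-\FF\|_1=0\le\e$; taking the infimum over such $\sigma$ gives $D^\e(\FF\|\rho)\le D(\FF\|\rho)$, and in particular $\limsup_{\e\to 0^+}D^\e(\FF\|\rho)\le D(\FF\|\rho)$. Next I would record that $\e\mapsto D^\e(\FF\|\rho)$ is non-increasing, because enlarging $\e$ enlarges the feasible set $\{\sigma':\tfrac12\|\sigma'-\FF\|_1\le\e\}$ over which one minimises. Consequently the limit $L\coloneqq\lim_{\e\to 0^+}D^\e(\FF\|\rho)$ exists in $[0,+\infty]$ and may be evaluated along the sequence $\e=1/k$.

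The substance is the reverse inequality $L\ge D(\FF\|\rho)$. For each $k$ the feasible set $\{\sigma':\tfrac12\|\sigma'-\FF\|_1\le 1/k\}$ is closed — the map $\sigma'\mapsto\|\sigma'-\FF\|_1=\min_{\sigma\in\FF}\|\sigma'-\sigma\|_1$ is ($1$-Lipschitz, hence) continuous because $\FF$ is compact — and bounded, hence compact; since $D(\cdot\|\rho)$ is lower semicontinuous, the infimum defining $D^{1/k}(\FF\|\rho)$ is attained at some state $\sigma_k$. By compactness of $\D(\HH)$ I would pass to a subsequence $\sigma_{k_j}\to\sigma^\ast$. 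Continuity of $\sigma'\mapsto\|\sigma'-\FF\|_1$ together with $\tfrac12\|\sigma_{k_j}-\FF\|_1\le 1/k_j\to 0$ gives $\|\sigma^\ast-\FF\|_1=0$, so $\sigma^\ast\in\FF$ by closedness. Lower semicontinuity of $D(\cdot\|\rho)$ then yields
\[
 L=\lim_{j\to\infty}D^{1/k_j}(\FF\|\rho)=\liminf_{j\to\infty}D(\sigma_{k_j}\|\rho)\ \ge\ D(\sigma^\ast\|\rho)\ \ge\ D(\FF\|\rho),
\]
and together with the easy inequality this gives $L=D(\FF\|\rho)$.

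I do not expect a genuine obstacle here: the argument is a soft topological one. The two points that merely require care are (i) that the infimum in the definition of $D^\e(\FF\|\rho)$ is actually a minimum — which is exactly where compactness of the feasible set and lower semicontinuity (rather than continuity) of the relative entropy are used — and (ii) the case $D(\FF\|\rho)=+\infty$, which needs no separate treatment: there the limit state $\sigma^\ast\in\FF$ necessarily has $\supp\sigma^\ast\not\subseteq\supp\rho$, so $D(\sigma^\ast\|\rho)=+\infty$ and the displayed chain forces $L=+\infty$ as well.
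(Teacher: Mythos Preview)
Your proof is correct and follows essentially the same approach as the paper: the easy inequality $D^\e(\FF\|\rho)\le D(\FF\|\rho)$, followed by a compactness-plus-lower-semicontinuity argument extracting a convergent subsequence of minimisers whose limit lies in $\FF$. The only cosmetic differences are that you first invoke monotonicity in $\e$ to ensure the limit exists (and then work along $\e=1/k$), whereas the paper works directly with the $\liminf$; and you spell out more explicitly why the feasible set is compact and why the $D(\FF\|\rho)=+\infty$ case requires no separate treatment.
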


\begin{proof}
Clearly, $D^\e(\FF\|\rho) \leq D(\FF\|\rho)$ for all $\e>0$. Therefore, we only have to prove that $\liminf_{\e\to 0^+} D^\e(\FF\|\rho) \geq D(\FF\|\rho)$. Let $(\e_n)_n$ be a sequence of numbers $\e_n>0$ such that $\lim_{n\to\infty} \e_n = 0$ and $\lim_{n\to\infty} D^{\e_n}(\FF\|\rho) = \liminf_{\e\to 0^+} D^\e(\FF\|\rho)$. For all $n$, choose a point of minimum $\sigma'_n \in \D(\HH)$ of the lower semi-continuous function $D(\cdot\,\|\,\rho)$ on the closed set $\{\sigma':\, \frac12 \|\sigma'-\FF\|_1\leq \e_n\}$. Up to taking sub-sequences, we can assume that $\lim_{n\to\infty} \sigma'_n = \sigma$ converges. Note that since $\FF$ is closed, it must be that $\sigma \in \FF$. Now, using the lower semi-continuity of the relative entropy we have that
\bb
\liminf_{\e\to 0^+} D^\e(\FF\|\rho) = \lim_{n\to\infty} D^{\e_n}(\FF\|\rho) = \lim_{n\to\infty} D(\sigma'_n\|\rho) \geq D(\sigma\|\rho) \geq D(\FF\|\rho)\, ,
\ee
which concludes the proof.
\end{proof}

\begin{prop} \label{limit_e_to_0_prop}
Let $(\FF_n)_n$ be a sequence of sets of states $\FF_n \subseteq \D\big(\HH^{\otimes n}\big)$ that obeys the Brand\~{a}o--Plenio axioms (Axioms~1--5 in Section~\ref{sec_BP_axioms}). Then for all states $\rho\in \D(\HH)$ it holds that
\bb
\lim_{\e \to 0^+} \liminf_{n\to\infty} \frac1n\, \rel{D_{\max}^\e}{\FF_n}{\rho^{\otimes n}} = \lim_{\e \to 0^+} \limsup_{n\to\infty} \frac1n\, \rel{D_{\max}^\e}{\FF_n}{\rho^{\otimes n}} = D(\FF \| \rho)\, .
\label{limit_e_to_0}
\ee
\end{prop}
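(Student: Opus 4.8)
\textbf{Proof plan for Proposition~\ref{limit_e_to_0_prop}.}
The plan is to prove the chain of equalities in~\eqref{limit_e_to_0} by squeezing: since $\rel{D_{\max}^\e}{\FF_n}{\rho^{\otimes n}}$ is monotonically non-increasing in $\e$, we have $\liminf_n \tfrac1n D_{\max}^\e \le \limsup_n \tfrac1n D_{\max}^\e$ trivially, and both are non-increasing in $\e$, so the limits $\e\to0^+$ exist; it therefore suffices to prove the two bounds $\lim_{\e\to0^+}\liminf_n \tfrac1n D_{\max}^\e(\FF_n\|\rho^{\otimes n}) \ge D(\FF\|\rho)$ and $\lim_{\e\to0^+}\limsup_n \tfrac1n D_{\max}^\e(\FF_n\|\rho^{\otimes n}) \le D(\FF\|\rho)$. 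The key point is that we only need Axioms~1--5 here (no faithfulness), because this is the ``$\e=0$-type'' statement where standard entropic arguments already work.

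For the upper bound, I would take a single minimiser $\sigma\in\FF_1$ achieving $D(\sigma\|\rho)=D(\FF\|\rho)$, and use $\sigma^{\otimes n}\in\FF_n$ (Axiom~4) as a feasible point, so that $\tfrac1n D_{\max}^\e(\FF_n\|\rho^{\otimes n}) \le \tfrac1n D_{\max}^\e(\sigma^{\otimes n}\|\rho^{\otimes n})$. By the i.i.d.\ asymptotic equipartition property for the smooth max-relative entropy (Tomamichel et al.), $\lim_{\e\to0^+}\lim_{n\to\infty}\tfrac1n D_{\max}^\e(\sigma^{\otimes n}\|\rho^{\otimes n}) = D(\sigma\|\rho) = D(\FF\|\rho)$, which gives the upper bound on the $\limsup$. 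For the lower bound, I would fix $\e>0$ and $n$, pick a near-optimiser $\sigma_n\in\FF_n$ and a subnormalised $\tilde\sigma_n$ with $\tfrac12\|\tilde\sigma_n-\sigma_n\|_1\le\e$ and $\tilde\sigma_n\le 2^{\frac1n D_{\max}^\e(\FF_n\|\rho^{\otimes n})}\rho^{\otimes n}$; then relate $D_{\max}^\e$ to the Umegaki relative entropy via a chain like $D_{\max}^\e(\sigma_n\|\rho^{\otimes n}) \gtrsim D^\e(\sigma_n\|\rho^{\otimes n})$ (in the sense of Lemma~\ref{e_relent_convergence_lemma}, applied at level $n$), use the superadditivity/additivity of $D(\cdot\|\rho^{\otimes n})$ together with Axioms~1--5 to get $\tfrac1n D(\sigma_n\|\rho^{\otimes n}) \ge D(\FF\|\rho) - o(1)$ — this is essentially the argument in the additivity discussion of the Methods section, using non-negativity of mutual information and closure of $\FF$ under partial traces to lower-bound the $n$-copy quantity by $n$ times the single-copy one — and finally invoke Lemma~\ref{e_relent_convergence_lemma} to remove the smoothing as $\e\to0^+$. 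A symmetrisation step (Axiom~5) may be needed to assume $\sigma_n$ is permutationally invariant before applying the subadditivity estimate.

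\textbf{Main obstacle.} The delicate part is the lower bound: converting the \emph{smoothed max}-relative entropy at finite $\e$ into a statement about the \emph{Umegaki} relative entropy without losing a constant that survives the $\tfrac1n$ normalisation. The clean way is to use that $D_{\max}^\e(\sigma\|\rho) \ge D^{\e'}(\sigma\|\rho)$ for a suitable $\e'$ comparable to $\e$ (where $D^{\e'}$ is the $\e'$-smoothed Umegaki divergence of Lemma~\ref{e_relent_convergence_lemma}), combined with the superadditivity of $\sigma_n \mapsto D(\sigma_n\|\rho^{\otimes n})$ over the family $(\FF_n)_n$; the latter superadditivity is exactly the inequality derived in the Methods via the mutual-information argument, so the content reduces to assembling those pieces carefully and tracking that all error terms vanish after dividing by $n$ and sending $\e\to0^+$. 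I expect no conceptual difficulty beyond this bookkeeping, since faithfulness (Axiom~6) — the genuinely hard input — is not needed until the full Sanov theorem with $\e>0$ fixed.
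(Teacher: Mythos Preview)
Your proposal is correct and follows essentially the same route as the paper's proof: the upper bound via $\sigma^{\otimes n}\in\FF_n$ and the i.i.d.\ AEP is identical, and the lower bound via $D_{\max}^\e\ge D^\e$ followed by the subadditivity-of-entropy/mutual-information inequality and Lemma~\ref{e_relent_convergence_lemma} is exactly what the paper does. The only minor difference is that the paper carries out the superadditivity step directly on the smoothed state $\sigma'_n$ (using that each marginal $\sigma'_{n,j}$ is $\e$-close to $\FF_1$ by data processing and Axiom~3), so no symmetrisation via Axiom~5 is needed.
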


\begin{proof}
Fix $\sigma \in \FF_1$. Due to Axiom~4, we have that $\sigma^{\otimes n}\in \FF_n$ for all $n\in \N^+$. Therefore
\bb
\lim_{\e \to 0^+} \limsup_{n\to\infty} \frac1n\, \rel{D_{\max}^\e}{\FF_n}{\rho^{\otimes n}} &\leq \lim_{\e \to 0^+} \limsup_{n\to\infty} \frac1n\, \rel{D_{\max}^\e}{\sigma^{\otimes n}}{\rho^{\otimes n}} \\
&= D(\sigma\|\rho)\, ,
\ee
where the last step is by the asymptotic equipartition property in the i.i.d.\ case~\cite{tomamichel_2009,TOMAMICHEL}.
Minimising over $\sigma\in \FF_1$, we obtain that
\bb
\lim_{\e \to 0^+} \limsup_{n\to\infty} \frac1n\, \rel{D_{\max}^\e}{\FF_n}{\rho^{\otimes n}} \leq D(\FF\|\rho)\, .
\ee
For the other direction, we have that
\bb
\rel{D_{\max}^\e}{\FF_n}{\rho^{\otimes n}} &\geq \rel{D^\e}{\FF_n}{\rho^{\otimes n}} \\
&= \min_{\sigma'_n:\ \frac12 \|\sigma'_n - \FF_n\|_1\leq \e} \rel{D}{\sigma'_n}{\rho^{\otimes n}} \\
&\geqt{(i)} \min_{\sigma'_n:\ \frac12 \|\sigma'_n - \FF_n\|_1\leq \e} \sum_{j=1}^n \rel{D}{\sigma'_{n,j}}{\rho} \\
&\geqt{(ii)} n\, D^\e(\FF\|\rho)\, .
\label{limit_e_to_0_proof_key_inequality}
\ee
Here, in~(i) we denoted with $\sigma'_{n,j}$ the reduced state of $\sigma'_n$ on the $j^\text{th}$ subsystem. To justify~(i), note that by the sub-additivity of entropy
\bb
S(\sigma'_n) \leq \sum_{j=1}^n S\big(\sigma'_{n,j}\big)\, ,
\ee
which in turn implies that
\bb
\rel{D}{\sigma'_n}{\rho^{\otimes n}} = - S(\sigma'_n) + \sum_{j=1}^n \left( S\big(\sigma'_{n,j}\big) + \rel{D}{\sigma'_{n,j}}{\rho} \right) \geq \sum_{j=1}^n \rel{D}{\sigma'_{n,j}}{\rho}\, .
\ee
In~(ii), instead, we simply observed that due to data processing and Axiom~3 it holds that
\bb
\frac12 \,\big\| \sigma'_{n,j} - \FF_1\big\|_1 \leq \frac12 \, \big\|\sigma'_n - \FF_n\big\|_1 \leq \e
\ee
for all $j=1,\ldots, n$. To conclude the proof, divide both sides of~\eqref{limit_e_to_0_proof_key_inequality} by $n$, take the limit inferior as $n\to\infty$ first and $\e\to 0^+$ second, and use Lemma~\ref{e_relent_convergence_lemma}.
\end{proof}

%%%%%%%%%%%%%%%%%%%%%%%%%%%%%%%%%%%%%%%%%%%%%%%%%%%%%%%%%%%%%%%%%%%%%%%%%%%%%%%%

\subsection{Proof of the generalised Sanov's theorem: hard part} \label{subsec:classical_Sanov_hard_part}

As anticipated, we now restrict ourselves to the classical case. At this point we know that
\bb
\lim_{n\to\infty} \frac1n\, \rel{D_{\max}^\e}{\FF_n}{p^{\otimes n}} \leq D(\FF \| p)\qquad \forall\ p\in \PP(\XX),\quad \forall\ \e\in (0,1)\, ,
\ee
simply because the function on the left-hand side is monotonically non-increasing in $\e$, and its limit for $\e\to 0^+$ corresponds to the right-hand side due to Proposition~\ref{limit_e_to_0_prop}. To conclude, we need to prove the converse inequality. Before we start, let us recall a useful estimate due to Sanov himself, which, unfortunately, is sometimes also known as Sanov's theorem \cite{Sanov57,CT}. See also the particularly clear formulation in~\cite[Exercise~2.12, p.~29]{CSISZAR-KOERNER}.

\begin{lemma} \label{Sanov_lemma}
Let $\AA\subseteq \PP(\XX)$ be a subset of probability distributions over a finite alphabet $\XX$. Then for all $p\in \PP(\XX)$ and for all $n\in \N^+$ it holds that
\bb
p^{\otimes n} \big(\{x^n\!:\, t_{x^n}\in \AA\} \big) = p^{\otimes n} \left(\bigcup\nolimits_{t\in \mathcal{T}_n\cap \AA} T_{n,t} \right) \leq (n+1)^{|\XX| - 1}\, 2^{-n D(\AA\|p)} .
\ee
If $\AA$ is convex, then the polynomial factor $(n+1)^{|\XX| - 1}$ in the rightmost side can be omitted.
\end{lemma}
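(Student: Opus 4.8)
The plan is to prove this classical estimate by the method of types, following the standard argument (see, e.g.,~\cite{CSISZAR-KOERNER}). The proof rests on three elementary facts about $n$-types over the fixed alphabet $\XX$. First, for any sequence $x^n\in\XX^n$ with type $t\coloneqq t_{x^n}$, the i.i.d.\ probability factorises as
\[
p^{\otimes n}(x^n) = \prod_{x\in\XX} p(x)^{N(x|x^n)} = \prod_{x\in\XX} p(x)^{nt(x)} = 2^{-n\left(H(t) + D(t\|p)\right)},
\]
where $H(t) = -\sum_{x} t(x)\log t(x)$ is the Shannon entropy and one uses the identity $\sum_x t(x)\log p(x) = -H(t) - D(t\|p)$; the key point is that this value depends only on $t$, not on the individual sequence. (If $\supp(t)\not\subseteq\supp(p)$ both sides are $0$ and all bounds below are trivial, so I may assume $\supp(t)\subseteq\supp(p)$.)

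Second, I would estimate the size of a type class. Applying the identity above with $p$ replaced by $t$ itself --- legitimate because $t$ is itself an $n$-type --- gives $t^{\otimes n}(x^n) = 2^{-nH(t)}$ for every $x^n\in T_{n,t}$, and therefore
\[
1 \ \geq\ t^{\otimes n}(T_{n,t}) \ =\ |T_{n,t}|\cdot 2^{-nH(t)}, \qquad\text{that is,}\qquad |T_{n,t}| \leq 2^{nH(t)}.
\]
Combining the two displays, for every $t\in\mathcal{T}_n$ one gets $p^{\otimes n}(T_{n,t}) = |T_{n,t}|\,2^{-n(H(t)+D(t\|p))} \leq 2^{-nD(t\|p)}$.

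Finally, I would decompose the relevant event into type classes and sum. Since $\{x^n : t_{x^n}\in\AA\} = \bigcup_{t\in\mathcal{T}_n\cap\AA} T_{n,t}$ is a disjoint union over types,
\[
p^{\otimes n}\!\left(\bigcup_{t\in\mathcal{T}_n\cap\AA} T_{n,t}\right) = \sum_{t\in\mathcal{T}_n\cap\AA} p^{\otimes n}(T_{n,t}) \leq \sum_{t\in\mathcal{T}_n\cap\AA} 2^{-nD(t\|p)} \leq |\mathcal{T}_n|\cdot 2^{-n\,D(\AA\|p)},
\]
where the last step uses that every $t$ occurring in the sum lies in $\AA$, hence $D(t\|p) \geq \inf_{q\in\AA} D(q\|p) = D(\AA\|p)$, together with the fact that the sum has at most $|\mathcal{T}_n|$ terms; inserting the counting bound $|\mathcal{T}_n| \leq (n+1)^{|\XX|-1}$ from~\eqref{counting_types} gives the claim.

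I do not expect any substantive obstacle here --- this is a classical result, and the only points needing minor care are the degenerate cases. If $\mathcal{T}_n\cap\AA = \emptyset$ the left-hand side is $0$ and there is nothing to prove; if $D(\AA\|p) = \infty$ then no distribution in $\AA$, and in particular no type in $\mathcal{T}_n\cap\AA$, is supported inside $\supp(p)$, so the probability on the left vanishes; and distributions in $\AA$ with support outside $\supp(p)$ contribute $D(\cdot\|p)=\infty$ and are harmless in the infimum. All of these are consistent with the stated inequality.
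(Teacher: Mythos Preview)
Your proof is correct and is precisely the standard method-of-types argument. Note that the paper itself does not give a proof of this lemma: it is stated as a classical recall with a citation to Sanov~\cite{Sanov57}, so there is no alternative approach in the paper to compare against.
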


We are now ready to provide the complete proof of the generalised classical Sanov's theorem (Theorem~\ref{Sanov_thm}).

\begin{proof}[Proof of Theorem~\ref{Sanov_thm}]
As we just discussed, it suffices to show that
\bb
\liminf_{n\to\infty} \frac1n\, \rel{D_{\max}^\e}{\FF_n}{p^{\otimes n}} \geqt{?} D(\FF \| p)\qquad \forall\ p\in \PP(\XX),\quad \forall\ \e\in (0,1)\, .
\label{proof_Sanov_eq1}
\ee
We will proceed by contradiction. Assume that for infinitely many values of $n$ there exists 
\bb
q'_n\in \PP(\XX^n)\, , \quad q_n \in \FF_n\, ,\quad \lambda < D(\FF\|p)\, ,
\label{proof_Sanov_eq2}
\ee
such that
\bb
\frac12 \,\big\|q_n - q'_n\big\|_1\leq \e\, ,\quad q'_n \leq 2^{n\lambda} p^{\otimes n} ,
\label{proof_Sanov_eq3}
\ee
where in the second expression $\leq$ denotes entry-wise inequality between vectors in $\R^{\XX^n}$. For the rest of this proof, unless otherwise specified, every limit $n\to\infty$ is to be intended as taken on the diverging sequence of values of $n$ for which~\eqref{proof_Sanov_eq2} -- \eqref{proof_Sanov_eq3} hold.

Due to Lemma~\ref{e_relent_convergence_lemma}, we can find some $\zeta > 0$ with the property that
\bb
D^\zeta(\FF\|p) > \lambda\, .
\label{proof_Sanov_eq4}
\ee
Evaluating~\eqref{proof_Sanov_eq3} on the set 
\bb
\pazocal{Y}_{n,\zeta} \coloneqq \left\{x^n\!:\, \frac12\, \big\|t_{x^n} - \FF_1\big\|_1\leq \zeta \right\} = \bigcup_{t\in\mathcal{T}_n:\, \frac12 \|t - \FF_1\|_1\leq \zeta} T_{n,t}
\label{proof_Sanov_eq5}
\ee
of sequences whose type is $\zeta$-close to a free probability distribution yields
\bb
q'_n \big(\pazocal{Y}_{n,\zeta}\big) &\leq 2^{n\lambda}\, p^{\otimes n}\big(\pazocal{Y}_{n,\zeta}\big) \\
&\leqt{(i)} 2^{n\lambda}\, (n+1)^{|\XX|-1}\, 2^{-n D^\zeta(\FF\|p)} \\
&\hspace{-7pt} \tendsn{\text{(ii)}} 0\, ,
\label{proof_Sanov_eq6}
\ee
where in~(i) we used Lemma~\ref{Sanov_lemma}, and~(ii) follows from~\eqref{proof_Sanov_eq4}. 

We thus see that
\bb
q'_n \big(\pazocal{Y}_{n,\zeta}^c\big) \tendsn{} 1\, ,
\ee
entailing that
\bb
\liminf_{n\to\infty} q_n\big(\pazocal{Y}_{n,\zeta}^c\big) \geq 1-\e
\label{proof_Sanov_eq8}
\ee
because of~\eqref{proof_Sanov_eq3}. In particular,
\bb
q_n\big(\pazocal{Y}_{n,\zeta}^c\big) \geq \frac{1-\e}{2}
\ee
for all sufficiently large $n$ in the sequence. We further have
\bb
q_n\big(\pazocal{Y}_{n,\zeta}^c\big) = q_n\left(\bigcup\nolimits_{s\in\mathcal{T}_n:\, \frac12 \|s - \FF_1\|_1 > \zeta} T_{n,s}\right) = \sum_{s\in\mathcal{T}_n:\, \frac12 \|s - \FF_1\|_1 > \zeta} q_n(T_{n,s})
\ee
as well as $|\mathcal{T}_n| \leq (n+1)^{|\XX|-1}$, and as such we can pick for all $n$ in the sequence some $s_n\in\mathcal{T}_n$ such that $\frac12 \|s_n - \FF_1\|_1 > \zeta$ and
\bb
q_n(T_{n,s_n}) \geq \frac{1-\e}{2\, (n+1)^{|\XX|-1}}\, .
\label{proof_Sanov_eq11}
\ee
By compactness, up to extracting a sub-sequence we can assume that $s_n \ctends{}{n\to\infty}{1.5pt} s$ for some $s \in \PP(\XX)$. For the rest of this proof, unless otherwise specified, every limit $n\to\infty$ is understood to be taken on this sub-sequence. Note that $\frac12\, \big\| s - \FF_1\big\|_1\geq \zeta$, implying in particular that
\bb
s\notin \FF_1\, .
\label{proof_Sanov_eq12}
\ee

Now, fix two arbitrarily small $\delta,\eta>0$. By typicality,
\bb
s^{\otimes n} \left(\bigcup\nolimits_{t \in \mathcal{T}_n:\ \|s - t\|_\infty \leq \delta} T_{n,t} \right) \geq 1 - \eta
\ee
holds for all sufficiently large $n$. Therefore, the blurring lemma ensures that
\bb
\rel{D^\eta_{\max}}{s^{\otimes n}}{B_{n,m}(q_n)} &\leqt{(iii)} - \log_2 q_n \left(\bigcup\nolimits_{t \in \mathcal{T}_n:\ \|s - t\|_\infty \leq \delta} T_{n,t} \right) + n g\big(\big(2\delta + \tfrac1n\big) |\XX|\big) \\
&\leqt{(iv)} - \log_2 q_n (T_{n,s_n}) + n g\big(\big(2\delta + \tfrac1n\big) |\XX|\big) \\
&\leqt{(v)} -\log_2(1-\e) + 1 + \left(|\XX|-1\right) \log_2(n+1) + n g\big(\big(2\delta + \tfrac1n\big) |\XX|\big)\, ,
\label{proof_Sanov_eq14}
\ee
where $m = \ceil{2\delta n}$. The justification of the above chain of inequalities is as follows: in~(iii) we applied Lemma~\ref{blurring_lemma} with $p_n = s^{\otimes n}$ (which is clearly permutationally symmetric); (iv)~holds for all sufficiently large $n$ in the sub-sequence, because $s_n \ctends{}{n\to\infty}{1.5pt} s$; finally, in~(v) we leveraged the estimate in~\eqref{proof_Sanov_eq11}.

We are almost done. To conclude, it suffices to note that by Axioms~1,~2,~4, and~5 together with standard properties of the max-relative entropy from~\eqref{eq:smooth-Dmax-SM} with $\ve=0$, it holds that
\bb
\rel{D_{\max}}{B_{n,m}(q_n)}{\FF_n} \leq m |\XX| \log_2 \frac1\mu = \ceil{2\delta n} |\XX| \log_2 \frac1\mu\, ,
\label{proof_Sanov_eq15}
\ee
where $\mu \coloneqq \max_{q_0 \in \FF_1} \min_{x\in \XX} q_0(x)$, which is strictly positive by Axiom~2. Therefore, combining~\eqref{proof_Sanov_eq14} and~\eqref{proof_Sanov_eq15} thanks to the triangle inequality for the smooth max-relative entropy \cite{Matthias-Alex}, we see that
\bb
\rel{D^\eta_{\max}}{s^{\otimes n}}{\FF_n} &\leq \rel{D^\eta_{\max}}{s^{\otimes n}}{B_{n,m}(q_n)} + \rel{D^\eta_{\max}}{B_{n,m}(q_n)}{\FF_n} \\
&\leq -\log_2(1-\e) + 1 + \left(|\XX|-1\right) \log_2(n+1) + n g\big(\big(2\delta + \tfrac1n\big) |\XX|\big) + \ceil{2\delta n} |\XX| \log_2 \frac1\mu\, .
\ee
We can now divide by $n$ and take $n\to\infty$, which yields
\bb
{\limsup_{n\to\infty}}' \frac1n\, \rel{D^\eta_{\max}}{s^{\otimes n}}{\FF_n} \leq g\big(2\delta |\XX|\big) + 2\delta |\XX| \log_2 \frac1\mu\, ,
\ee
where we denoted with $\limsup'_{n\to\infty}$ the limit on the sub-sequence. Taking into account all values of $n\in \N^+$ instead of a sub-sequence only, the above reasoning shows that
\bb
\liminf_{n\to\infty} \frac1n\, \rel{D^\eta_{\max}}{s^{\otimes n}}{\FF_n} \leq g\big(2\delta |\XX|\big) + 2\delta |\XX| \log_2 \frac1\mu\, .
\ee
Taking the limits $\eta \to 0^+$ and $\delta \to 0^+$ shows that
\bb
D^\infty(s\|\FF) \leqt{(vi)} \lim_{\eta \to 0^+} \liminf_{n\to\infty} \frac1n\, \rel{D^\eta_{\max}}{s^{\otimes n}}{\FF_n} \leq 0\, ,
\ee
where~(vi) is from asymptotic continuity. (Incidentally, Brand\~{a}o--Plenio--Datta's asymptotic equipartition property~\cite{Brandao2010,Datta-alias} guarantees that~(vi) is actually an equality, although we are not using this particular fact here.) Since $D^\infty(s\|\FF)\geq 0$ holds by construction, we have just proved that
\bb
D^\infty(s\|\FF) = 0\, ,\quad s\notin \FF_1\, ,
\ee
which is in blatant contradiction with Axiom~6.
%!!!
\end{proof}

%%%%%%%%%%%%%%%%%%%%%%%%%%%%%%%%%%%%%%%%%%%%%%%%%%%%%%%%%%%%%%%%%%%%%%%%%%%%%%%%

\subsection{On the optimal test in the classical case} \label{subsec:optimal_test_classical}

The difficult part of the proof of the generalised classical Sanov's theorem (Theorem~\ref{Sanov_thm}), which has been presented in the above Section~\ref{subsec:classical_Sanov_hard_part}, consists, in technical terms, in an achievability statement. Namely, by proving~\eqref{proof_Sanov_eq1} we have indirectly shown that there exists a tests that can distinguish $p^{\otimes n}$ from an arbitrary free probability distribution with asymptotically vanishing type II error and type I error exponent arbitrarily close to the reverse relative entropy $D(\FF\|p)$. This naturally begs the question: how can such a test be designed in practice? Here we would like to argue that a test with this property can be described as follows. Given a string of alphabet symbols $x^n\in \XX^n$, we start by calculating its type $t_{x^n}$, a probability distribution on $\XX$ given by $t_{x^n}(x) \coloneqq \frac{N(x|x^n)}{n}$, where $N(x|x^n)$ is the number of times $x$ appears in $x^n$ (see Section~\ref{subsec:types}). Then, for some small tolerance $\zeta>0$:
\begin{itemize}
    \item if $\frac12 \left\|t_{x^n} - \FF_1\right\|_1 \leq \zeta$, then we guess that the underlying probability distribution is free;
    \item otherwise, we guess that it is $p$.
\end{itemize}
Using the notation introduced in Eq.~\eqref{proof_Sanov_eq5}, an alternative way of rephrasing the above test is by saying that we guess $p$ if and only if $x^n\notin \pazocal{Y}_{n,\zeta}$.

The type II error probability induced by the above test, i.e.\ the (worst-case) probability that a free probability distribution is mistakenly identified as $p$, is given by
\bb
\beta_n \coloneqq \max_{q_n\in \FF_n} \sum_{x^n\notin \pazocal{Y}_{n,\zeta}} q_n(x^n) = \max_{q_n\in \FF_n} q_n\big(\pazocal{Y}_{n,\zeta}^c\big) .
\ee
The above proof shows that assuming Eq.~\eqref{proof_Sanov_eq8} for any diverging subsequence of values of $n$ and for any $\e\in (0,1)$ leads to a contradiction, which is equivalent to stating that
\bb
\lim_{n\to\infty} \beta_n = 0\, .
\ee
That is, the type II error probability vanishes asymptotically, which matches our first requirement for an optimal test.

What about the type I error exponent? The type I error probability $\alpha_n$ satisfies that
\bb
\alpha_n = \sum_{x^n\in \pazocal{Y}_{n,\zeta}} p^{\otimes n}(x^n) = p^{\otimes n}\big(\pazocal{Y}_{n,\zeta}\big) \leq 2^{-n D^\zeta(\FF\|p)} ,
\label{optimal_test_classical_type_I}
\ee
where the last inequality follows from Lemma~\ref{Sanov_lemma}, once one realises that the set $\big\{q\in \PP(\XX):\, \frac12 \|q-\FF_1\|_1 \leq \zeta\big\}$ is convex; here,
\bb
D^\zeta(\FF\|p) \coloneqq \min_{q:\, \frac12 \|q-\FF_1\|_1 \leq \zeta} D(q\|p)\, .
\ee
The relation~\eqref{optimal_test_classical_type_I} shows directly that the type I error exponent is at least equal to $D^\zeta(\FF\|p)$. By Lemma~\ref{e_relent_convergence_lemma}, $\lim_{\zeta\to 0^+} D^\zeta(\FF\|p) = D(\FF\|p)$; hence, by choosing $\zeta>0$ sufficiently small, the exponent can be made arbitrarily close to the reverse relative entropy $D(\FF\|p)$, meeting also the second requirement on the optimal test.

%%%%%%%%%%%%%%%%%%%%%%%%%%%%%%%%%%%%%%%%%%%%%%%%%%%%%%%%%%%%%%%%%%%%%%%%%%%%%%%%

\subsection{How to verify Axiom~6}

Axiom~6 may seem like a strange one. For a start, it involves directly an entropic quantity, a feature that appears to be in contradiction with the philosophy of information theory, that prescribes that entropic quantities should find their meaning operationally rather than axiomatically. But even more worryingly, it may seem very hard to verify Axiom~6 for any given sequence of sets of probability distributions $(\FF_n)_n$, as doing so would involve estimating a regularised quantity. To solve this issue, we illustrate here a simple sufficient condition that allows a swift verification of Axiom~6. The result below was proved by Piani in the fully quantum case in a pioneering work~\cite{Piani2009}. Here we need only its classical version, which we state here for the sake of completeness.

\begin{lemma}[{\cite[Theorem~1 and ensuing discussion]{Piani2009}}] \label{Piani_lemma_to_verify_axiom_6}
For a finite alphabet $\XX$, let $(\FF_n)_n$ be a sequence of sets of probability distributions $\FF_n\subseteq \PP(\XX^n)$, and let $\mathds{L} = (\mathds{L}_n)_n$ be a sequence of sets of classical channels $\mathds{L}_n$ on $\XX^n$. If the compatibility condition in Definition~\ref{quantum_compatibility_def} (see also~\eqref{classical_compatibility} for a classical formulation) is obeyed, then
\bb
D^\infty(p\|\FF) \geq D^{\mathds{L}}(p\|\FF)\, ;
\ee
if, in addition, $\mathds{L}_1$ is informationally complete and $\FF$ is topologically closed, then $D^\infty(\cdot\|\FF)$ if faithful, in the sense that
\bb
D^\infty(p\|\FF) > 0\qquad \forall\ p \notin \FF\, .
\ee
\end{lemma}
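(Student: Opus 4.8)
The plan is to deduce the regularised bound from a single-copy estimate by an iterated \emph{local-processing} argument in the spirit of Piani~\cite{Piani2009}: at each step we measure one copy, use the data-processing inequality together with the chain rule for the relative entropy, and invoke the compatibility condition to guarantee that conditioning a free distribution on a local outcome again produces a free distribution. Concretely, fix $n$ and let $\sigma_n\in\FF_n$ minimise $\rel{D}{p^{\otimes n}}{\cdot}$ over $\FF_n$ (the minimum is attained by closedness/compactness, Axiom~1). Its first marginal $\tau\coloneqq\Tr_{2\ldots n}\sigma_n$ lies in $\FF_1$ by closure under partial traces (Axiom~3), so we may pick $\Lambda\in\mathds{L}_1$ with $\rel{D}{\Lambda(p)}{\Lambda(\tau)}\geq D^{\mathds{L}}(p\|\FF)-\delta$ for an arbitrarily small $\delta>0$. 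Applying $\Lambda\otimes\id^{\otimes(n-1)}$, then data processing, then the chain rule,
\begin{equation}\begin{aligned}
\rel{D}{p^{\otimes n}}{\sigma_n}
&\geq \rel{D}{\Lambda(p)\otimes p^{\otimes(n-1)}}{(\Lambda\otimes\id^{\otimes(n-1)})(\sigma_n)}\\
&= \rel{D}{\Lambda(p)}{\Lambda(\tau)} + \sum_{y}\Lambda(p)(y)\, \rel{D}{p^{\otimes(n-1)}}{\sigma_{n-1}^{(y)}},
\end{aligned}\end{equation}
where $\sigma_{n-1}^{(y)}$ is the distribution on $\XX^{n-1}$ obtained by conditioning $(\Lambda\otimes\id^{\otimes(n-1)})(\sigma_n)$ on outcome $y$. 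The compatibility condition~\eqref{classical_compatibility}, read with ``$n$''$=1$ and ``$m$''$=n-1$, says exactly that the corresponding unnormalised conditional lies in $\cone(\FF_{n-1})$, hence $\sigma_{n-1}^{(y)}\in\FF_{n-1}$ whenever it is defined; for outcomes with $\Lambda(p)(y)>0$ but $\Lambda(\tau)(y)=0$ the first term is already $+\infty$. Thus $\rel{D}{p^{\otimes(n-1)}}{\sigma_{n-1}^{(y)}}\geq\rel{D}{p^{\otimes(n-1)}}{\FF_{n-1}}$, and since $\sum_y\Lambda(p)(y)=1$ we obtain the recursion $\rel{D}{p^{\otimes n}}{\FF_n}\geq D^{\mathds{L}}(p\|\FF)-\delta+\rel{D}{p^{\otimes(n-1)}}{\FF_{n-1}}$.

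Iterating down to $n=1$, where $\rel{D}{p}{\FF_1}\geq D^{\mathds{L}}(p\|\FF)$ holds trivially by data processing, gives $\rel{D}{p^{\otimes n}}{\FF_n}\geq n\big(D^{\mathds{L}}(p\|\FF)-\delta\big)$; dividing by $n$, letting $n\to\infty$ (the limit exists by Axiom~4 via Fekete's lemma), and then $\delta\to0$ yields $\rel{D^\infty}{p}{\FF}\geq D^{\mathds{L}}(p\|\FF)$ (with the obvious modification if the right-hand side is $+\infty$). The point I expect to be the main obstacle — more conceptual than computational — is that the measurement $\Lambda$ must be chosen \emph{adaptively} at each step, tuned to the marginal $\tau$ of the current optimal free state: a single fixed measurement applied to all copies would only deliver the weaker bound $\rel{D^\infty}{p}{\FF}\geq\sup_{\Lambda}\inf_{\sigma\in\FF_1}\rel{D}{\Lambda(p)}{\Lambda(\sigma)}$, and it is the adaptive choice that promotes $\sup_\Lambda\inf_\sigma$ to the correct $\inf_\sigma\sup_\Lambda=D^{\mathds{L}}(p\|\FF)$. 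Secondary care is required in the chain-rule bookkeeping — the conditional terms are weighted by $\Lambda(p)$ rather than by the free marginal $\Lambda(\tau)$, and the support mismatch must be disposed of — and in applying the compatibility axiom in the right specialisation.

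For the faithfulness claim, set $h(\sigma)\coloneqq\sup_{\Lambda\in\mathds{L}_1}\rel{D}{\Lambda(p)}{\Lambda(\sigma)}$, so that $D^{\mathds{L}}(p\|\FF)=\inf_{\sigma\in\FF_1}h(\sigma)$. If $\mathds{L}_1$ is informationally complete, then $h(\sigma)=0$ forces $\Lambda(p)=\Lambda(\sigma)$ for every $\Lambda\in\mathds{L}_1$, i.e.\ $\sum_x\Lambda(y|x)\big(p(x)-\sigma(x)\big)=0$ for all rows $(\Lambda(y|x))_x$; since these rows span $\R^{\XX}$ by~\eqref{information_completeness_classical}, this means $\sigma=p$. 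Hence $h>0$ pointwise on $\FF_1$ as soon as $p\notin\FF_1$. Moreover each $\sigma\mapsto\rel{D}{\Lambda(p)}{\Lambda(\sigma)}$ is lower semicontinuous (composition of the continuous map $\sigma\mapsto\Lambda(\sigma)$ with the lower semicontinuous relative entropy), so $h$, a supremum of such maps, is lower semicontinuous; as $\FF_1$ is compact (closed and contained in the simplex), $h$ attains a strictly positive minimum there, i.e.\ $D^{\mathds{L}}(p\|\FF)>0$. Combining with the first part, $\rel{D^\infty}{p}{\FF}\geq D^{\mathds{L}}(p\|\FF)>0$ for every $p\notin\FF_1$, which is the asserted faithfulness.
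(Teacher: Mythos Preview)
Your proof is correct and faithfully reconstructs Piani's argument, which the paper cites but does not itself reproduce. The iterated peeling-off of one copy at a time via the chain rule for relative entropy, with the measurement $\Lambda$ chosen adaptively to the current free marginal, is precisely Piani's mechanism; your observation that adaptivity is exactly what upgrades $\sup_\Lambda\inf_\sigma$ to the required $\inf_\sigma\sup_\Lambda=D^{\mathds{L}}(p\|\FF)$ captures the conceptual core. The faithfulness argument via lower semicontinuity of $h$ on the compact set $\FF_1$ is also clean and correct.

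One minor point of bookkeeping: you invoke Axioms~1 and~3 (closedness for the minimiser to exist, and closure under partial traces to get $\tau\in\FF_1$), yet the lemma as stated does not list these among its hypotheses --- it assumes only compatibility for the first claim. The minimiser issue is harmless (take a $\delta'$-optimal $\sigma_n$ instead), but closure under partial traces is genuinely needed for your adaptive choice of $\Lambda$, and it does not follow from compatibility alone. This is a slight informality in the paper's statement rather than a flaw in your argument: in the paper's only application (to the sets $\widetilde{\FF}_n$ inside the proof of Theorem~\ref{quantum_Sanov_thm}) all Brand\~ao--Plenio axioms have already been verified, and Piani's original setting likewise has partial-trace closure built in. You might simply add a parenthetical noting that Axiom~3 is being used.
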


%%%%%%%%%%%%%%%%%%%%%%%%%%%%%%%%%%%%%%%%%%%%%%%%%%%%%%%%%%%%%%%%%%%%%%%%%%%%%%%%

\section{Generalised quantum Sanov's theorem}
\label{app:quantum-fullproof}

To prove the result in the quantum case, we consider another axiom, concerned this time with the existence of a suitable informationally complete and compatible sequence of measurements. We will use this axiom to guarantee that Axiom~6 is satisfied for probability distributions resulting from this sequence, allowing us to apply the classical generalised Sanov's theorem (Theorem~\ref{Sanov_thm}) and lift the classical result to the quantum case. Specifically, consider the following addition to the set of axioms studied in Section~\ref{sec_BP_axioms}.
\begin{enumerate}
\item[$6^\prime$.] For some choice of numbers $r_n\in (0,1]$, the sequence $(\mathds{M}_n)_n$ of sets of measurements
\bb
\mathds{M}_n \coloneqq \left\{ \left( \frac{\id^{\otimes n} + X_n}{2},\, \frac{\id^{\otimes n} - X_n}{2}\right):\ X_n = X_n^\dag \in \LL\big(\HH^{\otimes n}\big),\ \|X_n\|_\infty \leq r_n \right\} ,
\label{quantum_Sanov_thm_compatible_set}
\ee
where $\|\cdot\|_\infty$ denotes the operator norm, is compatible with $(\FF_n)_n$.
\label{axiom_6prime}
\end{enumerate}

We are now ready to state our main result in its most general form.

\begin{thm}[(Generalised quantum Sanov's theorem)] \label{quantum_Sanov_thm}
Let $\HH$ be a finite-dimensional Hilbert space, and let $(\FF_n)_n$ be a sequence of sets of quantum states $\FF_n \subseteq \D\big(\HH^{\otimes n}\big)$ that obeys the Brand\~{a}o--Plenio axioms (Axioms~1--5 in Section~\ref{sec_BP_axioms}) as well as Axiom~$6^\prime$. Then, we have that
\bb
\lim_{n\to\infty} \frac1n\, \rel{D_H^\e}{\FF_n}{\rho^{\otimes n}} = D(\FF \| \rho)\qquad \forall\ \rho\in \D(\HH),\quad \forall\ \e\in (0,1)\, ;
\label{quantum_Sanov_D_H}
\ee
in particular, the Sanov exponent is given by the single-letter expression
\bb
\sanov(\rho\|\FF) = D(\FF \| \rho)\, .
\label{quantum_Sanov}
\ee
\end{thm}

Before we present the proof of Theorem~\ref{quantum_Sanov_thm}, it is instructive to see how it can be applied to calculate the Sanov exponent of entanglement testing. In what follows, $\SEP_{A:B}$ will denote the set of separable states~\cite{Werner} on the bipartite quantum system $AB$.

\begin{cor}
\label{cor:sep-sanov}
Let $AB$ be a finite-dimensional bipartite quantum system. Then, it holds for all states $\rho_{AB}$ that
\bb
\label{eq:ent_sanov}
\lim_{n\to\infty} \frac1n\, \rel{D_H^\e}{\SEP_{A^n:B^n}}{\rho_{AB}^{\otimes n}} = \rel{D}{\SEP_{A:B}}{\rho_{AB}} = \!\inf_{\sigma_{AB} \in \SEP_{A:B}} D(\sigma_{AB}\|\rho_{AB}) \qquad %\forall\ \rho = \rho_{AB}\in \D(\HH_{AB})\, ,\quad 
\forall\ \e\in (0,1)\, .
\ee
In particular, the Sanov exponent associated with entanglement testing, as well as the error exponent of entanglement distillation, are given by the single-letter formula
\bb
E_{d,\rm err} (\rho_{AB}) = \sanov\big(\rho\,\big\|\, \SEP_{A:B}\big) = \rel{D}{\SEP_{A:B}}{\rho_{AB}} \ .
\ee
\label{cor:sanov}
\end{cor}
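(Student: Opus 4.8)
The proof of Corollary~\ref{cor:sep-sanov} is essentially a verification that the abstract hypotheses of Theorem~\ref{quantum_Sanov_thm} are met by the set of separable states, so the plan is to check each ingredient in turn and then invoke that theorem together with Lemma~\ref{lem:distillation_sanov_equivalence}. First I would confirm that $(\SEP_{A^n:B^n})_n$ satisfies the Brand\~{a}o--Plenio Axioms~1--5: convexity and closedness are standard, the maximally mixed state $\id/d^2$ is a full-rank separable state, separability is preserved under partial traces and tensor products, and permuting the $n$ copies of the bipartite system maps separable states to separable states. None of this is deep; it is the bookkeeping that makes the abstract framework apply.

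The substantive step is to exhibit a sequence $r_n\in(0,1]$ for which the measurement sets $\mathds{M}_n$ in~\eqref{quantum_Sanov_thm_compatible_set} are compatible with $(\SEP_{A^n:B^n})_n$ in the sense of Definition~\ref{quantum_compatibility_def}. Concretely, for a POVM element $E^{(n)} = \tfrac12(\id^{\otimes n}+X_n)$ with $\|X_n\|_\infty\le r_n$ and a separable state $\rho_{n+m}\in\SEP_{A^{n+m}:B^{n+m}}$, I must show $\Tr_{1\ldots n}[(E^{(n)}\otimes\id^{\otimes m})\rho_{n+m}]$ lies in $\cone(\SEP_{A^m:B^m})$. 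Writing $\rho_{n+m}=\sum_i p_i\, \alpha_i^{\otimes}\otimes\beta_i^{\otimes}$ as a mixture of product states across the $A$ versus $B$ cut (where each $\alpha_i$ lives on the $A$-systems and each $\beta_i$ on the $B$-systems), linearity reduces the claim to a single product term, and the partial trace of a product state over the first $n$ subsystems is a nonnegative multiple of a product state on the remaining $m$; summing over $i$ with the nonnegative weights $\Tr[(E^{(n)}\otimes\id)\,(\alpha_i\otimes\beta_i)]\ge 0$ (which holds because $E^{(n)}\ge 0$, using $\|X_n\|_\infty\le 1$, hence any $r_n\le 1$ works) yields an element of $\cone(\SEP_{A^m:B^m})$. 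I expect this to be the main obstacle only in that one must be careful that $E^{(n)}\otimes\id^{\otimes m}$ is positive semidefinite — which forces $r_n\le 1$ — and that the resulting coefficients are genuinely nonnegative; with $r_n\equiv 1$ the argument goes through cleanly. (One could equivalently cite the known compatibility of two-outcome POVMs with separable states, e.g. from~\cite{brandao_adversarial,Piani2009}.)

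Having verified compatibility, Theorem~\ref{quantum_Sanov_thm} applies and immediately gives, for all $\e\in(0,1)$,
\begin{equation}\begin{aligned}
\lim_{n\to\infty}\frac1n\,\rel{D_H^\e}{\SEP_{A^n:B^n}}{\rho_{AB}^{\otimes n}} = \rel{D}{\SEP_{A:B}}{\rho_{AB}} = \inf_{\sigma_{AB}\in\SEP_{A:B}} D(\sigma_{AB}\|\rho_{AB}),
\end{aligned}\end{equation}
which is~\eqref{eq:ent_sanov}; the second equality is just the definition of the reverse relative entropy of entanglement, and the minimum is attained by compactness of $\SEP_{A:B}$ and lower semicontinuity of the relative entropy (so I would write $\min$ rather than $\inf$). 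In particular $\Sanov(\rho_{AB}\|\SEP_{A:B}) = \lim_{\e\to0^+}\liminf_{n\to\infty}\frac1n D_H^\e(\SEP_{A^n:B^n}\|\rho_{AB}^{\otimes n}) = D(\SEP_{A:B}\|\rho_{AB})$, the $\e$-limit being trivial since the right-hand side is $\e$-independent. Finally, Lemma~\ref{lem:distillation_sanov_equivalence} identifies $E_{d,\rm err}(\rho_{AB})$ with this Sanov exponent, giving the chain
\begin{equation}\begin{aligned}
E_{d,\rm err}(\rho_{AB}) = \Sanov\big(\rho_{AB}\,\big\|\,\SEP_{A:B}\big) = \rel{D}{\SEP_{A:B}}{\rho_{AB}},
\end{aligned}\end{equation}
which completes the proof. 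The only genuinely new content relative to the general theorem is the compatibility check; everything else is assembly of already-established results.
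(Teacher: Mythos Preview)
Your overall plan is right\,---\,verify the hypotheses of Theorem~\ref{quantum_Sanov_thm} for separable states and then invoke it together with Lemma~\ref{lem:distillation_sanov_equivalence}\,---\,but the compatibility check, which you correctly identify as the only substantive step, does not go through with $r_n\equiv 1$. The mistake is in the sentence ``the partial trace of a product state over the first $n$ subsystems is a nonnegative multiple of a product state on the remaining $m$'': you are tracing out $(AB)^n$ after multiplying by $E^{(n)}\otimes\id$, and $E^{(n)}$ acts jointly on $A^nB^n$. For a term $\alpha_i\otimes\beta_i$ with $\alpha_i$ on $A^{n+m}$ and $\beta_i$ on $B^{n+m}$, the operator $\Tr_{A^nB^n}\big[(E^{(n)}\otimes\id)(\alpha_i\otimes\beta_i)\big]$ is \emph{not} in general a scalar times a product state across $A^m:B^m$ unless $E^{(n)}$ is itself separable across $A^n:B^n$. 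A concrete counterexample with $r_1=1$ is entanglement swapping: take $d_A=d_B=2$, $n=m=1$, $E=\proj{\Phi_+}_{A_1B_1}=\tfrac12(\id+X)$ with $\|X\|_\infty=1$, and the separable state $\rho_2=\proj{\Phi_+}_{A_1A_2}\otimes\proj{\Phi_+}_{B_1B_2}\in\SEP_{A_1A_2:B_1B_2}$; then $\Tr_{A_1B_1}[(E\otimes\id)\rho_2]=\tfrac14\proj{\Phi_+}_{A_2B_2}\notin\cone(\SEP_{A_2:B_2})$. So positivity of $E^{(n)}$ alone is not enough, and your citation of~\cite{brandao_adversarial,Piani2009} is off: those results concern \emph{separable} measurements, not arbitrary two-outcome POVMs.

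What the paper does instead is to force separability of the POVM elements by shrinking $r_n$. Setting $r_n=d^{-n/2}$ (with $d=d_Ad_B$), one gets $\|X_n\|_2\le d^{n/2}\|X_n\|_\infty\le 1$, and the Gurvits--Barnum separable ball around the identity then guarantees $\id^{\otimes n}\pm X_n\in\cone(\SEP_{A^n:B^n})$. With the POVM elements now separable across $A^n:B^n$, the standard compatibility of separable measurements with separable states (indeed from~\cite{Piani2009,brandao_adversarial}) gives exactly what Definition~\ref{quantum_compatibility_def} requires. Once you replace your $r_n=1$ argument by this choice, the rest of your proof\,---\,applying Theorem~\ref{quantum_Sanov_thm} and then Lemma~\ref{lem:distillation_sanov_equivalence}\,---\,is correct and matches the paper.
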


Here we note that our solution of the quantum Sanov's theorem in fact shows a strong converse property --- the exponent does not depend on $\ve \in (0,1)$ and uniformly equals $\rel{D}{\SEP_{A:B}}{\rho_{AB}}$ for all $\ve$. One immediate consequence of this is that our result for entanglement distillation is stronger than stated in the main text; namely, \eqref{eq:ent_sanov} implies that
\bb
E_{d,\rm err}^{(m)} (\rho_{AB}) = \rel{D}{\SEP_{A:B}}{\rho_{AB}} \qquad \forall m \in \mathds{N} \ ,
\ee
where $E_{d, \rm err}^{(m)}$ is the $m$-copy distillation error exponent defined in~\eqref{eq:m-copy-exponent}. This shows that the distillation exponent does not actually depend on the target number of copies of the maximally entangled state at all.

\begin{proof}[Proof of Corollary~\ref{cor:sanov}]
Setting $r_n = d^{-n/2}$, we see that for an arbitrary operator $X_n = X_n^\dag \in \LL\big(\HH_{AB}^{\otimes n}\big)$ with $\|X_n\|_\infty \leq r_n$ we have that
\bb
\|X_n\|_2 \leq \sqrt{\dim \big(\HH_{AB}^{\otimes n}\big)}\, \|X_n\|_\infty = d^{n/2}\, \|X_n\|_\infty \leq 1\, ,
\ee
implying, via~\cite[Theorem~1]{GurvitsBarnum}, that $\id \pm X_n\in \cone\big(\SEP_{A^n:B^n}\big)$. The compatibility between the set of measurements defined by~\eqref{quantum_Sanov_thm_compatible_set} and the set of separable states thus follows from the well-known compatibility between the sets of separable measurements and that of separable states~\cite{Piani2009, brandao_adversarial}. Axiom~$6^\prime$ is therefore satisfied and we can thus immediately apply Theorem~\ref{quantum_Sanov_thm} to the case of entanglement testing.
\end{proof}

\begin{rem}
The compatibility assumption of Theorem~\ref{quantum_Sanov_thm} is quite versatile, and it can be shown to be obeyed by %most
many resource theories of interest in quantum information and quantum computation. In all %of these
such cases, Theorem~\ref{quantum_Sanov_thm} provides us with the single-letter expression~\eqref{quantum_Sanov} for the generalised Sanov exponent. Beyond the set of separable states, an exemplary case of this extension that is also relevant in the study of quantum entanglement is provided by the set of positive partial transpose (PPT) states $\PPT_{\!A:B}$ --- see remark in Supplementary Note~\ref{app:lemma1} and discussion of compatibility in~\cite{Piani2009, brandao_adversarial}.

However, while these axiomatic assumptions can be verified for other sets of states relevant to quantum resource theories even beyond entanglement, it may not always be
possible to establish an equivalence between the error exponent of resource testing and the error exponent of distillation as in Lemma~\ref{lem:distillation_sanov_equivalence}. The study of other quantum resources for which this may be possible is an interesting open question.
\end{rem}

Having illustrated its main consequences, we present now the full proof our main result, the generalised quantum Sanov's theorem.

\begin{proof}[Proof of Theorem~\ref{quantum_Sanov_thm}]
Given what we have already proved, it is not difficult to see that~\eqref{quantum_Sanov_D_H} and~\eqref{quantum_Sanov} are actually equivalent. In fact, due to~\eqref{weak_strong_converse_duality} and Proposition~\ref{limit_e_to_0_prop} we know that
\bb
\lim_{\e\to 1^-} \lim_{n\to\infty} \frac1n\, \rel{D_H^\e}{\FF_n}{\rho^{\otimes n}} = D(\FF \| \rho)\qquad \forall\ \rho\in \D(\HH)\, .
\ee
Since $D_H^\e$ is monotonically non-decreasing in $\e$, to establish~\eqref{quantum_Sanov_D_H} it thus suffices to prove that also 
\bb
\sanov(\rho\|\FF) = \lim_{\e\to 0^+} \lim_{n\to\infty} \frac1n\, \rel{D_H^\e}{\FF_n}{\rho^{\otimes n}} \geqt{?} D(\FF \| \rho)\qquad \forall\ \rho\in \D(\HH)\, .
\label{quantum_Sanov_nontrivial_inequality}
\ee
From now on, we will therefore focus on proving the above inequality.

Fix a positive integer $k\in \N^+$. Let $\all$ be the set of all quantum measurements with finitely many outcomes on any given quantum system. Since this set is closed under finite labelled mixtures, according to Lemma~\ref{minimax_measured_relent_lemma} we have that
\bb
\rel{D^{\all}}{\FF_k}{\rho^{\otimes k}} = \sup_{\MM\in \all} \inf_{\sigma_k \in \FF_k} \rel{D}{\MM(\sigma_k)}{\MM\big(\rho^{\otimes k}\big)}\, .
\ee
For some $\eta > 0$, let $\MM_*$ be an almost optimal measurement for the supremum on the right-hand side of the above equation, i.e.\ let it satisfy
\bb
\rel{D}{\MM_*(\FF_k)}{\MM_*\big(\rho^{\otimes k}\big)} \geq \sup_{\MM\in \all} \inf_{\sigma_k \in \FF_k} \rel{D}{\MM(\sigma_k)}{\MM\big(\rho^{\otimes k}\big)} - \eta = \rel{D^{\all}}{\FF_k}{\rho^{\otimes k}} - \eta\, .
\label{choice_M_*}
\ee
We denote by $(E_x)_{x\in \XX}$ the POVM that represents $\MM_*$. Without loss of generality, we can assume that $E_x\neq 0$ for all $x\in \XX$.

We now have all the elements to explain our strategy to carry out hypothesis testing at the quantum level. We proceed as follows:
\begin{enumerate}[(1)]
\item We partition the $n$ systems into $n' \coloneqq \floor{n/k}$ groups of $k$ systems each, discarding the rest.
\item We measure each group with the measurement $\MM_*$, obtaining a sequence $x^{n'}$ of $n'$ classical symbols $x_i\in \XX$.
\item We perform classical hypothesis testing on this sequence with the goal of minimising the type I error.
\end{enumerate}
\emph{Provided that we can apply the generalised classical Sanov's theorem} (Theorem~\ref{Sanov_thm}), the above strategy gives the following estimate on the quantum Sanov exponent:
\bb
\sanov(\rho\|\FF)\ &\geqt{($\star$)}\ \frac{1}{k}\, \rel{D}{\MM_*(\FF_k)}{\MM_*\big(\rho^{\otimes k}\big)} \\
&\geq\ \frac{1}{k}\, \rel{D^{\all}}{\FF_k}{\rho^{\otimes k}} - \frac{\eta}{k} \\
&=\ \frac{1}{k}\, \inf_{\sigma_k\in \FF_k} \rel{D^{\all}}{\sigma_k}{\rho^{\otimes k}} - \frac{\eta}{k} \\
&\geqt{(i)}\ \frac{1}{k}\, \inf_{\sigma_k\in \FF_k} \rel{D}{\sigma_k}{\rho^{\otimes k}} - \frac1k \log_2 \frac{(k+d-1)^{d-1}}{(d-1)!} - \frac{\eta}{k} \\
&\eqt{(ii)}\ D(\FF\|\rho) - \frac1k \log_2 \frac{(k+d-1)^{d-1}}{(d-1)!} - \frac{\eta}{k}\, .
\label{quantum_Sanov_proof_key_inequality}
\ee
Here, ($\star$) comes from the application of Theorem~\ref{Sanov_thm}, whose legitimacy we will verify shortly. In~(i) we used instead the entropic pinching inequality~\eqref{entropic_pinching_k_copies}, denoting with $d\coloneqq \dim \HH$ the dimension of $\HH$, while (ii)~follows by the additivity of the reverse relative entropy of resource. Taking the limit $k\to\infty$ of~\eqref{quantum_Sanov_proof_key_inequality} would thus yield precisely~\eqref{quantum_Sanov_nontrivial_inequality} and therefore complete the proof.

We will now argue that the inequality ($\star$) in~\eqref{quantum_Sanov_proof_key_inequality} holds, because we can legitimately apply Theorem~\ref{Sanov_thm}. To this end, we construct the sequence of sets of probability distributions $\big(\widetilde{\FF}_n\big)_n$ given by
\bb
\widetilde{\FF}_n \coloneqq \MM_*^{\otimes n} ( \FF_{nk}	) \subseteq \PP(\XX^n)\, .
\ee
(We omit the dependence on $k$ for simplicity.) If $(\FF_n)_n$ satisfies the Brand\~{a}o--Plenio axioms, as we have assumed, it is easy to verify that so does $\big(\widetilde{\FF}_n\big)_n$. Let us verify this claim:

\begin{enumerate}
\item The convexity and closedness of $\widetilde{\FF}_n$ follows from the convexity and closedness of $\FF_{nk}$, due to the finite dimensionality and to the linearity of the copy-by-copy measurement map $\MM_*^{\otimes n}$.
\item Due to Axiom~4 holding for $(\FF_n)_n$, the set $\widetilde{\FF}_1 = \MM_*(\FF_{k})$ contains the probability distribution $q_0 \coloneqq \MM_* \big(\sigma_0^{\otimes k}\big)$, where $\sigma_0 \in \FF_1$ is the full-rank state whose existence is guaranteed by Axiom~2. Note that $\sigma_0^{\otimes k}$ is full rank; thus, since $E_x\neq 0$ for all $x\in \XX$ by assumption, it holds that $q_0(x) = \Tr\big[ \sigma_0^{\otimes k} E_x \big] > 0$ for all $x\in \XX$, i.e.\ $q_0$ has full support.

\item For a probability distribution $q = \MM_*^{\otimes (n+1)}(\sigma)\in \widetilde{\FF}_{n+1}$, where $\sigma \in \FF_{(n+1)k}$, by repeated applications of Axiom~3 we see that tracing away the last system of $q$ results in a probability distribution 
\bb
\tr_{n+1} q = \tr_{n+1} \MM_*^{\otimes (n+1)}(\sigma) = \MM_*^{\otimes n} \big( \Tr_{nk+1,\ldots, (n+1)k} \sigma\big) \in \MM_*^{\otimes n} (\FF_{nk}) = \widetilde{\FF}_n\, . 
\ee

\item If $q = \MM_*^{\otimes n}(\sigma)\in \widetilde{\FF}_n$ and $r = \MM_*^{\otimes m}(\omega)\in \widetilde{\FF}_m$, with $\sigma\in \FF_{nk}$ and $\omega\in \FF_{mk}$, then clearly
\bb
q \otimes r = \MM_*^{\otimes (n+m)}(\sigma \otimes \omega) \in \widetilde{\FF}_{n+m}
\ee
simply because $\sigma \otimes \omega \in \FF_{nk+mk} = \FF_{(n+m)k}$. This establishes that $\big(\widetilde{\FF}_n\big)_n$ is indeed closed under tensor products.

\item Given $q = \MM_*^{\otimes n}(\sigma)\in \widetilde{\FF}_n$ and some permutation $\pi\in S_n$, we will call 
\bb
q_\pi(x^n) \coloneqq q\big(\pi^{-1}(x^n)\big) = q\big(x_{\pi(1)}\ldots x_{\pi(n)}\big)
\ee
the permuted probability distribution. Then, denoting with $\widetilde{U}_\pi$ the unitary that implements $\pi$ on each set of indices $\{i,i+k,\ldots, i+(n-1)k\}$, where $i=1,\ldots, k$ (such a unitary is, of course, isomorphic to $U_\pi^{\otimes k}$, with $U_\pi$ being the unitary that represents $\pi$ on $\C^{\otimes n}$), we have that
\bb
q_\pi(x^n) &= q\big(x_{\pi(1)}\ldots x_{\pi(n)}\big) \\
&= \Tr \big[ \big(E_{\pi(x_1)} \otimes \ldots \otimes E_{\pi(x_n)}\big)\, \sigma \big] \\
&= \Tr \big[ \widetilde{U}_\pi^\dag \big(E_{x_1} \otimes \ldots \otimes E_{x_n}\big) \widetilde{U}_\pi\, \sigma \big] \\
&= \Tr \big[ \big(E_{x_1} \otimes \ldots \otimes E_{x_n}\big)\, \widetilde{U}_\pi \sigma \widetilde{U}_\pi^\dag \big]
\ee
This shows that
\bb
q_\pi = \MM_*^{\otimes n}\big(\widetilde{U}_\pi \sigma \widetilde{U}_\pi^\dag\big) \in \MM_*^{\otimes n} (\FF_{nk}) = \widetilde{\FF}_n\, ,
\ee
where we observed that $\widetilde{U}_\pi \sigma \widetilde{U}_\pi^\dag \in \FF_{nk}$ due to Axiom~5.
\end{enumerate}

This proves that $\big(\widetilde{\FF}_n\big)_n$ does indeed satisfy the Brand\~ao--Plenio axioms. We now claim that the same family of sets satisfies also Axiom~6, because of our assumptions. To prove this claim, we will apply Piani's result in Lemma~\ref{Piani_lemma_to_verify_axiom_6}. To this end, we need to construct a sequence of sets of classical channels $\big(\mathds{L}_n\big)_n$ that obeys the compatibility condition in Definition~\ref{quantum_compatibility_def}, re-written in~\eqref{classical_compatibility} for classical systems, and such that $\mathds{L}_1$ is informationally complete, i.e.\ it satisfies~\eqref{information_completeness_classical}.

For each $n$ and each sequence $\widebar{x}^{\, n}\in \XX^n$, consider the classical channel $\Lambda_{\widebar{x}^{\,n}}: \R^{\XX^n} \to \R^2$, whose output $y\in \{0,1\}$ is binary, defined by the conditional probability distribution
\bb
\Lambda_{\widebar{x}^{\,n}}(y|x^n) \coloneqq \frac12 \left( 1 + (-1)^y r_{nk}\, \delta_{x^n,\, \widebar{x}^{\,n}} \right) ,
\label{Lambda_x_bar}
\ee
where $r_{nk}\in (0,1]$ is the number appearing in~\eqref{quantum_Sanov_thm_compatible_set}, and $\delta_{x^n,\, \widebar{x}^{\,n}} \coloneqq \prod_{i=1}^n \delta_{x_i,\, \widebar{x}_i}$, with $\delta$ representing the Kronecker delta. We now construct the sequence $\big(\mathds{L}_n\big)_n$ of sets of classical channels
\bb
\mathds{L}_n \coloneqq \left\{ \Lambda_{\widebar{x}^{\,n}}:\ \widebar{x}^{\,n}\in \XX^n \right\} .
\ee
Note that $\mathds{L}_1$ is informationally complete according to~\eqref{information_completeness_classical}, because
\bb
\Span \left\{ (\Lambda_{\widebar{x}}(y|x))_x:\ \widebar{x}\in \XX,\ y\in \{0,1\} \right\} = \R^\XX .
\label{quantum_Sanov_information_completeness}
\ee
To verify the above identity, it suffices to consider a vector $v\in \R^\XX$ that is orthogonal to the set on the left-hand side, i.e.\ it satisfies
\bb
0 = 2\sum_x v_x\, \Lambda_{\widebar{x}}(y|x) = \sum_x v_x + (-1)^y r_k v_{\widebar{x}} \qquad \forall\ \widebar{x}\in \XX,\quad \forall\ y\in \{0,1\}\, .
\ee
Summing over $y$ we obtain that $\sum_x v_x =0$, which plugged into the above identity shows that $v=0$. This proves~\eqref{quantum_Sanov_information_completeness}, establishing that $\mathds{L}_1$ is indeed informationally complete.

We now set out to verify the compatibility condition~\eqref{classical_compatibility}. Consider the free probability distribution
\bb
q_{n+m} = \MM_*^{\otimes (n+m)}\big(\sigma_{(n+m)k}\big) \in \widetilde{\FF}_{n+m}\, .
\ee
For some $\widebar{x}^{\,n}\in \XX^n$ and $\tilde{x}^m \in \XX^m$, we construct
\bb
\tilde{q}_m(\tilde{x}^m)\ \coloneqq&\ \ \sum_{x^n} \Lambda_{\widebar{x}^{\, n}}(y|x^n)\, q_{n+m}(x^n\tilde{x}^m) \\
=&\ \ \sum_{x^n} \Lambda_{\widebar{x}^{\, n}}(y|x^n)\, \Tr \left[ \big(E_{x_1}\otimes \ldots \otimes E_{x_n} \otimes E_{\tilde{x}_1} \otimes \ldots \otimes E_{\tilde{x}_m}\big)\, \sigma_{(n+m)k} \right] \\
=&\ \ \Tr\left[ \big(E_{\tilde{x}_1} \otimes \ldots \otimes E_{\tilde{x}_m}\big) \Tr_{1,\ldots, nk} \left[ \Big(\Big( \sumno_{x^n} \Lambda_{\widebar{x}^{\, n}}(y|x^n)\, E_{x_1}\otimes \ldots \otimes E_{x_n}\Big) \otimes \id^{\otimes mk}\Big)\, \sigma_{(n+m)k} \right] \right] \\
\eqt{(iii)}&\ \ \frac12 \Tr\left[ \big(E_{\tilde{x}_1} \otimes \ldots \otimes E_{\tilde{x}_m}\big) \Tr_{1,\ldots, nk} \left[ \Big(\Big( \id^{\otimes nk} + (-1)^y r_{nk}\, E_{\widebar{x}_1}\otimes \ldots \otimes E_{\widebar{x}_n} \Big) \otimes \id^{\otimes mk}\Big)\, \sigma_{(n+m)k} \right] \right] .
\label{quantum_Sanov_proof_q_tilde}
\ee
In~(iii) we used the definition~\eqref{Lambda_x_bar} and observed that 
\bb
\sum_{x^n} E_{x_1}\otimes \ldots \otimes E_{x_n} = \left(\sumno_x E_x\right)^{\otimes n} = \id^{\otimes nk}\, .
\ee
due to the normalisation condition for the POVM $(E_x)_x$. The calculation in~\eqref{quantum_Sanov_proof_q_tilde} proves that
\bb
\tilde{q}_m &= \MM_*^{\otimes m} \left( \Tr_{1,\ldots, nk} \left[ \bigg(\bigg( \frac{\id^{\otimes nk} + (-1)^y r_{nk}\, E_{\widebar{x}_1}\otimes \ldots \otimes E_{\widebar{x}_n}}{2} \bigg) \otimes \id^{\otimes mk}\bigg)\, \sigma_{(n+m)k} \right] \right) \\
&\hspace{-2.7pt}\overset{\text{(iv)}}{\in} \MM_*^{\otimes m} \left( \cone(\FF_{mk}) \right) \\
&= \cone\big(\MM_*^{\otimes m}(\FF_{mk})\big) \\
&= \cone\big(\widetilde{\FF}_m\big)\, ,
\ee
which proves that the sets of probability distributions $\big(\widetilde{\FF}_n\big)_n$ and the sets of classical channels $\big(\mathds{L}_n\big)_n$ satisfy the compatibility conditions in~\eqref{classical_compatibility}. In the above calculation, (iv)~follows from our assumption that the set of measurements in~\eqref{quantum_Sanov_thm_compatible_set} is compatible (in the quantum sense of Definition~\ref{quantum_compatibility_def}) with the sets of free states $(\FF_n)_n$, because
\bb
\left\| (-1)^y r_{nk}\, E_{\widebar{x}_1}\otimes \ldots \otimes E_{\widebar{x}_n} \right\|_\infty = r_{nk}\, \big\|E_{\widebar{x}_1}\big\|_\infty \ldots \big\|E_{\widebar{x}_n}\big\|_\infty \leq r_{nk}\, .
\ee

This concludes the proof that $\big(\widetilde{\FF}_n\big)_n$ satisfies not only the Brand\~{a}o--Plenio axioms, but also Axiom~6 in Section~\ref{sec_BP_axioms}. This means that the application of the generalised classical Sanov's theorem (Theorem~\ref{Sanov_thm}) marked as ($\star$) in~\eqref{quantum_Sanov_proof_key_inequality} is justified. According to the above discussion, taking the limit $k\to\infty$ in~\eqref{quantum_Sanov_proof_key_inequality} completes the proof.
\end{proof}

Similarly to what we did in Section~\ref{subsec:optimal_test_classical} for the classical case, also in the quantum case we can ask ourselves how to design a sequence of tests that achieves the optimal performance advertised in Theorem~\ref{quantum_Sanov_thm}. Namely, for any given $\rho$, the tests should discriminate $\rho^{\otimes n}$ from an arbitrary free state $\sigma_n\in \FF_n$, while exhibiting:
\begin{enumerate}[(a)]
\item a vanishing type II error probability; and
\item a type I error exponent arbitrarily close to the reverse relative entropy $D(\FF\|\rho)$.
\end{enumerate}

To construct such a test, one can proceed as follows. Denoting by $\e$ an overall tolerance on the final type I error exponent, we start by choosing some $\eta>0$ and some integer $k\in \N^+$ such that the last two terms on the rightmost side of~\eqref{quantum_Sanov_proof_key_inequality} are smaller than $\e/2$. This is certainly possible, as the sum of the two vanishes in the limit where $k\to\infty$ (here $d$, the underlying Hilbert space dimension, is fixed). We now need to find a measurement $\MM_*$ that satisfies~\eqref{choice_M_*}; since $\eta$ and $k$ have been fixed, and thus all involved states have bounded dimension, this can be accomplished numerically. (The complexity of this step, naturally, will depend on the precise nature of the sets of free states $\FF_k$, e.g.\ on whether they admit an efficient description.)

We can now proceed as described in the above proof: we divide the $n$ quantum systems into $n' = \floor{n/k}$ groups of $k$ copies each, discarding the remainder, and measure each group with $\MM_*$, obtaining a string of classical symbols $x^{n'}$. To this string we can apply the classically optimal test designed in Section~\ref{subsec:optimal_test_classical}, which consists in ascertaining whether $\frac12 \left\| t_{x^{n'}} - \MM_*(\FF_k) \right\|_1\leq \zeta$, where $\zeta>0$ is a small parameter, or not; in the first case, we guess that the underlying state was free, while in the second we guess that it was $\rho^{\otimes n}$. As explained in Section~\ref{subsec:optimal_test_classical}, the type II error probability associated with this test vanishes asymptotically as $n\to\infty$ (and thus $n'\to \infty$). The overall type I error exponent is at least
\bb
\frac1k\,\rel{D^\zeta}{\MM_*(\FF_k)}{\MM_*\big(\rho^{\otimes k}\big)} \geq \frac1k\,\rel{D}{\MM_*(\FF_k)}{\MM_k\big(\rho^{\otimes k}\big)} - \frac{\e}{2} \geq D(\FF\|\rho) - \e\, ,
\ee
where the first inequality holds due to Lemma~\ref{e_relent_convergence_lemma}, provided that we choose $\zeta>0$ to be sufficiently small, while the second holds by construction, due to the choice of $k$.

%%%%%%%%%%%%%%%%%%%%%%%%%%%%%%%%%%%%%%%%%%%%%%%%%%%%%%%%%%%%%%%%%%%%%%%%%%%%%%%%

\section{Further considerations}
\label{app:further}

\subsection{Additivity violation for the reverse R\'enyi relative entropy of entanglement}
\label{app:additivity-violation}

We now set out to show that the R\'enyi-$\alpha$ reverse relative entropy of entanglement is never weakly additive for $\alpha < 1$. To this end, it suffices to take as a counterexample the antisymmetric Werner state $\rho_1$~\cite{Werner, Christandl2012}, humorously referred to as the `universal counterexample in entanglement theory'~\cite{Aaronson2008}. Consider the bipartite states on $\C^d\otimes \C^d$ given by
\bb
\rho_a \coloneqq \frac{\id + (-1)^a F}{d(d + (-1)^a)}\, ,\qquad a=0,1\, ,\qquad F \coloneqq \sum_{i,j=1}^d \ketbra{ij}{ji}\, ;
\ee
incidentally, these are the extremal points of the one-parameter family of Werner states $\rho(\delta) \coloneqq (1-\delta) \rho_0 + \delta \rho_1$, with $\delta\in [0,1]$~\cite{Werner}. In what follows, we will omit the dependence on $d$ for simplicity. 

\begin{lemma} \label{additivity_violations_lemma}
Let $\alpha \in (0,1%]
)$, and let $\mathds{D}_\alpha$ be any quantum divergence that reduces to the R\'enyi-$\alpha$ divergence for classical (i.e.\ commuting) states and obeys the data processing inequality. Then the corresponding reverse relative entropy of entanglement, defined by $\mathds{D}_\alpha(\SEP\|\rho) \coloneqq \min_{\sigma \in \SEP} \mathds{D}_\alpha (\sigma\|\rho)$, fails to be (weakly) additive on many copies of the antisymmetric state $\rho_1$, in formula
\bb
\mathds{D}_\alpha\big(\SEP\, \big\|\, \rho_1^{\otimes 2}\big) < 2\, \mathds{D}_\alpha(\SEP\|\rho_1)\, .
\ee
\end{lemma}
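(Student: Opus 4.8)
The plan is to exploit the special structure of the antisymmetric Werner state $\rho_1$, which on $\C^d\otimes\C^d$ is proportional to the projector $P_- = \frac{\id - F}{2}$ onto the antisymmetric subspace. First I would recall the key algebraic fact that makes the antisymmetric state the ``universal counterexample'': its range $\ran(\rho_1)$, the antisymmetric subspace $\wedge^2\C^d$, contains \emph{no} product vectors. Consequently, any separable state $\sigma \in \SEP$ with $\sigma \leq \mu\,\rho_1$ for some finite $\mu$ would need $\ran(\sigma) \subseteq \ran(\rho_1)$, forcing $\sigma$ to have range in a subspace devoid of product vectors --- which is impossible for a nonzero separable state. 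Hence $\mathds{D}_\alpha(\sigma\|\rho_1) = +\infty$ for every $\sigma \in \SEP$ whenever $\mathds{D}_\alpha$ has the standard property that $\mathds{D}_\alpha(\sigma\|\rho) < \infty$ implies $\ran(\sigma)\subseteq\ran(\rho)$ (true for all the Petz--R\'enyi and sandwiched R\'enyi divergences, and more generally any data-processed R\'enyi divergence, at least for $\alpha \in (0,1)$ this support condition can fail --- so I must be careful here; see below). So for $\alpha$ in the range where finiteness forces support containment, $\mathds{D}_\alpha(\SEP\|\rho_1) = +\infty$, and then the strict inequality $\mathds{D}_\alpha(\SEP\|\rho_1^{\otimes 2}) < 2\,\mathds{D}_\alpha(\SEP\|\rho_1) = \infty$ reduces to showing the left-hand side is \emph{finite}.

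The heart of the argument is therefore the claim that $\mathds{D}_\alpha(\SEP\|\rho_1^{\otimes 2}) < \infty$, i.e.\ that there \emph{does} exist a separable state $\sigma$ on $(\C^d\otimes\C^d)^{\otimes 2}$ whose support is contained in $\ran(\rho_1^{\otimes 2}) = \wedge^2\C^d \otimes \wedge^2\C^d$. This is exactly the phenomenon that $\rho_1^{\otimes 2}$ is \emph{distillable} / has entangled product structure across the relevant cut --- more precisely, the key point is that while $\wedge^2\C^d$ has no product vectors in the $A{:}B$ cut of a single copy, the \emph{two-copy} subspace $\wedge^2\C^d\otimes\wedge^2\C^d$, viewed with respect to the bipartition $A_1A_2 : B_1B_2$, \emph{does} contain product vectors. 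Concretely, I would exhibit a product vector $\ket{\phi}_{A_1A_2}\otimes\ket{\psi}_{B_1B_2}$ lying inside $(\wedge^2\C^d)_{A_1B_1}\otimes(\wedge^2\C^d)_{A_2B_2}$: for instance, taking antisymmetric combinations within each copy but correlating the two copies appropriately, one can find $\ket{\xi} = \frac{1}{2}(\ket{1}-\ket{2})_{A_1}(\ket{1}-\ket{2})_{A_2} \otimes \ldots$ type constructions --- the cleanest is to note that $\ket{12}-\ket{21}$ is antisymmetric, and a tensor of two such states, suitably regrouped, can be written (or mixed) to give a product across $A_1A_2:B_1B_2$. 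Then the pure product state $\proj{\phi}_{A_1A_2}\otimes\proj{\psi}_{B_1B_2}$ is separable across the cut and supported inside $\ran(\rho_1^{\otimes 2})$, so $\mathds{D}_\alpha(\proj{\phi}\otimes\proj{\psi}\,\|\,\rho_1^{\otimes 2}) < \infty$, giving $\mathds{D}_\alpha(\SEP\|\rho_1^{\otimes 2}) < \infty = 2\,\mathds{D}_\alpha(\SEP\|\rho_1)$.

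The main obstacle --- and the step I would spend the most care on --- is the precise finiteness/support dichotomy for general data-processing R\'enyi divergences at $\alpha\in(0,1)$: unlike $\alpha>1$ or the Umegaki case, for $\alpha<1$ a divergence like the Petz--R\'enyi $D_\alpha(\sigma\|\rho)=\frac1{\alpha-1}\log\Tr\sigma^{1-\alpha}\rho^\alpha$ can be \emph{finite} even when $\ran(\sigma)\not\subseteq\ran(\rho)$ (the quantity $\Tr\sigma^{1-\alpha}\rho^\alpha$ just needs to be positive, which only requires $\supp(\sigma)\cap\supp(\rho)\neq\{0\}$). So for $\alpha\in(0,1)$ the claim ``$\mathds{D}_\alpha(\SEP\|\rho_1)=\infty$'' is \emph{not} true in general, and the lemma's inequality must instead be established by a genuinely quantitative comparison. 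I would handle this by using Werner symmetry: by the twirling invariance of $\SEP$ and of $\rho_1$ under the $U\otimes U$ action, the minimiser $\sigma$ in $\mathds{D}_\alpha(\SEP\|\rho_1)$ can be taken Werner-invariant, hence a point on the segment $[\rho_0,\rho_1]$, and the one-copy problem $\min_\sigma \mathds{D}_\alpha(\sigma\|\rho_1)$ reduces to a one-parameter optimisation over $\delta\in[0,\delta^*]$ where $\delta^*$ is the separability threshold; this yields an explicit positive value $c_\alpha \coloneqq \mathds{D}_\alpha(\SEP\|\rho_1)$. For the two-copy problem one uses the product-vector construction above to build an explicit separable competitor and shows $\mathds{D}_\alpha(\SEP\|\rho_1^{\otimes 2}) \leq$ (something strictly less than $2c_\alpha$) --- the strictness coming from the fact that the two-copy subspace genuinely ``opens up'' room for separable states that single copies lack. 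Thus the correct reading of the proof is: establish the explicit one-copy value via Werner-symmetry reduction, then beat $2c_\alpha$ strictly on two copies via an explicit separable state supported more favorably relative to $\rho_1^{\otimes 2}$, using the data-processing inequality (the stated hypothesis on $\mathds{D}_\alpha$) to compare the constructed state against the true minimum.
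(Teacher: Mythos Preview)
Your proposal does not reach a proof. The first half --- the ``infinity'' argument based on support containment --- is, as you yourself note, simply false for $\alpha\in(0,1)$: for instance the Petz divergence $D_\alpha(\sigma\|\rho_1)$ is finite whenever $\supp\sigma$ merely intersects the antisymmetric subspace, which it does for every separable $\sigma\neq\rho_0$. (Incidentally, for $d=2$ the two-copy range $\ran(\rho_1^{\otimes 2})$ is a one-dimensional line spanned by a Schmidt-rank-$4$ vector, so it contains no product vectors in $A_1A_2{:}B_1B_2$; your product-vector construction therefore fails already in the simplest case.) The second half of your plan --- a quantitative comparison via Werner symmetry --- is the right direction but remains a sketch: you never identify the two-copy competitor, never compute the resulting bound, and never explain why it beats $2c_\alpha$ strictly. ``The two-copy subspace opens up room'' is not an argument.

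What you are missing is a clean reduction that the paper exploits. After the $U^{\otimes n}\otimes U^{\otimes n}$ twirl the minimising separable state has the form $\sigma=\sum_{a\in\{0,1\}^n} P_a\,\rho_{a_1}\otimes\cdots\otimes\rho_{a_n}$, and crucially such $\sigma$ \emph{commutes} with $\rho_1^{\otimes n}$. Any data-processing quantum R\'enyi divergence therefore collapses to its classical value,
\[
\mathds{D}_\alpha\big(\sigma\,\big\|\,\rho_1^{\otimes n}\big)=D_\alpha(P\|e_{1^n})=-\tfrac{\alpha}{1-\alpha}\log P_{1^n}=\tfrac{\alpha}{1-\alpha}\,D\big(\rho_1^{\otimes n}\,\big\|\,\sigma\big),
\]
so that after optimising over separable $\sigma$ one obtains the exact identity $\mathds{D}_\alpha\big(\SEP\,\big\|\,\rho_1^{\otimes n}\big)=\tfrac{\alpha}{1-\alpha}\,D\big(\rho_1^{\otimes n}\,\big\|\,\SEP\big)$. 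The strict inequality for $n=2$ is then nothing but the known non-additivity of the \emph{forward} relative entropy of entanglement on Werner states, $D(\rho_1^{\otimes 2}\|\SEP)<2\,D(\rho_1\|\SEP)$, established by Vollbrecht and Werner. Your plan gestures at the one-copy twirl but never carries it to $n$ copies, never notices the commutation, and never sees the link to the forward quantity --- which is precisely the step that turns a vague hope into a two-line proof.
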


If $\mathds{D}_\alpha$ is also assumed to be additive, then the claim of the above lemma can be seen as a consequence of the results in~\cite[Section~5]{Rubboli2022}. To see why, recall that there it is argued that any function of the form $\rho\mapsto \min_{\sigma\in \SEP} \mathds{D}(\rho\|\sigma)$, where $\mathds{D}$ is a function on pairs of quantum states that obeys data processing, additivity, and normalisation --- meaning that $\mathds{D}\big(\ketbra{0} \big\|\, \tfrac12 (\ketbra{0}+\ketbra{1})\big) = 1$ --- fails to be additive on two copies of the antisymmetric state. Taking $\mathds{D}(\rho\|\sigma) = \tfrac{1-\alpha}{\alpha}\, \mathds{D}_\alpha(\sigma\|\rho)$, where $\mathds{D}_\alpha$ is a divergence as in the statement of Lemma~\ref{additivity_violations_lemma}, yields $\min_{\sigma\in \SEP} \mathds{D}(\rho\|\sigma) = \tfrac{1-\alpha}{\alpha}\, \mathds{D}_\alpha(\SEP\|\rho)$; assuming additivity, one then deduces the claim of Lemma~\ref{additivity_violations_lemma}. (This observation was communicated to us by R.\ Rubboli after the first version of our manuscript appeared~\cite{rubboli}.) Below, we present our independent proof of Lemma~\ref{additivity_violations_lemma}, as it is short, rather direct, and it by-passes the additivity requirement.

\begin{proof}[Proof of Lemma~\ref{additivity_violations_lemma}]
It is well known, via symmetry arguments (see, e.g., the discussion around~\cite[Eq.~(4)]{Audenaert2001}), that the closest separable state to $\rho(\delta)^{\otimes n}$, for any $\delta\in [0,1]$, is of the form 
\bb
\sigma = \sum_{a_1,\ldots, a_n\in \{0,1\}^n} P_{a_1,\ldots, a_n}\, \rho_{a_1}\otimes \ldots \rho_{a_n}\, ,
\label{twirled_sigma}
\ee
where $P$ is an arbitrary probability distribution on $\{0,1\}^n$. This is usually argued for the standard relative entropy of entanglement, with the optimisation with respect to separable states on the second argument, but it holds equally well for $\mathds{D}_\alpha(\SEP\|\rho)$ --- in fact, the proof only uses data processing and the fact that the Werner twirling map is separability preserving.

Now, for a state $\sigma$ of the form in~\eqref{twirled_sigma} we have that
\bb
\mathds{D}_\alpha\big(\sigma\, \big\|\, \rho_1^{\otimes n}\big) \eqt{(i)} D_\alpha(P \,\|\, e_{1^n}) \eqt{(ii)} - \frac{\alpha}{1-\alpha} \log_2 P_{1^n} \eqt{(iii)} \frac{\alpha}{1-\alpha}\, D\big(\rho_1^{\otimes n}\, \big\|\, \sigma\big) \, .
\ee
Here, in~(i) we made use of the fact that $\mathds{D}_\alpha$ must reduce to its classical version when the states commute --- and $\sigma$ and $\rho_1^{\otimes n}$ do commute --- in~(ii) we introduced the probability distribution $e_{1^n}$ defined by $e_{1^n}(x^n) = \prod_{i=1}^n \delta_{x_i,1}$, where $\delta_{a,b}$ is the Kronecker delta, and finally in~(iii) we considered the \emph{standard} relative entropy, with $\sigma$ in the second argument. Optimising over separable states $\sigma$ of the form in~\eqref{twirled_sigma} we obtain that
\bb
\mathds{D}_\alpha\big(\SEP\, \big\|\, \rho_1^{\otimes n}\big) = \frac{\alpha}{1-\alpha}\, D\big(\rho_1^{\otimes n}\, \big\|\, \SEP\big)\, .
\ee
Since it is well known that $D\big(\rho_1^{\otimes 2}\, \big\|\, \SEP\big) < 2\, D(\rho_1\| \SEP)$~\cite{Werner-symmetry}, the proof is complete.
\end{proof}

%%%%%%%%%%%%%%%%%%%%%%%%%%%%%%%%%%%%%%%%%%%%%%%%%%%%%%%%%%%%%%%%%%%%%%%%%%%%%%%%

\subsection{A (classical) counter-example}
\label{app:counterexample}

Here, we construct a set of classical probability distribution that obeys the Brand\~ao--Plenio Axioms 1 -- 5 but whose Sanov exponent is different from the reverse relative entropy. In other words, this shows that the Brand\~ao--Plenio axioms do not suffice by themselves to guarantee that the generalized Sanov exponent is given by the reverse relative entropy. At the root of this discrepancy is the fact that in this model the exponent, i.e.\ the regularised relative entropy, can vanish even for resourceful states.

Consider a device that outputs a symbol $x\in \{0,1\}$ every time it is used. What you would really like it to do is to output the symbol 1 many times. Unfortunately, all that it does is that it sets itself in a random state 0 or 1, outputs the corresponding symbol some unknown number of times, and then it resets again in a random 0/1 state (as struck by a lightning, or a cosmic ray), and it continues in that way until it has produced a total of $n$ symbols. You then get all the symbols it produced, but shuffled in some random order. The corresponding set of probability distributions is given by
\bb
\FF_n \coloneqq \co\left\{ r_{J_1}\otimes \ldots \otimes r_{J_k}:\ \text{$J_1,\ldots, J_k$ is a partition of $[n]$} \right\} ,
\ee
where
\bb
r_J\big(x^J\big) \coloneqq \left\{ \begin{array}{ll} 1/2 & \text{ if $x^J = 0^J$ or $x^J = 1^J$,} \\[1ex] 0 & \text{ otherwise.} \end{array}\right.
\ee

\begin{lemma}
The family of sets $\FF_n$ satisfies the Brand\~{a}o--Plenio Axioms 1 -- 5.
\end{lemma}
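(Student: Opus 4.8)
The plan is to exploit the fact that, by construction, each $\FF_n$ is the convex hull of the \emph{finite} set of extreme points
\[
\mathcal{E}_n \coloneqq \left\{ r_{J_1}\otimes\cdots\otimes r_{J_k} \;:\; J_1,\ldots,J_k \text{ is a partition of } [n] \right\},
\]
so that $\FF_n$ is automatically a polytope, hence convex, closed, and compact; this settles Axiom~1 immediately. For Axioms~3, 4, and~5 the operations involved (partial trace, tensor product, coordinate permutation) are all linear, and a linear map carries the convex hull of a finite set onto the convex hull of the images of that set; likewise the tensor product is bilinear, so it is enough to test it on pairs of extreme points. It therefore suffices to verify each closure property on the finitely many elements of $\mathcal{E}_n$, where everything reduces to transparent combinatorics of set partitions, and then appeal to (bi)linearity. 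Axiom~2 will be checked by direct inspection.

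The elementary facts driving the argument are the following. First, $r_J$ is the uniform distribution supported on the two ``constant'' strings $0^J$ and $1^J$ over the index block $J$; in particular, marginalising out a coordinate $i\in J$ gives $\Tr_i(r_J)=r_{J\setminus\{i\}}$ when $|J|\ge 2$, while if $J=\{i\}$ the factor $r_{\{i\}}$ simply marginalises to the scalar $1$ and disappears. Hence, for an extreme point $r_{J_1}\otimes\cdots\otimes r_{J_k}\in\mathcal{E}_{n+1}$ with $n+1$ lying in the block $J_\ell$, tracing out the last system yields $\bigotimes_{\ell'\ne \ell} r_{J_{\ell'}}\otimes r_{J_\ell\setminus\{n+1\}}$ (with the last factor omitted if $J_\ell=\{n+1\}$), which is precisely the extreme point of $\FF_n$ associated with the induced partition of $[n]$ --- giving Axiom~3. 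Second, for extreme points $\bigotimes_\ell r_{J_\ell}\in\mathcal{E}_n$ and $\bigotimes_{\ell'} r_{K_{\ell'}}\in\mathcal{E}_m$, their tensor product equals $\bigotimes_\ell r_{J_\ell}\otimes\bigotimes_{\ell'} r_{K_{\ell'}+n}$, and the blocks $\{J_\ell\}_\ell\cup\{K_{\ell'}+n\}_{\ell'}$ form a partition of $[n+m]$, so this lies in $\mathcal{E}_{n+m}$ --- giving Axiom~4. Third, for $\pi\in S_n$ one checks on supports that $U_\pi\big(\bigotimes_\ell r_{J_\ell}\big)U_\pi^\dagger=\bigotimes_\ell r_{\pi(J_\ell)}$, and $\{\pi(J_\ell)\}_\ell$ is again a partition of $[n]$ --- giving Axiom~5. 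In each case, passing from extreme points to arbitrary elements of $\FF_n$ is immediate.

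Finally, Axiom~2 is settled by noting that $[1]$ admits a unique partition, namely $\{\{1\}\}$, so $\FF_1=\{r_{\{1\}}\}$ with $r_{\{1\}}=(\tfrac12,\tfrac12)$, which is strictly positive and hence the required full-rank distribution.

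I do not expect any genuine obstacle here: the proof is essentially bookkeeping. The only points requiring a modicum of care are the marginalisation identity $\Tr_i(r_J)=r_{J\setminus\{i\}}$ together with the degenerate singleton case, and the structural observation that all the relevant maps are (bi)linear, so that it suffices to test the axioms on the finitely many extreme points of the polytope $\FF_n$ --- after which each verification amounts to tracking how set partitions of $[n]$ transform under the corresponding operation.
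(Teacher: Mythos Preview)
Your proposal is correct and follows essentially the same approach as the paper's proof: both verify each axiom by reducing to the generating set of tensor-product distributions $r_{J_1}\otimes\cdots\otimes r_{J_k}$ indexed by partitions, using the marginalisation identity $\Tr_i(r_J)=r_{J\setminus\{i\}}$ for partial traces, the concatenation of partitions for tensor products, and the relabelled partition for permutations. You are simply more explicit than the paper in spelling out the reduction to extreme points via (bi)linearity and the singleton edge case, but there is no substantive difference in strategy.
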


\begin{proof}
Convex, closedness, and permutation invariance are clear from the definition. Closure under tensorisation follows from the fact that joining a partition of $[n]$ and one of $[m]$ yields a partition of $[n+m]$. The uniform probability distribution $r_1$ on $\{0,1\}$ is in $\FF_1$ and has full support. 
%Since $r_1$ (the uniform probability distribution on $\{0,1\}$) is in $\FF_1$, it then follows that $r_1^{\otimes n}\in \FF_n$. Note that $r_1^{\otimes n}(x^n) = 2^{-n} > 0$ for all sequences $x^n$. 
It only remains to verify closure under partial trace. This is also easy, and it follows from the observation that $\tr_k r_k = r_{k-1}$ for all $k\geq 2$, where $\tr_k$ denotes the operation of discarding the last symbol.
\end{proof}

We are now ready to present our counter-example. Consider the probability distribution $e_1$ with the property that $e_1(0) = 0$ and $e_1(1) = 1$. Clearly, $e_1\notin \FF_1$ is not a free distribution --- after all, it corresponds to what we would like the device to do, not to what it does in reality. However,
\bb
D\big(e_1^{\otimes n}\,\big\|\, \FF_n\big) \leq D_{\max}\big(e_1^{\otimes n}\,\big\|\, \FF_n\big) \leq  D_{\max}(e_1^{\otimes n}\| r_n) = 1\, ,
\ee
implying that $D^\infty(e_1\| \FF) = 0$.

What about the reverse quantities? Rather trivially, $D(\FF\|e_1) = +\infty$ because $e_1$ is a deterministic probability distribution which is not in $\FF_1$. The same happens (as it should, because the reverse relative entropy is additive) for $n$ copies. What is even more interesting is that one can evaluate the hypothesis testing relative entropy $\min_{q_n\in \FF_n} D_H^\e(q_n\|e_1^{\otimes n})$ explicitly. Indeed, the best test $a: \{0,1\}^n\to [0,1]$ satisfies $a(x^n) = 1$ for $x^n\neq 1^n$; its only un-specified entry, $a(1^n)$, can be determined by requiring that 
\bb
1-\e \leq \sum_{x^n} a(x^n) q_n(x^n) = a(1^n) q_n(1^n) + 1 - q_n(1^n)\, ,
\ee
which immediately gives us $a(1^n) = \left( 1 - \frac{\e}{q_n(1^n)} \right)_+$. This in turn shows that
\bb
D_H^\e\big(q_n\,\big\|\, e_1^{\otimes n}\big) = - \log_2 \sum_{x^n} a(x^n)\, e_1^{\otimes n}(x^n) = - \log_2 a(1^n) = \left\{ \begin{array}{ll} -\log_2 \left( 1 - \frac{\e}{q_n(1^n)} \right) & \text{ if $\e < q_n(1^n)$,} \\[1.5ex] +\infty & \text{ otherwise,} \end{array} \right.
\ee
Since this is a non-increasing function of $q_n(1^n)$, it is minimised when this number is as large as possible, which happens when $q_n=r_n$, in which case it evaluates to $1/2$. Hence,
\bb
\min_{q_n\in \FF_n} D_H^\e\big(q_n\,\big\|\, e_1^{\otimes n}\big) = D_H^\e\big(r_n\,\big\|\, e_1^{\otimes n}\big) = \left\{ \begin{array}{ll} -\log_2 \left( 1 - 2\e \right) & \text{ if $\e < 1/2$,} \\[1.5ex] +\infty & \text{ $\e\geq 1/2$.} \end{array} \right.
\ee
In particular,
\bb
\liminf_{n\to\infty} \frac1n\, \min_{q_n\in \FF_n} D_H^\e\big(q_n\,\big\|\, e_1^{\otimes n}\big) = \left\{ \begin{array}{ll} 0 & \text{ if $\e < 1/2$,} \\[1.5ex] +\infty & \text{ $\e\geq 1/2$,} \end{array} \right.
\ee
entailing that
\bb
\mathrm{Sanov}(e_1 \| \FF) =\lim_{\e\to 0^+} \liminf_{n\to\infty} \frac1n\, \min_{q_n\in \FF_n} D_H^\e\big(q_n\,\big\|\, e_1^{\otimes n}\big) = 0 < \infty = D(\FF\|e_1)\, .
\ee

For completeness, we also consider the Stein setting. It is straightforward to verify that
\bb
D_H^\e\big(e_1^{\otimes n}\,\big\|\, q_n\big) = - \log_2(1-\e) - \log_2 q_n(1^n)\, ,
\ee
so that $\min_{q_n\in \FF_n} D_H^\e\big(e_1^{\otimes n}\,\big\|\, q_n\big) = - \log_2(1-\e) - 1$,
implying that
\bb
\liminf_{n\to\infty} \frac1n\, \min_{q_n\in \FF_n} D_H^\e\big(e_1^{\otimes n} \,\big\|\, q_n\big) = 0\qquad \forall\ \e\in [0,1)\, ,
\ee
and in turn that $\mathrm{Stein}(e_1\|\FF) = 0 = D^\infty(e_1\|\FF)$.

\end{document}